\definecolor{darkgrey}{RGB}{70,70,70}
\definecolor{lightgrey}{RGB}{200,200,200}
\definecolor{lyellow}{RGB}{255,255,100}
\definecolor{llyellow}{RGB}{250,250,180}
\definecolor{lgreen}{RGB}{144,238,144}
\definecolor{lblue}{RGB}{180,180,230}
\definecolor{vlgray}{rgb}{0.77 0.77 0.77}
\definecolor{ablack}{rgb}{0.2 0.2 0.2}
\definecolor{vllgray}{rgb}{0.9 0.9 0.9}
\definecolor{bblue}{rgb}{0.5 0.5 0.99}
\newif\iftr     
\newif\ifall    
\newif\ifconf   
\newif\ifsq     
\newif\ifnonb   
\newif\iftodos  
\newif\ifsqCAP
\newif\ifsqVS
\newif\ifsqEN
\newif\ifsqTIT
\newif\ifsqEQ
\newcommand{\tr}[1]{\iftr #1 \fi}
\newcommand{\all}[1]{\ifall #1 \fi}
\tikzstyle{comment} = [draw, fill=blue!70, text=white, text width=3cm, minimum height=1cm, rounded corners, align=left, font=\scriptsize]
\tikzstyle{background_alg} = [draw, fill=blue!20, opacity=0.4, inner sep=4pt, rounded corners=2pt]
\newtheorem{theorem}{Theorem}[section]
\newtheorem{prop}[theorem]{Proposition}
\newtheorem{lemma}[theorem]{Lemma}
\DeclareMathOperator{\E}{\mathbb{E}}
\crefname{section}{§}{§§}
\Crefname{section}{§}{§§}
\newcommand{\macb}[1]{\textbf{{#1}}}
\newcommand{\vspaceSQ}[1]{\ifsqVS\vspace{#1}\fi}
\newcommand{\enlargeSQ}[1]{\ifsqEN\enlargethispage{\baselineskip}\fi}
\titlespacing*{\section}{0pt}{6pt}{2pt}
\titlespacing*{\subsection}{0pt}{5pt}{1pt}
\titlespacing*{\subsubsection}{0pt}{5pt}{1pt}
\bfseries\color{black!400!black},
\bfseries\color{black!400!black},
\newcommand{\maciej}[1]{\textcolor{blue}{[Maciej: #1]}}
\newcommand{\cesar}[1]{\textcolor{teal}{[Cesare: #1]}}
\newcommand{\jakub}[1]{\textcolor{blue}{[Jakub: #1]}}
\definecolor{hlL}{rgb}{0.8 0.8 0.99}
\newcounter{highlight}
\newcounter{hlLR}
\newcounter{hlLIR}
\newcounter{hlLIIR}
\newcounter{Ahighlight}
\newcommand\rwh[1]{%
\savestack{\tmpbox}{\stretchto{%
  \scaleto{%
        \scalerel*[\widthof{\ensuremath{#1}}]{\kern-.6pt\bigwedge\kern-.6pt}%
                  {\rule[-\textheight/2]{1ex}{\textheight}}
                              }{\textheight}%
}{0.5ex}}%
\stackon[1pt]{#1}{\tmpbox}%
}
\renewcommand{\epsilon}{\ensuremath\varepsilon}
\renewcommand{\phi}{\ensuremath{\varphi}}
\NewDocumentCommand{\LeftComment}{s m}{%
\Statex \IfBooleanF{#1}{\hspace*{\ALG@thistlm}}\(\triangleright\) #2}
\newcommand{\fRB}[1]{\left(#1\right)}
\newcommand{\fSB}[1]{\left[#1\right]}
\newcommand{\fVB}[1]{\left\vert#1\right\vert}
\renewcommand{\marginpar}[1]{}
\renewcommand{\hl}[1]{#1}
\renewcommand{\enlargethispage}[1]{}
\newif\ifkdd
\newif\ifkddr
\newif\iffull
\newif\ifdebug
\newif\ifreg
\newif\iffullapp
\newif\ifacm
\newif\ifblind
\let\oldsubsubsection\subsubsection
\renewcommand{\subsubsection}[1]{\vspace{-0.4em}\oldsubsubsection{#1}}
\begin{document}


\title{ProbGraph: High-Performance and High-Accuracy Graph Mining with Probabilistic Set Representations}

\blindfalse

\ifblind
\else 

\author{\IEEEauthorblockN{Maciej Besta$^{1\dagger}$, Cesare Miglioli$^2$, Paolo Sylos Labini$^3$, Jakub Tětek$^4$, Patrick
Iff$^1$,\\ Raghavendra Kanakagiri$^5$, Saleh Ashkboos$^1$, Kacper Janda$^6$, Michał Podstawski$^{7,8}$,\\ Grzegorz
Kwaśniewski$^1$, Niels Gleinig$^1$, Flavio Vella$^9$, Onur Mutlu$^1$, Torsten Hoefler$^{1\dagger}$\vspace{0.5em}}
\IEEEauthorblockA{$^1$ETH Zurich, Zürich, Switzerland\quad
       $^2$Research Center for Statistics, University of Geneva, Geneva, Switzerland\\
       $^3$Free University of Bozen-Bolzano, Bolzano, Italy\quad
       $^4$BARC, University of Copenhagen, Copenhagen, Denmark\\
       $^5$UIUC, Champaign, USA\quad
       $^6$AGH-UST, Kraków, Poland\quad 
       $^7$Warsaw University of Technology, Warsaw, Poland\\
       $^8$TCL Research Europe, Warsaw, Poland\quad
       $^9$University of Trento, Trento, Italy\\
$^\dagger$Corresponding authors:
maciej.besta@inf.ethz.ch, htor@inf.ethz.ch
} 
}

\fi

\if  0 

\author{Maciej Besta$^1$, Cesare Miglioli$^2$, Paolo Sylos Labini$^3$, Jakub Tětek$^4$, Patrick
Iff$^1$, Raghavendra Kanakagiri$^5$, Saleh Ashkboos$^1$, Kacper Janda$^6$, Michał Podstawski$^7$, Grzegorz
Kwaśniewski$^1$, Niels Gleinig$^1$, Flavio Vella$^3$, Onur Mutlu$^1$, Torsten Hoefler$^1$}
\affiliation{\vspace{0.3em}$^1$ETH Zurich\quad\quad
       $^2$Research Center for Statistics, University of Geneva\quad\quad 
       $^3$Free University of Bozen-Bolzano\quad\quad
       $^4$BARC, University of Copenhagen\quad\quad
       $^5$UIUC\quad\quad
       $^6$AGH-UST\quad\quad $^7$Warsaw University of Technology
}

\fi

\maketitle

\ifconf

\thispagestyle{fancy}
\lhead{}
\rhead{}
\chead{}
\lfoot{\footnotesize{
SC22, November 13-18, 2022, Dallas, Texas, USA
\newline 978-1-6654-5444-5/22/\$31.00 \copyright 2022 IEEE}}
\rfoot{}
\cfoot{}
\renewcommand{\headrulewidth}{0pt}
\renewcommand{\footrulewidth}{0pt}

\else

\thispagestyle{plain}
\pagestyle{plain}

\fi

\begin{abstract}
Important graph mining problems such as Clustering are computationally
demanding. To significantly accelerate these problems, we propose ProbGraph: a
graph representation that enables simple and fast approximate parallel graph
mining with strong theoretical guarantees on work, depth, \emph{and} \hl{result} accuracy.
The key idea is to represent sets of vertices using probabilistic set
representations such as Bloom filters. These representations are much faster to
process than the original vertex sets thanks to vectorizability and small size.
We use these representations as building blocks in important parallel graph
mining algorithms such as Clique Counting or Clustering. When enhanced with
ProbGraph, these algorithms significantly outperform tuned parallel exact
baselines (up to nearly 50$\times$ on 32 cores) while ensuring accuracy of more
than 90\% for many input graph datasets. Our novel bounds and algorithms based on
probabilistic set representations with desirable statistical properties are of
separate interest for the data analytics community.
\end{abstract}

\vspace{0.5em}
\begin{IEEEkeywords}
Approximate Graph Mining, Approximate Graph Pattern
Matching, Approximate Triangle Counting, Approximate Community Detection,
Approximate Graph Clustering, Bloom Filters, MinHash, K Minimum Values,
High-Performance Graph Computations, Graph Sketching
\end{IEEEkeywords}

\ifconf
\vspace{0.5em}
{\footnotesize\noindent\textbf{Proofs of theorems \& more results:} \url{http://arxiv.org/abs/2208.11469}}
\vspace{0.5em}

\else


\fi

\marginparsep=1em
\marginpar{\vspace{-15em}\colorbox{yellow}{\textbf{R-4}}}

\section{Introduction}
\label{sec:intro}

\enlargeSQ

\enlargeSQ
\enlargeSQ

Graph mining is an important part of the graph processing landscape, with
problems related to discovering patterns in graphs, for example Clustering,
Clique Counting, or Link Prediction~\cite{besta2021sisa, arora2020survey,
besta2021motif, besta2021graphminesuite}.
Accelerating graph mining is notoriously difficult because these problems are
hard to parallelize due to properties such as high irregularity or little
locality~\cite{DBLP:journals/ppl/LumsdaineGHB07, sakr2020future,
besta2021enabling, tate2014programming, besta2017push, besta2015accelerating}.
Simultaneously, graph mining underlies many important computational problems in
social network analysis, machine learning, computational sciences, and
others~\cite{cook2006mining, jiang2013survey, horvath2004cyclic,
chakrabarti2006graph}. 
\ifall
Yet, accelerating graph algorithms
is notoriously difficult because of several properties of graph computations.
First, the size of large graph data sets is growing continually, reaching 12
trillion edges in 2018 (the Sogou webgraph~\cite{lin2018shentu}), requiring
unprecedented amounts of compute power, storage, and energy. Much larger graphs
are already on the horizon~\cite{besta2019slim}. Graph computations are also
irregular, have little locality, and are data-driven.
\fi

In approximate computing, a certain (ideally small) amount of accuracy is
sacrificed, in exchange for speedups or reduced energy
consumption~\cite{mittal2016survey, xu2015approximate, han2013approximate}.
\iftr
It relaxes the need for full precision at the level of arithmetic blocks,
processing units, pertinent error and quality measures, algorithms, programming
models, and many others.
\fi
Traditionally, graph algorithms with provable approximation ratios were
developed to alleviate the hardness of various NP-Complete graph problems, such
as minimum graph coloring~\cite{halldorsson1993still, jones1993parallel}.
Still, these works are usually complex, specific to a particular graph problem,
and often need additional heuristics to be easily used in practice.

\marginparsep=2em
\marginpar{\vspace{1em}\colorbox{yellow}{\textbf{R-5}}\\ \colorbox{yellow}{(minor}\\ \colorbox{yellow}{comm-}\\ \colorbox{yellow}{-ent 20)}}

\sethlcolor{yellow} \hl{Moreover, there are many heuristics for approximating
graph properties such as betweenness
centrality~\mbox{\cite{riondato2016fast, borassi2016kadabra, riondato2018abra,
geisberger2008better, bader2007approximating, solomonik2017scaling}}, minimum
spanning tree weight~\mbox{\cite{chazelle2005approximating}}, maximum
matching~\mbox{\cite{besta2020substream}},
reachability~\mbox{\cite{dumbrava2018approximate}}, graph
diameter~\mbox{\cite{chechik2014better, roditty2013fast}}, and
others~\mbox{\cite{slota2014complex, roditty2013fast, boldi2011hyperanf,
echbarthi2017lasas, besta2020high}}.
\emph{Unfortunately, these schemes are all specific to a particular graph problem
or algorithm.}}\sethlcolor{yellow}

To alleviate the above-mentioned issues with accelerating
graph mining, we propose {ProbGraph} (PG), a {probabilistic
graph representation} for {{simple}, {versatile}, {fast}, and {{tunable}}
approximate graph mining}. 
We observe that the \emph{input graph and many auxiliary data structures are
effectively a collection of sets of vertices and
edges}~\cite{besta2021graphminesuite, besta2021sisa}. 
Here, {our key idea} is to encode such sets with carefully selected
\textbf{probabilistic set representations} (sometimes called set sketches) such
as Bloom filters~\cite{bloom1970space}.  This results in a ``probabilistic''
graph representation that -- as we will show -- can accelerate different graph
algorithms at a (tunable) accuracy-storage-performance tradeoff.
Importantly, {sets} and {set operations} are {common}
in graph problems, making PG applicable to many algorithms.


We first show that many time-consuming operations in different graph algorithms
can be expressed with \emph{set intersection cardinality} $|X \cap Y|$. For
example, deriving common parts of vertex neighborhoods takes more than 90\% of the time in
common Triangle Counting algorithms~\cite{han2018speeding,
besta2021graphminesuite, besta2021sisa}, and it can be expressed as a sum of
$|X \cap Y|$ (over different $X$ and $Y$). \emph{{We identify more such graph
algorithms}}. 

\enlargeSQ
\enlargeSQ

Second, we carefully select three probabilistic set representations: Bloom
filters (BF)~\cite{bloom1970space} and two types of MinHash
(MH)~\cite{broder1997resemblance}. We use these set representations to design
\emph{estimators}~$\widehat{|X \cap Y|}$ that approximate $|X \cap Y|$. 
Our central motivation is that these estimators are much faster to obtain than
the exact $|X \cap Y|$ thanks to performance-friendly properties.
We conduct a \emph{work-depth} analysis, \emph{showing formally that
ProbGraph-enhanced graph algorithms have abundant parallelism}.
For example, $X$ and $Y$, when represented with Bloom filters, are {bit
vectors}. Thus, $|X \cap Y|$ amounts to computing a bitwise AND, followed by a
reduction. Such an operation \emph{significantly benefits from vectorization}.
Moreover, thanks to its fixed-size set representations, ProbGraph \emph{exhibits
excellent load balancing properties}: all set intersections are conducted
over the same size bit vectors, annihilating issues related to intersecting 
neighborhoods of different sizes. This is particularly attractive as
modern graph datasets have very often high skews in degree distributions~\cite{besta2021enabling}.

Importantly, \emph{we ensure that our estimators have strong theoretical
guarantees on their accuracy (i.e., quality)}.
%
%
For this, we develop or adapt bounds on the quality of
estimators~$\widehat{|X \cap Y|}$. Here, we prove how far they deviate from
the true size of the set intersection, and we provide upper bounds for their mean squared error (MSE).
We also produce quality bounds that are better than past
works.
To derive these quality bounds, we use concentration inequalities and other
statistical concepts such as sub-Gaussian random
variables~\cite{boucheron2013concentration}. 
For example, the probability that our estimator based on MH deviates from the true value by more than a given distance~$t$, decreases \emph{exponentially} with $t$. 
%
%
%



\enlargeSQ

Moreover, we show that, for some representations, we offer \emph{Maximum
Likelihood Estimators} (MLE)~\cite{casella2002statistical}. Thus, these
estimators are asymptotically\footnote{with increasing sketch size, for a
fixed input.} unbiased (the expected estimator value converges to the true
parameter value) and efficient (asymptotically, \emph{no
other estimator} has lower variance).
%
%
%
Due to the prevalence of BF and MH in high-performance data mining, our novel
results on the theory of probabilistic set representations are of interest
beyond graph analytics.

\if 0
As an example of applying the above general BF and MH results to graph mining,
we derive novel estimators~$\widehat{TC}$ on the triangle count~$TC$ in an
arbitrary graph. Our estimator of TC based on $k$-Hash is MLE, and its quality,
i.e., the probability that $\widehat{TC}$ deviates from the expectation of~$TC$
by more than $t$, decreases exponentially with $t$. In contrast, none of past
results are known to be an MLE, and only one comes with quality bounds that
decrease only polynomially.
\fi

\marginparsep=1em
\marginpar{\vspace{1em}\colorbox{yellow}{\textbf{R-5}}\\ \colorbox{yellow}{(minor}\\ \colorbox{yellow}{comm-}\\ \colorbox{yellow}{-ent 1)}}

Our fast parallel implementation of PG enables important graph mining problems
(TC, clustering~\cite{estivill2002so, aggarwal2010survey, schaeffer2007graph},
4--clique counting~\cite{cliques}, and vertex
similarity~\cite{jarvis1973clustering, besta2020communication}) \hl{to} achieve very
large performance advantages over tuned baselines, even up to 50$\times$ lower
runtimes when using 32 cores. This is caused by the fact that both work and
depth of ProbGraph-enhanced graph algorithms are lower than those of the exact
tuned baselines.  Simultaneously, using small fixed-size probabilistic
representations makes it much easier to load balance expensive set operations.
Simultaneously, ProbGraph achieves high accuracy of more than 90\% for many
inputs.

\marginparsep=2em
\marginpar{\vspace{4em}\colorbox{yellow}{\textbf{R-5}}\\ \colorbox{yellow}{(minor}\\ \colorbox{yellow}{comm-}\\ \colorbox{yellow}{-ent 2)}}

We also provide strong theoretical guarantees on work and depth of all the
considered graph mining algorithms. Finally, we use our $|X \cap Y|$ estimators
to derive novel estimators~$\widehat{TC}$ on the triangle count~$TC$ in an
arbitrary graph, achieving strong statistical properties such as MLE and
exponential concentration quality bounds. \emph{Our TC estimator based on
MH has better theoretical properties (e.g., is MLE) than past
theoretical results~\cite{tsourakakis2009doulion, pagh2012colorful,
bandyopadhyay2016topological, iyer2018asap, iyer2018bridging, besta2019slim,
eden2017approximately, assadi2018simple, tetek2021approximate}.}

Overall, PG enables trading a small amount of accuracy and storage for more
performance. These tradeoffs are \emph{tunable} by the user, who can select
which aspect is most important. 
\ifall
We also use PG with other graph mining problems, including
clustering~\cite{estivill2002so, aggarwal2010survey, schaeffer2007graph}, link
prediction~\cite{taskar2004link, al2011survey, lu2011link}, counting 
4--cliques~\cite{cliques}, and assessing vertex
similarity~\cite{jarvis1973clustering, besta2019communication}, and we show
similar gains for these algorithms.
\fi

\begin{figure*}[t]
\centering
\includegraphics[width=1.0\textwidth]{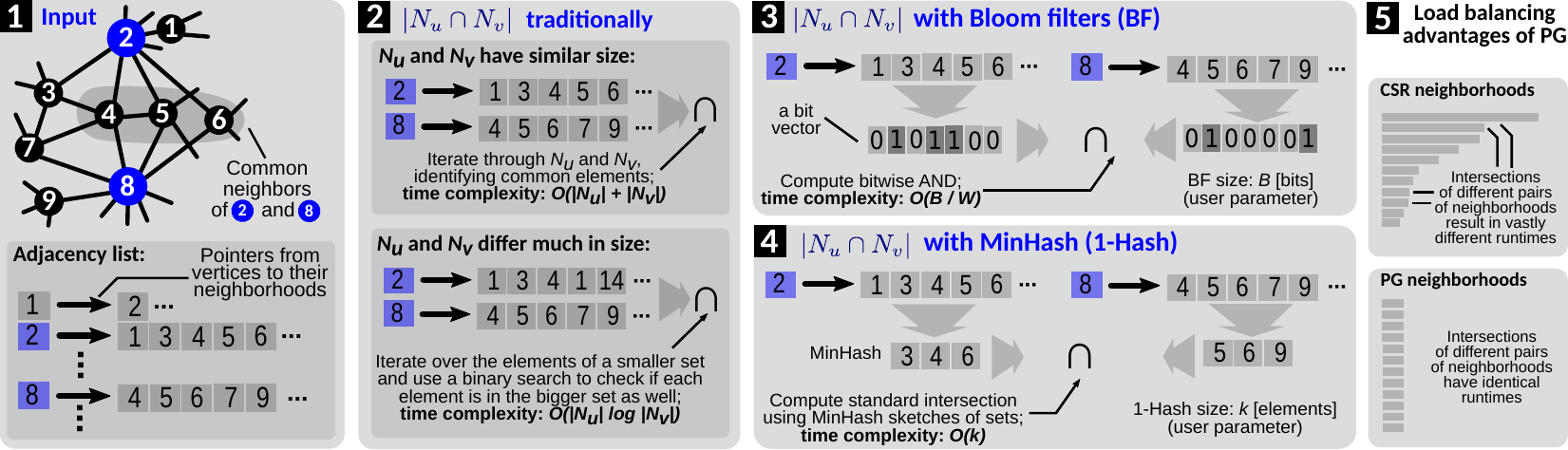}
\vspaceSQ{-1.5em}
\caption{\textmd{Overview of selected PG set representations,
how they are used to accelerate intersections
of vertex neighborhoods, and in alleviating load imbalance. 
\hl{In panel }``\protect\includegraphics[scale=0.12,trim=0 16 0 0]{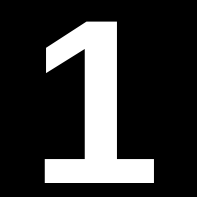}''\hl{, we show a part of an example input
graph.
In panel }``\protect\includegraphics[scale=0.12,trim=0 16 0 0]{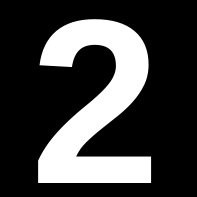}''\hl{, we
illustrate a traditional exact way to compute the count of shared neighbors
$|N_u \cap N_v|$ of any vertices $u$ and $v$. There are
two variants of this operation: ``merge'' (the upper sub-panel), where one
simply merges two sorted sets (it is more beneficial when sets have similar
sizes), and ``galloping'' (the lower sub-panel), where -- for each element from
a smaller set, one uses binary search to check if this element is in the larger
set (it works better when one set is much larger than the other one).}
\hl{In panel }``\protect\includegraphics[scale=0.12,trim=0 16 0
0]{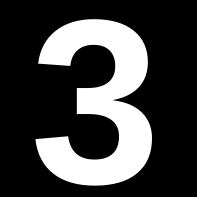}''\hl{, we show how to compute $|N_u \cap N_v|$ with BF. Here, $N_u$
and $N_v$ are first converted into bit vectors (cf.~wide vertical arrows).
Then, the resulting bit vectors are intersected with a very fast bitwise AND
operation (cf.~wide horizontal arrows).}
\hl{In panel }``\protect\includegraphics[scale=0.12,trim=0 16 0
0]{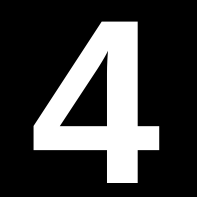}''\hl{, we show how to compute $|N_u \cap N_v|$ with MH. Here, $N_u$
and $N_v$ are first appropriately hashed into much smaller subsets of $N_u$ and
$N_v$ (cf.~wide vertical arrows). The resulting sets can be intersected fast 
because they are much smaller than the original $N_u$ and $N_v$ (indicated with wide horizontal
arrows).}
\hl{In panel }``\protect\includegraphics[scale=0.12,trim=0 16 0
0]{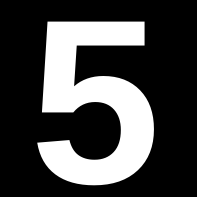}''\hl{, we show load balancing benefits from using PB
(i.e., it is easy to load balance intersections of same-sized PG neighborhoods).}
}}
\label{fig:pg-overview}
\vspaceSQ{-1em}
\end{figure*}

\ifall
\maciej{FIX FIG}
\begin{figure*}[t]
\centering
\vspaceSQ{-1em}
\includegraphics[width=1.0\textwidth]{representations_back.pdf}
\caption{Illustration of the probabilistic set representations and set sketches
used in PG. \maciej{Enhance, add KMV and 1-hash}}
\label{fig:representations_back}
\end{figure*}
\fi

\all{Moreover, to the best of our knowledge, we provide
\emph{the most extensive analysis of different probabilistic set
representations}. We crystallize the outcomes of this analysis in a set of
\emph{\textbf{guidelines}} for algorithm designers, in which we precisely
indicate best use cases for proposed estimators and corresponding set
representations \emph{in the context of graph analytics}.}

\ifall
\maciej{fix}
We broadly compare PG to several recent proposals for approximating
graph analytics, all of which are based on \emph{sampling}. We both prove
formally and provide extensive empirical data to illustrate that existing
sampling approaches, such as a recent ASAP system~\cite{iyer2018asap}, are
\emph{less effective} that PG. The key notion is that ProbGraph uses
estimators that \emph{sketch the \ul{entire} graph structure}: each graph
element (e.g., an edge) adds some small contribution to the final
approximation. Contrarily, sampling selects \emph{a small part of a graph},
and contributions from a large part of the graph structure \emph{are completely
omitted in the resulting approximation}.  \maciej{Make sure after we fully
understand all corner cases}
\fi

\if 0
ProbGraph is {generic} as it can be easily applied to many graph
algorithms. It is also {simple} as it merely requires plugging in a
selected set representation in a particular algorithm, instead of applying
massive algorithm modifications. Finally, ProbGraph is {effective}
because it results in {highly-accurate} and {fast} approximate graph
algorithms.
\fi

\ifall

\emph{Our work comes with a large number of results, theoretical and
practical. As they are all directly related to the ProbGraph design, we include
the most important proofs and reproducibility details in the Appendix; the rest
is in the extended report. Our whole implementation (anonymized) is publicly available}.

\fi


\section{Fundamental Concepts}
\label{sec:fund}

We first present the used concepts ad symbols (see Table~\ref{tab:symbols}).

\enlargeSQ

\ifconf
\begin{table}[h]
\vspaceSQ{-1em}
\centering
\footnotesize
\scriptsize
\setlength{\tabcolsep}{1pt}
\ifsq\renewcommand{\arraystretch}{0.7}\fi
\else
\begin{table*}[b]
\centering
\small
\fi
\begin{tabular}{@{}l|ll@{}}
\toprule
\multirow{9}{*}{\begin{turn}{90}\shortstack{Graph, code}\end{turn}}
& $G = (V,E)$ & A graph $G$; $V, E$ are sets of vertices and edges, respectively.\\
%
%
& $n, m$ & Numbers of vertices and edges in $G$; $|V| = n, |E| = m$.\\
& $d_v, N_v$ & The degree and neighbors of $v \in V$.\\
& $d, \overline{d}$ & The maximum and the average degree in $G$ ($\overline{d} = m/n$).\\
\iftr
& $TC$ & Count of triangles in a given graph. \\
\fi
%
& $W$ & Size of a memory word [bits]. \\
& $s$ & Storage budget, i.e., space dedicated to PG structures. \\
%
%
\midrule
\multirow{5}{*}{\begin{turn}{90}\shortstack{BF}\end{turn}}
& $\mathcal{B}_X, \mathbf{B}_X, \mathbf{B}_X[i]$ & A Bloom filter; the associated bit vector; the $i$-th bit in $\mathbf{B}_X$.\\
& $B_X, B_{X,1}, B_{X,0}$ & The size of $\mathbf{B}_X$ (\#bits); number of ones and zeros in $\mathbf{B}_X$. \\
& $b$ & The number of hash functions used with a given Bloom filter. \\
& $h_i$ & An $i$-th associated hash function, $i \in \{1, ..., b\}$. \\
\iftr 
& $p_f$ & The probability of a false positive of a given Bloom filter. \\
\fi
\midrule
\if
\multirow{2}{*}{\begin{turn}{90}\shortstack{MH, KMV}\end{turn}}
\else
\multirow{4}{*}{\begin{turn}{90}\shortstack{MinHash}\end{turn}}
\fi
& $\mathcal{M}_X, M_X$ & $k$-Hash variant; the approximating set based on input $X$. \\
& $k$ & The number of elements stored in a given MinHash. \\
& $h_i$ & An $i$-th associated hash function, $i \in \{1, ..., k\}$. \\
& $\mathcal{M}^1_X, M^1_X$ & 1-hash variant; the approximating set based on input $X$. \\
\if0
& $\mathcal{K}_X, K_X$ & A KMV representing a set $X$; the approximating set with hashes. \\
\fi
\bottomrule
\end{tabular}
\vspaceSQ{-0.5em}
\caption{\textmd{Important used symbols. ``$X$'' denotes the
input set approximated by a respective representation (omitted
if it is clear from context).}}
\label{tab:symbols}
\vspaceSQ{-2em}
\ifconf
\end{table}
\else
\end{table*}
\fi

\subsection{Graph Model and Representation}
\label{sec:models_int}

A graph $G$ is modeled as a tuple $(V,E)$ with a set of vertices ($V, |V| = n$)
and edges ($E \subseteq V \times V, |E| = m$). 
We model vertices with their integer IDs ($V = \{1, ..., n\}$).
$N_v$ and $d_v$ denote the neighbors and the degree of a given vertex~$v$
($N_v \subset V$); $d$ is $G$'s maximum degree. 
We store the input (i.e., not sketched) graph $G$ using the standard
Compressed Sparse Row (CSR) format, in which all neighborhoods $N_v$ form a
contiguous array ($2m$ words if $G$ is undirected). Next, there is an array
with $n$ pointers to each representation of $N_v$. Each $N_v$ is stored as a
contiguous sorted array of vertex IDs.

\subsection{Work-Depth Analysis of Parallel Algorithms}

We use {work-depth (WD) analysis} for
bounding run-times of parallel algorithms, to analyze PG set
intersections (\cref{sec:intersect_summary-par}, Table~\ref{tab:queries-int}),
construction costs (\cref{sec:constr-costs}, Table~\ref{tab:constr}), and graph
algorithms (\cref{sec:par-algs-wd}, Table~\ref{tab:wd-algs}). The \emph{work}
of an algorithm is the total number of operations and the \emph{depth} is
the longest sequential chain of execution in the algorithm (assuming
infinite number of parallel threads executing the algorithm)~\cite{Bilardi2011,
blelloch2010parallel}.

%

\iftr

\subsection{Set Algebra}

When using arbitrary sets, we use symbols $X = \{x_1, ..., x_l\}$ and $Y =
\{y_1, ..., y_l\}$.  We use intersection ($X \cap Y$), union ($X \cup Y$),
cardinality ($|X|$), and membership ($\in X$).

\fi

\subsection{Probabilistic Set Representations}
\label{sec:prob-back}

We consider Bloom filters (BF) and two variants of MinHash (MH).
\if 0
We also analyze an additional representation called K Minimum Values
(KMV)~\cite{bar2002counting}, with details in the appendix.
\fi
We pick different representations to better understand which ones
are best suited for accelerating graph mining problems with 
high accuracy.
\ifconf
All the proofs of theorems in the following sections are 
in the report (the link is on page~1).
\else
All the proofs of theorems in the following sections are 
in the appendix.
\fi
Figure~\ref{fig:pg-overview} illustrates \hl{the
representations considered}.

\marginpar{\vspace{-4em}\colorbox{yellow}{\textbf{R-5}}\\ \colorbox{yellow}{(minor}\\ \colorbox{yellow}{comm-}\\ \colorbox{yellow}{-ent 3)}}

\textbf{Bloom filters}
The Bloom filter (BF)~\cite{bloom1970space} is a space-efficient set
representation that answers \emph{membership queries} fast but with some 
probability of \emph{false positives}. 
\all{Intuitively, a BF represents a set~$X$ using (1) a bit vector of a
pre-selected length (usually much lower that $|X|$) and (2) a certain number of
hash functions. These hash functions map elements in $X$ to corresponding
bits.}
Formally, a Bloom filter~$\mathcal{B}_X$ representing a set~$X$ consists of an
$l$-element bit vector $\mathbf{B}_X$ (initialized to zeros) and $b$
hash functions $h_1, \ldots, h_b$  (usually assumed to be independent) that map
elements of $X$ to integers in $[l]$ ($[l] \equiv \{ 1, ..., l \}$).  The size
of $\mathbf{B}_X$ is also denoted with $B_X$ while the number of ones in
$\mathbf{B}_X$ with $B_{X,1}$.
Now, when {constructing} $\mathcal{B}_X$, for each element $x \in X$,
one computes the corresponding hashes $h_1(x), \ldots, h_b(x)$. Then, the bits
$\mathbf{B}_X[h_1(x)], ..., \mathbf{B}_X[h_b(x)]$ are set to one. 
Second, {verifying if} $x \in X$ is similar. First, all hash functions
are evaluated for $x$. If all bits at the corresponding positions are set,
i.e., $\forall i\in \{1, \ldots, b\} : \textbf{B}_X[h_i(x)] = 1$, then $x$ is
considered to be in $X$. It is possible that some of these bits are 
set to one while adding other elements due to hash collisions, and
$x$ might be falsely reported as being an element of $X$. 
\iftr
Minimizing the number
of such \emph{false positives} was addressed in many research
works~\cite{bose2004on}.
\fi

\ifall
\maciej{Appendix?} Appropriate values for the BF length~$l$ and the number
of hash functions~$k$ can be derived as follows.

Bose et al.~\cite{bose2004on} present the classical derivation of $p_{k,n,l}$,
the probability of a false positive upon membership check in a BF with $k$ hash
functions, $|BF|=l$ and $n$ elements inserted.  Specifically, the probability
that a particular bit of $BF$ is not set after inserting one element into the
BF is $\left(1-\frac{1}{l}\right)^{k}$, and $\left(1-\frac{1}{l}\right)^{kn}$
after inserting $n$ elements.  The probability that all bits
$BF[h_1(y)],\ldots, BF[h_k(y)]$ of a looked up element $y$ are set is then

\begin{gather} \label{eq:fp} p_{k,n,l} =
\left[1-\left(1-\frac{1}{l}\right)^{nk}\right]^k \approx
\left(1-\frac{1}{e^{\frac{\hat{n}k}{l}}}\right)^k \end{gather}


One can use Equation~\ref{eq:fp} to derive values for $l$ and $k$ given an
expected number of elements $\hat{n}$.  After the approximation we can
calculate the derivative of $p_{k,n,l}$ in direction of $k$ and set it to zero
to get $k^* = \frac{l}{\hat{n}} ln 2$.  When we put this into the equation for
the false positive rate $p_{k,n,l} = 2^{-\frac{\hat{n}}{l}\ln 2}$, given a
desired $p_{k,n,l}$ we can now calculate parameters $l$ and $k$.  It is
interesting to note that $l\in O(\hat{n})$ i.e. that the size of the BF grows
linearly in the size of the number of expected elements.  Also, both insertion
and membership check can be done with $O(k)$ time complexity.  The BF can be
constructed in $O(|S|\cdot k)$ time steps.
\fi

\all{\macb{Why Bloom Filters?}
We use BF because it is \emph{space-efficient} and it enables \emph{fast},
\emph{vectorizable}, and \emph{parallelizable} intersection of the
representations of the input sets.}


\textbf{MinHash ($k$-Hash variant)}
MinHash (MH)~\cite{broder2000} ``sketches'' a set~$X$ by hashing its elements
to integers and keeping $k$ elements with smallest hashes. An MH
representation~$\mathcal{M}_X$ of $X$ consists of a set~$M_X$ with elements
from~$X$ ($\forall_{x \in M_X}\ x \in X$) and $k$ hash functions~$h_i$, $i \in
\{1, ..., k\}$. To construct~$\mathcal{M}_X$, one computes all hashes $h_i(x)$
for each $x \in X$. Then, for \emph{each} hash function~$h_i$ separately, one
  selects an element $x_{i,min} \in X$ that has the smallest hash~$h_i(x)$.
  These elements form the final set $M_X = \{x_{1,min}, ..., x_{k,min}\}$. Note
  that $M_X$ may be a multi set: if $x_{i,min} = x_{j,min}$ for $i \neq j$,
  then $M_X$ contains $x_{i,min}$ twice.
MinHash was designed to approximate the Jaccard similarity index $J(X,Y) =
\frac{|X \cap Y|}{|X \cup Y|}$ that assesses the similarity of two sets $X$ and
$Y$.  We have $J(X,Y) \in [0;1]$; $J(X,Y) = 0$ indicates no similarity while
$J(X,Y) = 1$ means $X = Y$.

\all{If $X$ and $Y$ are disjoint then there are no elements in their
intersection and the Jaccard coefficient is zero.  If they are the same, their
intersection is the same as their union and their Jaccard coefficient is 1.

which in turn can be used to calculate the size of their intersection.  The
MinHash of a set $A$ is calculated in two steps. First, all elements in the set
are transformed by a hash function $h$, $A' = \{h(a) | a\in A\}$.  Then, the
MinHash $h_k(A)$ of $A$ is the set of elements whose hash value belongs to the
$k$ smallest values in $A'$ (see Figure \ref{fig:mh_expl} for an illustration).

After the calculations the MinHash structures of two sets can be used to
estimate their Jaccard coefficient. 

To estimate the Jaccard coefficient of two sets $A$ and $B$ using their MinHash
representation we intersect their MinHash structures $h_k(A)$ and $h_k(B)$.
The estimate of the Jaccard coefficient $J(A, B)$ is then calculated as $J(A,B)
= \frac{|h_k(A) \cap h_k(B)|}{k}$.  For a derivation see appendix
\ref{app:deriv_minhash}.  Combining this with the inclusion exclusion identity
$|A\cup B| = |A| + |B| - |A\cap B|$ we can calculate the intersection size
$|A\cap B| = \frac{J}{1+J}\cdot (|A|+|B|)$.}

\textbf{MinHash (1-Hash variant)}
The $k$-hash variant may be computationally expensive as it computes
$k$ different hash functions for all elements of~$X$. To alleviate this, one
can use a variant called 1-hash ($\mathcal{M}^1_X$)
with only one hash function $h$. After computing hashes $h(x)$ for all $x
\in X$, $M^1_X$ will contain $k$~elements from~$X$ that resulted in
$k$~smallest hash values. Unlike $k$-hash, $M_X^1$ by definition never
contains duplicates.

\if 0

\subsubsection{K Minimum Values (KMV)}
\label{sec:kmv_int}
\ 
is another sketch that was developed to accelerate counting distinct elements
in a data stream~\cite{bar2002counting}. 
In order to construct a KMV representation~$\mathcal{K}_X$ of a set~$X$, one
evaluates the associated hash function~$h: X \to (0,1]$ for all elements
of~$X$. Then, one selects $k$ \emph{smallest} hashes $K_X = \{h(x_1), ...,
h(x_k)\}$ that -- together with $h$ -- constitute the final KMV representation
$\mathcal{K}_X$ of the set~$X$.  Note that $\mathcal{K}_X$ differs from a
$\mathcal{M}_X$ because, as opposed to $\mathcal{M}_X$, it contains
\emph{hashes}. 

\fi

\all{\maciej{all good here?} the assumption is that $h(x) \in X$ for any $x \in
X$, which is not necessarily true when using a MinHash. \jakub{I don't get the
last part -- I mean, $h(x)$ is from $[0,1]$. And what is the connection to the
first part of the sentence anyway?}\cesar{I did not write this part, Maciej
could you please answer? Maybe Jakub you can rewrite the sentence or other kmv
parts as you think they are more appropriate.}}

\enlargeSQ

\subsection{Estimators}
\label{sec:estimators}
\label{sec:single-sets-BF}

A central concept in PG is an {estimator}: a rule for calculating an
estimate of a given quantity based on observed data. We develop estimators of
set sizes~$\widehat{|X|}$ and set intersection cardinalities~$\widehat{|X \cap
Y|}$, and we use them to approximate $|N_v \cap N_u|$ for any two vertices
$u,v$ in considered graph algorithms.

\if 0
\subsection{Estimator by Swamidass et al.~\cite{swamidass2007mathematical}}
\fi

We describe in more detail a specific BF estimator by Swamidass et
al.~\cite{swamidass2007mathematical}. We later generalize it and also prove
novel bounds on its quality. 
Given a set $X$ represented by a Bloom filter $\mathcal{B}_X$, one can estimate
$|X|$ as

\vspaceSQ{-0.5em}
\ifsqEQ\small\fi

\begin{equation}\label{eq:est_ss_bf}
\widehat{|X|}_S = - \frac{B_X}{b} \log \left ( 1 - \frac{B_{X,1}}{B_X} \right ).
\end{equation}{}

\normalsize
\vspaceSQ{-0.6em}

To derive this estimator, consider inserting a single element into a given BF
$\mathcal{B}_X$. The probability that some single bit equals~0, in
$\mathcal{B}_X$ with one hash function (that maps elements \emph{uniformly}
over the bit array), is $1-{1}/{B_X}$. When applying $b$ hash functions (which
follow the usual assumption of being independent), this probability is
$\fRB{1-{1}/{B_X}}^b$. After inserting $|X|$ elements into $\mathcal{B}_X$, the
probability that this one bit is~0 is $\fRB{1-{1}/{B_X}}^{b |X|} \approx
\text{exp}\fRB{-b |X| / B_X}$ (from the established identity for $e^{-1}
\approx (1 - 1/x)^x$). Thus, the probability that this bit is set to~1, is $1 -
\fRB{1-{1}/{B_X}}^{b |X|} \approx 1 - \text{exp}\fRB{-b |X| / B_X}$.
Then, observe that the number of bits set to~1 in~$\mathcal{B}_X$, can be
estimated as $B_{X,1} \approx B_X \cdot \fRB{1 - \text{exp}\fRB{-b |X| / B_X}}$
given the binomial density approximation used in Swamidass et
al.~\cite{swamidass2007mathematical}. When resolving this equation for~$|X|$,
we obtain Eq.~(\ref{eq:est_ss_bf}).

\if 0

\subsection{Novel Concentration Bounds}

\begin{prop}\label{bound_ss_BF}
  Let $\widehat{|X|}_S$ be the estimator defined in
  Eq.~(\ref{eq:est_ss_bf}). For $B_X,b \in \mathbb{N}$ such that $b = o(\sqrt{B_X})$, and a set $X$ such that $b |X| \leq 0.499 B_X \log B_X$ the following holds:

\vspaceSQ{-0.5em}
\[
E\fSB{\fRB{\widehat{|X|}_S - |X|}^2} \leq (1+o(1))\left(e^{|X| b / (B_X-1)} \frac{B_X}{b^2} - \frac{B_X}{b^2} - \frac{|X|}{b}\right)
\]
\normalsize
\end{prop}

\noindent
%
Overall, Proposition~\ref{bound_ss_BF} shows that we can bound the mean squared
error (MSE) of $\widehat{|X|}_S$. By Chebyshev's inequality\footnote{We
apply the inequality on the MSE to derive a bound for $P \left( \left|
\widehat{|X|}_S - |X| \right| \geq t \right)$ rather than for $P \left( \left|
\widehat{|X|}_S - E(\widehat{|X|}_S) \right| \geq t \right)$ (as is usually done).},
we obtain the following concentration result: 

\vspaceSQ{-0.5em}
\begin{equation}\label{cbound_ss_BF}
  P \left( \left| \widehat{|X|}_S - |X| \right| \geq t \right) \leq (1+o(1))\frac{\left(e^{|X| b / (B_X-1)} \frac{B_X}{b^2} - \frac{B_X}{b^2} - \frac{|X|}{b}\right)}{t^{2}}
\end{equation}
\normalsize

\fi

\enlargeSQ
\enlargeSQ

\marginpar{\vspace{-25em}\colorbox{yellow}{\textbf{ALL}}}

\subsection{Properties of Estimators}
\label{sec:back-properties}

\marginpar{\vspace{1em}\colorbox{yellow}{\textbf{ALL}}}

Figure~\ref{fig:estimators} \hl{provides
an intuitive description of the desirable statistical properties of PG}.
These properties enable highly accuracy empirical results 
(Section~\ref{sec:eval}) and attractive theoretical 
results for triangle count (Section~\ref{sec:theory}).

\ifconf
\begin{figure}[t]
\else
\begin{figure*}[t]
\fi
\vspaceSQ{-1em}
\centering
\includegraphics[width=1.0\columnwidth]{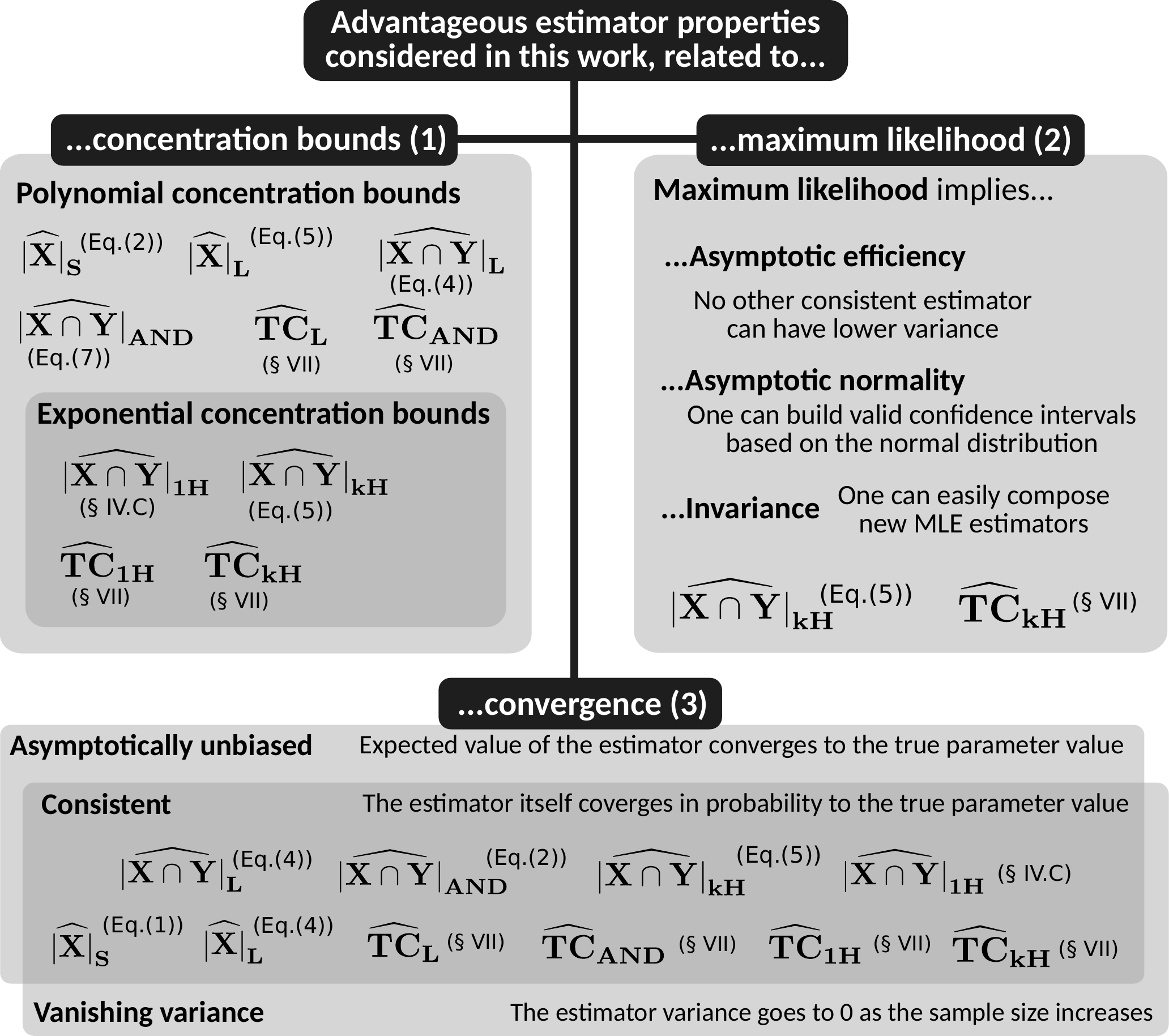}
\vspaceSQ{-1.5em}
\caption{\textmd{Considered estimator properties and the associated
PG estimators.
\hl{The advantageous properties of ProbGraph estimators belong to three classes
of properties: \textbf{(1) having concentration bounds}, \textbf{(2) being
maximum likelihood}, and \textbf{(3) convergence}.
\mbox{\underline{\textbf{Intuitively}}}, \mbox{``(1)''} means that we can bound the probability that a PG estimator deviates from the true parameter value for every sketch size.
\mbox{``(2)''} means that a PG estimator identifies the parameter value which is more likely to have produced the data we observed. \mbox{``(3)''} means that a PG estimator converges to the true parameter
value as the sketch size increases. In the figure, we list different desirable properties implied by
\mbox{(1)}, \mbox{(2)}, and \mbox{(3)}. Formal definitions of the considered
properties can be found in Section~\mbox{\ref{sec:fund}}.}
}}
\label{fig:estimators}
\vspaceSQ{-1em}
\ifconf
\end{figure}
\else
\end{figure*}
\fi


All PG estimators are \textbf{asymptotically unbiased}. In such an
estimator~$\widehat{\theta}$, the difference between $\widehat{\theta}$'s
expected value and the true value of the parameter being estimated~$\theta$
converges to 0 for a fixed input and the size of the sketch that we use going to infinity
(i.e., the bias of $\widehat{\theta}$ goes to 0, or, on average,
$\widehat{\theta}$ hits~$\theta$ when the sample size approaches the limit). Unbiased
estimators are usually more desirable than biased ones: intuitively,
they ensure zero average error (when estimating~$\theta$) after a
given amount of trials.
Next, each PG estimator of $\widehat{\theta}$ is also \textbf{consistent},
i.e, the sampling distribution
of~$\widehat{\theta}$ becomes increasingly more concentrated at~$\theta$ with
the increasing number of samples. Hence, if there are enough
observations (in our case when the sketches are large enough),
one can find~$\theta$ with arbitrary precision. Asymptotic unbiasedness alone does not imply consistency; it requires also a vanishing
variance (i.e., that the estimator variance converges to 0 with the increasing
sample size).  
\if 0
Note that a consistent estimator may be biased, and vice versa, an inconsistent
estimator may be unbiased.
\fi

We also verify if PG estimators are \textbf{maximum likelihood}. This class of estimators provides several powerful and useful properties, and is among the most important tools for estimation and inference in statistics~\cite{myung2003tutorial, fisher1922mathematical, white1982maximum}.
\if 0
Most statisticians, whenever presented with a method for estimating the
parameters of an assumed probability distribution, first check if it is
\textbf{maximum likelihood}.  
\fi
Specifically, a maximum likelihood estimator (MLE)~$\widehat{\theta}_{MLE}$ is
an estimator that maximizes the \emph{likelihood function}~$L$, i.e.,
$\widehat{\theta}_{MLE} = \text{argmax}_{\theta \in \Theta} \; L(\mathbf{x};
\theta)$ (cf.,~Chapter~7 in~\cite{casella2002statistical}) where $\Theta$ is
the parameter space. 
\marginparsep=1em
\marginpar{\vspace{-15em}\colorbox{yellow}{\textbf{ALL}}}
%
%
Here, the likelihood function $L$ is defined as the probability of observing a
given sample $\mathbf{x} = (x_1, ..., x_n)$ as a function of~$\theta$, i.e., $L
\equiv P(X_1 = x_1, ..., X_n = x_n; \theta)$, where $X_1, ..., X_n$ represent a
random sample from a given population. Thus, $\widehat{\theta}_{MLE}$ is the
value of the parameter $\theta$ for which the observed sample is the most
likely.  This intuitive choice of an estimator leads under mild conditions to
strong optimality properties such as {consistency} (discussed above),
\enlargeSQ
\enlargeSQ
\textbf{invariance}, and \textbf{asymptotic efficiency}. An
estimator~$\widehat{\theta}$ is invariant if, whenever $\widehat{\theta}$ is
the MLE of $\theta$, then for any function $\tau(\cdot)$,
$\tau(\widehat{\theta})$ is the MLE of $\tau(\theta)$.  This is useful if
complicated functions of the parameter $\theta$ are of interest since
$\tau(\widehat{\theta}_{MLE})$ inherits automatically all the properties of
$\widehat{\theta}_{MLE}$. Finally, the asymptotic efficiency certifies that
$\widehat{\theta}_{MLE}$ attains, under mild regularity conditions, the
Cramer-Rao bound~\cite{casella2002statistical}.  To understand the importance
of this result, we need to introduce the Mean Squared Error (MSE) of an
estimator. MSE measures the average squared difference between the estimator
and the parameter, i.e., $MSE(\widehat{\theta},\theta) =
E_{\theta}[(\widehat{\theta} - \theta)^2]$).  Thus, the asymptotic efficiency
implies that there exists no consistent estimator of $\theta$ that achieves a
lower MSE than $\widehat{\theta}_{MLE}$. Next, given the well known
decomposition of the MSE into bias squared plus variance (i.e.,
$MSE(\widehat{\theta}) = Bias(\widehat{\theta},\theta)^{2} +
Var_{\theta}(\widehat{\theta})$), we can also conclude that among all the
consistent estimators of $\theta$, no one has a strictly smaller variance than
$\widehat{\theta}_{MLE}$. Another interesting implication of this result is the
\textbf{asymptotic normality} of MLE. This property allows to build valid
\emph{confidence intervals} for the parameter of interest based on the normal
distribution (asymptotically in sketch size, for a fixed input). 
%
%
%
%

\if 0
\maciej{
\url{https://www.probabilitycourse.com/chapter8/8_2_3_max_likelihood_estimation.php}
\url{https://encyclopediaofmath.org/wiki/Rao-Cram\%C3\%A9r_inequality}
\url{https://encyclopediaofmath.org/wiki/Efficient_estimator}
\url{https://encyclopediaofmath.org/wiki/Asymptotically-efficient_estimator}
\url{https://www.statistics.com/glossary/asymptotic-efficiency/}
\url{https://en.wikipedia.org/wiki/Efficiency_(statistics)}
}
\fi

\if 0
%

\cesar{Usually in statistics you use the plug-in principle whenever the parameter is present in a given formula that you want to estimate. In this case $|X|$ is what we want to estimate so the quality of the bound depends on the set size and, without knowing the true set size, you can still have an idea of the bound just plugging in the estimator. As the size of the BF increases, this plug-in method becomes more and more precise as the estimator is consistent.}
\maciej{This is a CS submission, so folks will probably react similarly to
Jakub...  Cesare, the description above is informative but too imprecise -
could you maybe elaborate here on this, but really precisely? For
example:\newline
``just plugging in the estimator'' - plugging the estimator on THE RIGHT SIDE
of the equation (2), right?
``you can still have an idea of the bound'' -- what does it mean precisely -
how well can we know this bound without knowing the true set size?\newline
``As the size of the BF increases, this plug-in method becomes more and more
precise'' - can you state this formally (can be a reference) - how exactly
(equation-wise) this is getting more precise? Is it because, when on the right
side of Eq.(2), whenever B\_X gets later, the bound gets better, right?
``more and more precise as the estimator is consistent'' - is there any specual
formal connection between the estimator being consistent, and the plug-in
principle? Or is it just that, with the increasing size of B\_X, we just get
closer to the true |X|?
Also, can we have a formal definition of the estimator consistency?
}

\fi


\vspaceSQ{-0.2em}
\subsection{Concentration Bounds}
\vspaceSQ{-0.2em}

\marginparsep=2em
\marginpar{\vspace{-2em}\colorbox{yellow}{\textbf{R-5}}\\ \colorbox{yellow}{(minor}\\ \colorbox{yellow}{comm-}\\ \colorbox{yellow}{-ent 4)}}

\hl{We use} the notion of a \emph{concentration inequality}. Overall, such
an inequality \emph{bounds the deviation of a given random variable~$X$ from
some value} (usually the expectation~$E[X]$). 
\iftr
In this work, we mainly use the Chebyshev~\cite{boucheron2013concentration},
Hoeffding~\cite{bentkus2004hoeffding} and Serfling inequalities~\cite{greene2017exponential}.
\fi


%

%
%


%
%


\section{Sets \& $|X \cap Y|$ In Graph Algorithms}
\label{sec:algs}

We first identify algorithms that
use $|X \cap Y|$.
A graph itself can be modeled as a collection of sets: each vertex neighborhood
$N_v$ is essentially a set. In PG, we use this observation to
approximate the graph structure and operations by using probabilistic set representations in
place of $N_v$ and $|N_v \cap N_u|$, for any vertices $u$ and $v$.
In the following listings, the ``$X$'' and ``$Y$'' general symbols are replaced
with specific sets. Operations approximated by PG are marked with the
~~\tikzmarkin[set fill color=hlL, set border color=white, above offset=0.27,
right offset=5em, left offset=-0.1em, below
offset=-0.1]{mot1}\textcolor{black}{blue}\tikzmarkend{mot1}~ color.
``\texttt{[in par]}'' means that a loop is parallelized. We ensure
that the parallelization does not involve conflicting memory accesses. For
clarity, we focus on \emph{formulations} and we discuss parallelization details
in Section~\ref{sec:design}.

\if 0
\macb{Optimizations}
We use the most optimized available variants of specific algorithms. For
example, for Triangle or 4-Clique Counting, we employ the established
optimizations, e.g., \emph{Degeneracy
Ordering}~\cite{DBLP:conf/isaac/EppsteinLS10} or constructing directed graphs
from undirected ones (to eliminate edges counted twice). For clarity, we
exclude these optimizations from the listings, as they are unrelated to the
paper.
\fi

\ifall

\maciej{polish}

\subsection{Sets in Graph Algorithms}

First, a graph itself can be modeled as a collection of sets: each vertex
neighborhood $N_v$ is essentially a set. In ProbGraph, we use this
observation to approximate the graph structure by using probabilistic set
representations in place of $N_v$.

Such sketches offer \emph{more performance}: accessing and scanning a set
sketch is usually faster than with the original set. For example, checking if
an element is in a BF takes $O(b)$ time, which is usually less than $O(\log
d_v)$ needed by a CSR. What is more, the \emph{small size} of a BF usually
makes it possible to store in completely in cache.

Simultaneously, there is a certain \emph{accuracy loss}, usually dictated how
much space is reserved for a given sketch (i.e., the more space is used, the
more accuracy is preserved). The size is controlled by the user, which enables
a \emph{tunable tradeoff} between performance, size, and accuracy. For example,
enlarging a BF decreases its rate of false positives, but it may also decrease
performance if the BF stops fitting in cache.

\fi

\if 0

\paragraph*{Triangle Counting}
\label{sub:tc}

\fi

In \textbf{Triangle Counting (TC)}~\cite{shun2015multicore, strausz2022asynchronous}
(Listing~\ref{lst:tc}), one counts the total number of triangles $tc$ in an
undirected graph. An example application of TC is computing clustering
coefficients~\cite{al2018triangle}.
For each vertex~$u$, one computes the cardinalities of the intersections of
$N_u$, the set of neighbors of $u$, with the sets of the neighbors of each
neighbor of~$u$ (Lines~\ref{ln:tc-main-1}-\ref{ln:tc-main-2}).


\begin{lstlisting}[float=h, aboveskip=0em,belowskip=-0.5em,abovecaptionskip=0.0em,label=lst:tc,caption=Triangle Counting (Node Iterator).]
|\vspace{0.5em}|/* |\textbf{Input:}| A graph $G$. |\textbf{Output:}| Triangle count $tc \in \mathbb{N}$. */
//Derive a vertex order $R$ s.t. if $R(v) < R(u)$ then $d_v \leq d_u$:
for $v \in V$ [in par] do: $N^+_v = \{ u \in N_v | R(v) < R(u) \}$
|\label{ln:tc-main-1}|$tc$ = $0$; //Init $tc$; for all neighbor pairs, increase $tc$:
//Now, derive the actual count of triangles: 
|\label{ln:tc:s}|$v \in V$ [in par] do:
|\vspace{0.25em}|  |\label{ln:tc-main-2}|for $u \in N^+_v$ [in par] do: $tc$ += |\hlLR{8.5em}{ $\mid N^+_v \cap N^+_u\mid$ }| 
\end{lstlisting}

\if 0
\paragraph*{4-Clique Counting}

\fi

\marginparsep=2em
\marginpar{\vspace{-3em}\colorbox{yellow}{\textbf{R-5}}\\ \colorbox{yellow}{(minor}\\ \colorbox{yellow}{comm-}\\ \colorbox{yellow}{-ent 5)}}

We also consider \hl{higher-order} \textbf{Clique Counting}, a problem important for dense
subgraph discovery~\cite{danisch2018listing}. Listing~\ref{lst:fcl} contains
4-clique listing. We reformulated the original scheme~\cite{danisch2018listing}
(without changing its time complexity) to expose $|X \cap Y|$. The
\hl{algorithm presented} generalizes TC. 

\marginparsep=2em
\marginpar{\vspace{-2em}\colorbox{yellow}{\textbf{R-5}}\\ \colorbox{yellow}{(minor}\\ \colorbox{yellow}{comm-}\\ \colorbox{yellow}{-ent 6)}}

\begin{lstlisting}[float=h,belowskip=-0.5em,aboveskip=0em,label=lst:fcl,caption=Reformulated 4-Clique Counting.]
|\vspace{0.5em}|/* |\textbf{Input:}| A graph $G$. |\textbf{Output:}| Number of 4-cliques $ck \in \mathbb{N}$. */
/Derive a vertex order $R$ s.t. if $R(v) < R(u)$ then $d_v \leq d_u$:
for $v \in V$ [in par] do: $N^+_v = \{ u \in N_v | R(v) < R(u) \}$
$ck$ = $0$;
for $u \in V$ [in par] do:
  for $v \in N^+_u$ [in par] do:
    $C_3$ = $N^+_u \cap N^+_v$ //Find 3-cliques
|\vspace{0.25em}|    for $w \in C_3$ do: //For each 3-clique...
|\vspace{0.25em}|      $ck$ += |\hlLR{7em}{ $\vert N^+_w \cap C_3 \vert$ }| //Find 4-cliques
\end{lstlisting}

\if 0
\paragraph*{Vertex Similarity}
\label{sec:sets-similarity}
\fi

\enlargeSQ

\textbf{Vertex Similarity} measures, used in graph databases and
others~\cite{robinson2013graph, robinson2015graph, lissandrini2017evaluation,
neo4j_book, besta2019demystifying}, assess how similar two vertices $v$ and $u$
are, see Listing~\ref{lst:sim}. They can be used on their own, or as a building
block of more complex algorithms such as clustering. Many of these schemes use
the cardinality of set intersection. This includes Jaccard, Common Neighbors,
Total Neighbors, or Adamic Adar.
Vertex Similarity is the basic building block of \textbf{Link Prediction}
and {Clustering}.


\if 0
\begin{lstlisting}[float=h, aboveskip=-0.75em,belowskip=-1.5em,label=lst:sim,caption=Vertex similarity measures.]
/* |\textbf{Input:}| A graph $G$. |\textbf{Output:}| Similarity $S \in \mathbb{R}$ of neighborhoods
|\vspace{0.5em}| * $N_u$ and $N_v$ of some vertices $u$ and $v$. */
|\vspace{0.5em}|$S_J(v,u)$ = |\hlLR{8em}{ $\vert N_v \cap N_u\vert$ }| / $\vert N_v \cup N_u\vert$ = /* Jaccard Similarity */
|\vspace{0.5em}|          |\hlLR{8em}{ $\vert N_v \cap N_u\vert$ }| / ($\vert N_v \vert$ + $\vert N_u \vert$ - |\hlLR{8em}{ $\vert N_v \cap N_u \vert$ }|)
|\vspace{0.5em}|$S_O(v,u)$ = |\hlLR{8em}{ $\vert N_v \cap N_u\vert$ }| / min($\vert N_v\vert$, $\vert N_u\vert$) //Overlap Similarity
|\vspace{0.5em}|$S_C(v,u)$ = |\hlLR{8em}{ $\vert N_v \cap N_u\vert$ }| //Common Neighbors 
\end{lstlisting}
\fi


\begin{lstlisting}[float=h!, aboveskip=0em,belowskip=-0.5em,abovecaptionskip=0em,label=lst:sim,caption=Example vertex similarity measures~\cite{leicht2006vertex}.]
/* |\textbf{Input:}| A graph $G$. |\textbf{Output:}| Similarity $S \in \mathbb{R}$ of sets $A,B$.
* Most often, $A$ and $B$ are neighborhoods $N_u$ and $N_v$
|\vspace{0.5em}| * of vertices $u$ and $v$. */
|\vspace{0.5em}|//Jaccard similarity:
|\vspace{0.5em}|$S_J(A,B)$ = |\hlLR{5.5em}{ $\vert A \cap B\vert$ }| / |\hlLR{5.5em}{ $\vert A \cup B\vert$ }| = |\hlLR{5.5em}{ $\vert A \cap B\vert$ }| / ($\vert A \vert$ + $\vert B \vert$ - |\hlLR{5.2em}{ $\vert A \cap B \vert$ }|)
|\vspace{0.5em}|//Overlap similarity:
|\vspace{0.5em}|$S_O(A,B)$ = |\hlLR{5.3em}{ $\vert A \cap B\vert$ }| / min($\vert A\vert$, $\vert B\vert$)
|\vspace{0.5em}|//Certain measures are only defined for neighborhoods:
|\vspace{0.5em}|$S_A(v,u)$ = $\sum_{w} (1 / \log|N_w|)$ //where $w \in $ |\hlLR{6em}{ $N_v \cap N_u$ }|; Adamic Adar
|\vspace{0.5em}|$S_R(v,u)$ = $\sum_{w} (1 / |N_w|)$ //where $w \in $ |\hlLR{6em}{ $N_v \cap N_u$ }|; Resource Alloc.
|\vspace{0.5em}|$S_C(v,u)$ = |\hlLR{7em}{ $\vert N_v \cap N_u \vert$ }| //Common Neighbors
$S_T(v,u)$ = $\vert N_v \cup N_u \vert$ = $|N_v|$ + $|N_u|$ - |\hlLR{7em}{ $\vert N_v \cap N_u \vert$ }| //Total Neighbors
\end{lstlisting}

\if 0
\paragraph*{Unsupervised Learning (Clustering)}
\label{sec:cl}
\fi

\marginparsep=1em
\marginpar{\vspace{-2em}\colorbox{yellow}{\textbf{R-5}}\\ \colorbox{yellow}{(minor}\\ \colorbox{yellow}{comm-}\\ \colorbox{yellow}{-ent 7)}}

\textbf{Graph Clustering} \hl{is} a broadly studied
problem~\cite{schaeffer2007graph}. Listing~\ref{lst:cl} shows Jarvis-Patrick
clustering~\cite{jarvis1973clustering}, a scheme that uses vertex similarity to
determine whether these two vertices are in the same cluster, and relies
heavily on $|X \cap Y|$.


\begin{lstlisting}[float=h, aboveskip=0em, belowskip=-0.5em,label=lst:cl,caption=Jarvis-Patrick clustering.]
/* |\textbf{Input:}| A graph $G = (V,E)$. |\textbf{Output:}| Clustering $C \subseteq E$
|\vspace{0.5em}| * of a given prediction scheme. */
|\vspace{0.5em}|//Use a similarity $S_C(v,u)$ = |\hlLR{7em}{ $\vert N_v \cap N_u\vert$ }| (see Listing |\ref{lst:sim}|).
|\vspace{0.5em}|for $e = (v,u) \in E$ [in par] do: //$\tau$ is a user-defined threshold
  if |\hlLR{7em}{ $\vert N_v \cap N_u\vert$ }| $> \tau$: $C$ $\cup$= $\{e\}$ 
//Other clustering schemes use other similarity measures.
\end{lstlisting}

\iftr

\textbf{Link Prediction}
There are many schemes for predicting whether two non-adjacent vertices can
become connected in the future in the context of evolving
networks~\cite{lu2011link}. Assessing the accuracy of a specific link
prediction scheme~$S$ is done with a simple algorithm~\cite{wang2014robustness}
shown in Listing~\ref{lst:lp}.
We start with some graph with \emph{known} links (edges). We derive $E_{sparse}
\subseteq E$, which is $E$ with random links removed; $E_{sparse} = E \setminus
E_{rndm}$.  $E_{rndm} \subseteq E$ are randomly selected \emph{missing} links
from $E$ (\emph{links to be predicted}).  We have $E_{sparse} \cup E_{rndm} =
E$ and $E_{sparse} \cap E_{rndm} = \emptyset$.
Now, we apply the link prediction scheme~$S$ (that we want to test) to each
edge $e \in (V \times V) \setminus E_{sparse}$. The higher a value $S(e)$, the
more probable $e$ is to appear in the future (according to $S$).  Now, the
effectiveness $ef$ of $S$ is computed by verifying how many of the edges with
highest prediction scores ($E_{predict}$) actually are present in the original
dataset~$E$: $ef = |E_{predict} \cap E_{rndm}|$.

\begin{lstlisting}[float=h, aboveskip=0em, belowskip=-0.5em,label=lst:lp,caption=Link prediction testing.]
/* |\textbf{Input:}| A graph $G = (V,E)$. |\textbf{Output:}| Effectiveness $ef$
|\vspace{0.5em}| * of a given prediction scheme. */
$E_{rndm}$ = /* Random subset of $E$ */
|\vspace{0.5em}|$E_{sparse}$ = $E \setminus E_{rndm}$ /* Edges in $E$ after removing $E_{rndm}$ */
//For each $e \in (V \times V) \setminus E_{sparse}$, derive score $S(e)$ that
//determines the chance that $e$ appears in future. Here,
|\vspace{0.25em}|//one can use any vertex similarity scheme $S$.
|\vspace{0.25em}|for $e = (v,u) \in (V \times V) \setminus E_{sparse}$ [in par] do: compute $S(v,u)$
|\vspace{0.25em}|$E_{predict}$ = /* Pick selected top edges with highest $S$ scores.*/
$ef$ = |\hlLR{11em}{ $\vert E_{predict} \cap E_{rndm}\vert$ }| //Derive the effectiveness.
\end{lstlisting}

\fi

\marginparsep=1em
\marginpar{\vspace{2em}\colorbox{yellow}{\textbf{R-3}}\\ \colorbox{yellow}{comm-}\\ \colorbox{yellow}{-ent 1}}

\subsection{\hl{Real-World Applications}}

\enlargeSQ
\enlargeSQ

\hl{Graph problems targeted by ProbGraph have numerous real-world
applications because the underlying
operation~$|X \cap Y|$, used to find the counts of the shared 
neighbors, is a common building block in many real-world problems
in domains ranging from network science or sociology, through
chemistry or biology, to the Internet
studies~\mbox{\cite{schaeffer2007graph}}.}

\hl{Triangle counting is used to obtain the \emph{network cohesion}, an
important measure of connectedness and ``togetherness'' of a group of
vertices~\mbox{\cite{friggeri2011triangles, prat2012shaping}}. Specifically,
for any subgraph $S \subseteq V$ , $S$'s cohesion is $TC[S] / \binom{|S|}{3}$,
  where $TC[S]$ is the triangle count of $S$; note that $S$ may also form $V$
  (in which case we obtain the cohesion of the whole graph).
Another example is discovering communities~\mbox{\cite{becchetti2008efficient,
palla2005uncovering}}, by computing the clustering coefficient
defined as $3 \cdot TC[S] / \binom{|S|}{3}$
Other use cases include \emph{spam detection} (standard and spam sites differ
in the respective counts of triangles that they belong to), optimization of
query planning in databases~\mbox{\cite{bar2002reductions}}, uncovering hidden
thematic layers in WWW~\mbox{\cite{eckmann2002curvature}}, or studying
differences between gene interactomes of various
species~\mbox{\cite{tran2013counting}}.}

\hl{The considered Jarvis-Patrick clustering can be used in adaptive web search
based on automatic construction of user profiles. A critical step in this use
case is generation of clusters of users, which is directly achieved using the
clustering scheme addressed in PG~\mbox{\cite{sugiyama2004adaptive}}. Other
selected examples are drug design (by predicting plasma protein
bindings~\mbox{\cite{kratochwil2002predicting}}), screening and generating
overviews of chemical databases (by computing clusters of related
molecules)~\mbox{\cite{willett1998chemical}}, or analyzing single-cell RNA
sequences (by approximating smooth low-dimensional surfaces that model states
of cells)~\mbox{\cite{kharchenko2021triumphs, tyser2021single}}.}

\marginparsep=2em
\marginpar{\vspace{0em}\colorbox{yellow}{\textbf{R-3}}\\ \colorbox{yellow}{comm-}\\ \colorbox{yellow}{-ent 1}}

\hl{Other considered problems also have numerous applications. 
In short, clique counting is used in social network analysis (cf.~the
established textbooks~\mbox{\cite[Chapter 11]{hanneman2005introduction}}
and~\mbox{\cite[Chapter 2]{jackson2010social}}) to find large and dense network
regions~\mbox{\cite{lu2018community, mitzenmacher2015scalable,
sariyuce2015finding, tsourakakis2015k, tsourakakis2017scalable}} or in
topological approaches to network
analysis~\mbox{\cite{sizemore2017classification}}. 
Link prediction and vertex similarity are used throughout the whole graph data
mining in many parts of network science and others, as illustrated in numerous
surveys and textbooks~\mbox{\cite{liben2007link, lu2011link, al2006link,
taskar2004link, cook2006mining, jiang2013survey, besta2021graphminesuite}}.}

\section{Approximating $|X \cap Y|$}
\label{sec:reps_int}


We now show how to derive approximate set intersection \emph{cardinality}~$|X
\cap Y|$ both \emph{fast} and \emph{with high accuracy}. 
\ifconf
\else
We provide selected results for BF and MH (less competitive outcomes are
in the Appendix).
\fi
\all{As in Section~\ref{sec:single_sets}, we just state key results and include
derivations and all outcomes in the Appendix or extended report.}
\if 0
As in Section~\ref{sec:single_sets}, the following results are {generic} and
reusable outside graph analytics. However, in this work, we will focus on
applying these estimators to graph problems.
\fi
In this section, we assume arbitrary sets $X$ and $Y$, to ensure that our
outcomes are of interest beyond graph mining. From Section~\ref{sec:theory}
onwards, we focus on graph mining by applying the results from this section to
$\widehat{|N_u \cap N_v|}$.

\if 0
We will show how the following estimators~$\widehat{|X \cap Y|}$,
when applied to vertex neighborhoods (obtaining $\widehat{|N_u \cap N_v|}$),
can be used to (1) provably estimate counts of triangles in arbitrary graphs
(Section~\ref{sec:theory}), and (2) ensure high performance and accuracy for a
broad range of real world graphs and graph algorithms
(Section~\ref{sec:eval}).
\fi

\marginparsep=2em
\marginpar{\vspace{2em}\colorbox{yellow}{\textbf{ALL}}}

\subsection{\hl{Section Overview and Intuition}}

\hl{We first outline the section structure and provide the intuition behind the
key parts. 
We first present estimators for $|X \cap Y|$, designed using BF
(\mbox{\cref{sec:int_bf_est}}), $k$-Hash (\mbox{\cref{sec:int_kh_est}}), and
1-Hash (\mbox{\cref{sec:int-1h}}).
Then, we compare the obtained estimators regarding their accuracy
(\mbox{\cref{sec:intersect_summary-acc}}) and the amount of parallelism
(\mbox{\cref{sec:intersect_summary-par}}).}

\hl{We provide concentration bounds for all the estimators. We present here
selected ones, the others are in the supplementary material together with the proofs of all the propositions of this section. A generic form of a concentration bound from this section is $P(|\text{estimator} - \text{true\_value}| \ge t)
\le f(t)$. Intuitively, this means that we can bound (i.e., by the function $f(t)$) the probability that a PG estimator deviates (i.e., more than $t$) from the true parameter value for every sketch size. The function $f(t)$ (either polynomial or exponential in $t$ for all PG estimators) determines the speed at which a given PG estimator concentrates around the true parameter value.}   


\subsection{Approximating $|X \cap Y|$ with Bloom Filters}
\label{sec:int_bf_est}

We introduce a new estimator $\widehat{|X \cap Y|}_{AND}$ and we give a bound
on its accuracy. 
Specifically, for two sets $X$ and $Y$ represented by $\mathcal{B}_X$ and
$\mathcal{B}_Y$, we apply the estimator from Eq.~(\ref{eq:est_ss_bf}) to
$\mathcal{B}_{X \cap Y}$, obtaining

\enlargeSQ

\ifconf
\small
\fi
\vspaceSQ{-0.5em}
\begin{equation}\label{eq:bf_int}
\widehat{|X \cap Y|}_{AND} = - \frac{B_{X \cap Y}}{b} \log \left( 1 - \frac{B_{X \cap
Y,1}}{B_{X \cap Y}} \right)    
\end{equation}
\vspaceSQ{-0.5em}
\normalsize

\noindent
where $B_{X \cap Y} = B_X = B_Y$ is the BF size (cf.~Table~\ref{tab:symbols}). 
%
%
\ifall\maciej{fix}
Next, we assume that $b$ hash functions, common to both $B_X$ and $B_Y$, are
independent perfectly random hash functions.  We follow a common assumption on
the used hash functions (i.e., that they are independent and
uniform)~\cite{swamidass2007mathematical}, obtaining . This implies that the
intersection of two single BFs is the same as the BF created from scratch from
a set intersection outcome.
\fi
Next, we prove an important property of~$\widehat{|X \cap Y|}_{AND}$. Note that
the following property also holds for the estimator by
Swamidass~\cite{swamidass2007mathematical} from Eq.~(\ref{eq:est_ss_bf}).

\begin{prop}\label{bound_int_BF} Let $\widehat{|X \cap Y|}_{AND}$ be the
estimator defined in Eq.~(\ref{eq:bf_int}). For $B_{X \cap Y}, b \in
\mathbb{N}$ such that $b = o(\sqrt{B_{X \cap Y}})$, and a set $X \cap Y$ such
that $b |X \cap Y| \leq 0.499 B_{X \cap Y} \cdot \log B_{X \cap Y}$ the
following holds: 
	
\vspaceSQ{-1.5em}
\small
\begin{align}
E\fSB{\fRB{\widehat{|X \cap Y|}_{AND} - |X \cap Y|}^2} & \leq \nonumber \\ 
(1+o(1))\left(e^{|X \cap Y| b / (B_{X \cap Y}-1)} \frac{B_{X \cap Y}}{b^2} - \frac{B_{X \cap Y}}{b^2} - \frac{|X \cap Y|}{b}\right) & \nonumber
\end{align}
\normalsize
\end{prop}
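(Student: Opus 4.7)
The plan is to reduce the intersection estimator to the single-set Swamidass estimator and then apply a delta-method expansion. Under the independent-uniform hash assumption of Swamidass et al.~\cite{swamidass2007mathematical}, the bit vector obtained by bitwise AND of $\mathcal{B}_X$ and $\mathcal{B}_Y$ is distributed identically to a fresh Bloom filter into which the $n := |X \cap Y|$ elements have been inserted using $b$ hash functions and $B := B_{X \cap Y}$ bits. Under this identification the estimator of Eq.~\eqref{eq:bf_int} has the functional form of $\widehat{|X|}_S$ from Eq.~\eqref{eq:est_ss_bf}, and it suffices to bound $E\bigl[(\widehat{|X|}_S - n)^2\bigr]$ for a generic set of size $n$.

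Write $R := B_{X \cap Y,1} = \sum_{i=1}^{B} Z_i$ where $Z_i$ indicates that bit $i$ is set. A standard inclusion--exclusion over the $bn$ hash-probe positions gives $p := E[Z_i] = 1 - (1-1/B)^{bn}$ and $\mathrm{Cov}(Z_i,Z_j) = (1-2/B)^{bn} - (1-1/B)^{2bn}$, and a short expansion of each in powers of $1/B$ (justified by $b = o(\sqrt{B})$) shows that the pairwise covariance correction is of smaller order than $Bpq$, so $\mathrm{Var}(R) = Bpq - bnq^2(1+o(1))$ with $q := 1-p$. View the estimator as $f(R)$ with $f(x) = -(B/b)\log(1-x/B)$ and Taylor-expand around $\mu := E[R] = Bp$, using $1-\mu/B = q$ so that $f'(\mu) = 1/(bq)$ and $f''(\mu) = 1/(bBq^2)$. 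The delta method then yields $\mathrm{Var}(f(R)) \approx f'(\mu)^2 \mathrm{Var}(R) = B/(b^2 q) - B/b^2 - n/b$, while a direct calculation shows that $f(\mu) = -Bn\log(1-1/B) = n(1+O(1/B))$ and hence the bias is of lower order than the square root of this variance. Finally, the identity $1/q = (1-1/B)^{-bn} = e^{bn/(B-1)}(1+o(1))$, valid because $bn/B^2 = o(1)$ under the hypotheses, converts $B/(b^2 q) - B/b^2 - n/b$ into the right-hand side of the claim.

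Combining via $\mathrm{MSE} = \mathrm{Bias}^2 + \mathrm{Var}$ and collecting the $(1+o(1))$ factors gives the bound. The hypothesis $bn \le 0.499\,B\log B$ plays a key role by keeping $q \ge B^{-0.499}(1+o(1))$ bounded away from zero, so that $f$ and its derivatives remain well-defined at $\mu$ and the higher Taylor terms can be made negligible. The main obstacle will be controlling the Taylor remainder rigorously: $R$ is random and could in principle land close enough to $B$ that $f(R)$ blows up even though $f$ is well-behaved at $\mu$. The cleanest way to handle this is to restrict to the high-probability event $\{R \le (1-\delta)B\}$ for a suitable slowly-vanishing $\delta$ tuned against the $0.499$ margin, bound the conditional MSE by the expansion above, and bound the complementary contribution via a Chernoff/Hoeffding estimate on $R$, which is available because the $Z_i$'s are negatively associated. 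Once this truncation is in place, the remainder terms can be uniformly absorbed into the overall $(1+o(1))$ factor and the algebra above gives the stated bound.
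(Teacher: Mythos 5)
Your proof takes essentially the same route as the paper's: the paper proves the single-set bound (Proposition on $\widehat{|X|}_S$) by truncating to the high-probability event that the number of zero bits is close to its mean (via a Chernoff-type bound on the number of empty bins), linearizing the logarithm on that event (via a Lipschitz constant, which is your delta-method step made rigorous), and plugging in the balls-and-bins variance $\mathrm{Var}(B_{X,0}) \sim B e^{-bn/B} - B(1+bn/B)e^{-2bn/B}$ — exactly the quantities and the truncation-plus-Chernoff plan you describe — and then obtains the intersection version by literally substituting $|X\cap Y|$ for $|X|$ and $B_{X\cap Y}$ for $B_X$. One caveat: your opening claim that the bitwise AND of $\mathcal{B}_X$ and $\mathcal{B}_Y$ is distributed identically to a fresh Bloom filter of $X\cap Y$ is false (collisions between elements of $X\setminus Y$ and $Y\setminus X$ can set extra bits in the AND), and the paper explicitly treats the AND only as a practical approximation; the proposition is formally about the estimator applied to $\mathcal{B}_{X\cap Y}$ itself, so your reduction lands in the right place but for the wrong reason.
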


\noindent
%
Overall, Proposition~\ref{bound_int_BF} shows that we can bound the mean
squared error (MSE) of $\widehat{|X \cap Y|}_{AND}$ (and also
$\widehat{|X|}_{S}$ from Eq.~(\ref{eq:est_ss_bf})). By Chebyshev's
inequality\footnote{\scriptsize We apply the inequality on the MSE to derive a
bound for $P \left( \left| \widehat{|X \cap Y|}_{AND} - |X \cap Y| \right| \geq
t \right)$ rather than for $P \left( \left| \widehat{|X \cap Y|}_{AND} -
E(\widehat{|X \cap Y|}_{AND}) \right| \geq t \right)$ (as is usually done).},
we obtain the following concentration result:

\vspaceSQ{-1.5em}
\small
\begin{align}\label{bound_int_BF-eq}
P\fRB{\fVB{\widehat{|X \cap Y|}_{AND} - |X \cap Y|} \ge t} & \leq \nonumber \\
(1+o(1))\frac{\left(e^{|X \cap Y| b / (B_{X \cap Y}-1)} \frac{B_{X \cap Y}}{b^2} - \frac{B_{X \cap Y}}{b^2} - \frac{|X \cap Y|}{b}\right)}{t^{2}} &
\end{align}
\normalsize





%
%


\if 0
In PG, we also provide a second estimator~$\widehat{|X \cap Y|}_{OR}$; it is
similar to $\widehat{|X \cap Y|}_{AND}$ in performance and accuracy, and thus
we leave its full discussion to the extended report.
\fi

\if 0
We also provide a second estimator~$\widehat{|X \cap Y|}_{OR}$; it is
This
estimator was introduced before~\cite{swamidass2007mathematical}; we provide
the first bounds on its accuracy, and we integrate it into PG. As we show in
Section~\ref{sec:eval}, This estimator
can be derived using the Swamidass estimator (\cref{sec:estimators}) evaluated
on the set union, together with the fact that $|X \cup Y| = |X| + |Y| - |X \cap
Y|$:


\vspaceSQ{-0.5em}
\begin{equation}\label{eq:bf_int_union}
\widehat{|X \cap Y|}_{OR} = |X| + |Y| + \frac{B_{X \cup Y}}{b} \log \left( 1 - \frac{B_{X \cup
Y,1}}{B_{X \cup Y}} \right) 
\end{equation}
\vspaceSQ{-0.5em}

%
As with $\widehat{|X \cap Y|}_{AND}$, we obtain a novel concentration bound analogous to
Eq.~(\ref{bound_int_BF-eq}) by substituting $|X \cap Y|$ with $|X \cup Y|$ and
$B_{X \cap Y}$ with $B_{X \cup Y}$. We have

\begin{align}\label{bound_int_BF-or-eq}
P\fRB{\fVB{\widehat{|X \cap Y|}_{OR} - |X \cap Y|} \ge t} & \leq \nonumber \\ 
\frac{O\left(\min\left( \frac{B_{X \cup Y}}{b^2} e^{\frac{|X \cup Y| b}{B_{X \cup Y} - 1}} , \;\frac{|X \cup Y|}{b^2} e^{\frac{2 |X \cup Y| b}{B_{X \cup Y} - 1}} \right)\right)}{t^2}  &
\end{align}

\fi


\marginparsep=1em
\marginpar{\vspace{1em}\colorbox{yellow}{\textbf{R-3}}\\ \colorbox{yellow}{\textbf{R-5}}}

\hl{We can strengthen the intuition on the behavior of $\widehat{|X \cap Y|}_{AND}$ by taking the limit for $B_{X \cap Y} \rightarrow \infty$ in Eq.~}(\ref{eq:bf_int})\hl{. We call $\widehat{|X \cap Y|}_L$ this limiting estimator:}


\vspaceSQ{-0.5em}
\small

\if0

\begin{align}
  \widehat{|X \cap Y|}_L &\equiv \lim_{B_{X \cap Y}\to\infty} \widehat{|X \cap Y|}_{AND} \nonumber \\ &= \lim_{B_{X \cap Y}\to\infty}- \frac{B_{X \cap Y}}{b} \log \left ( 1 - \frac{B_{X \cap Y,1}}{B_{X \cap Y}} \right ) \nonumber \\
    &= \log \left ( \lim_{B_{X \cap Y}\to\infty} \left(1 - \frac{B_{X \cap Y,1}}{B_{X \cap Y}}\right)^{-\frac{B_{X \cap Y}}{b}} \right) \nonumber\\
      &= \log\left(\exp \left(\frac{B_{X \cap Y,1}}{b}\right)\right) =\frac{B_{X \cap Y,1}}{b} \label{eq:limit_xs}
      \end{align}

\else

\begin{align}
\widehat{|X \cap Y|}_L &\equiv \lim_{B_{X \cap Y}\to\infty} \widehat{|X \cap Y|}_{AND} =  \frac{B_{X \cap Y,1}}{b} \label{eq:limit_xs}
      \end{align}

\fi

\normalsize

Hence, as $B_{X \cap Y}$ increases, $\widehat{|X \cap Y|}_{AND}$ \emph{rescales
the number of ones in the BF} by $\frac{1}{b}$ because
$\widehat{|X \cap Y|}_{AND} \sim \frac{B_{X \cap Y,1}}{b}$ for $X,Y,b$ fixed
and $B_{X \cap Y} \rightarrow \infty$. 
\hl{In Section~\mbox{\ref{sec:eval}}, we will show that -- depending on the
choice of the scaling factor $\frac{1}{b}$ which impacts the bias-variance
trade-off} -- there are cases where $\widehat{|X \cap Y|}_{L}$ is better than
$\widehat{|X \cap Y|}_{AND}$.
%

Note that $\widehat{|X \cap Y|}_{AND}$ uses the count of ones in
a BF $\mathbf{B}_{X \cap Y}$. This number cannot be computed from individual
BFs $\mathbf{B}_X$ and $\mathbf{B}_Y$. In practice, we use $\mathbf{B}_{X
\cap Y} \approx \mathbf{B}_X \text{ AND } \mathbf{B}_Y$ (where ``AND''
indicates a logical bitwise AND operation) and use the result of AND to obtain
$B_{X \cap Y, 1}$.
This may somewhat increase the false positive probability, but -- as the results in
Section~\ref{sec:eval} show -- does not prevent high accuracy.
%
%

\marginparsep=1em
\marginpar{\vspace{-6em}\colorbox{yellow}{\textbf{R-3}}}

\marginparsep=1em
\marginpar{\vspace{-1em}\colorbox{yellow}{\textbf{ALL}}}

\if 0
\maciej{FIX}

Finally, we now derive the expectation
  and the standard deviation of $\widehat{|X \cap Y|}_\bullet$ based on BF. 
  We will use these expressions as building blocks of our estimators for
  the Triangle Counting in Section~\ref{sec:theory}. 

\vspaceSQ{-1.5em}
\begin{equation*}
E\fSB{\widehat{|X \cap Y|}_\bullet} = \delta_{B_{X \cap Y}, b}\; B_{X \cap Y} \left(1 - e^{-\frac{|X \cap Y| b}{B_{X \cap Y}}}\right) 
\end{equation*}

\vspaceSQ{-1.0em}
\begin{equation*}
\sigma_{\widehat{|X \cap Y|}_\bullet} = \delta_{B_{X \cap Y}, b} \; \sqrt{B_{X \cap Y} \left[e^{-\frac{|X \cap Y| b}{B_{X \cap Y}}} - \left(1 + \frac{|X \cap Y| b}{B_{X \cap Y}}\right)e^{-\frac{2|X \cap Y| b}{B_{X \cap Y}}}  \right] }
\end{equation*}

Using the above formulas, we obtain both the expectation and the standard
deviation (or at least a valid bound for each of them) of $\widehat{|X \cap
Y|}_{AND}$, $\widehat{|X \cap Y|}_{OR}$, and $\widehat{|X \cap Y|}_{L}$. We do
so by fixing a specific value for $\delta_{B_{X \cap Y}, b}$ as explained in
Section~\ref{sec:single_sets}. In the specific case of $\widehat{|X \cap
Y|}_{OR}$, we need to substitute $|X \cap Y|$ with $|X \cup Y|$ and $B_{X \cap
Y}$ with $B_{X \cup Y}$ in the expressions above.    
 
We have:

\begin{equation}\label{cbound_int_simple_BF}
\begin{aligned}
P \left( \left| \widehat{|X \cap Y|}_L - |X \cap Y| \right| \geq t \right) \\ \leq \frac{\left\{|X \cap Y| - \frac{B_{X \cap Y}}{b} \left[1 - \left(1 - \frac{1}{B_{X \cap Y}}
  \right)^{b|X \cap Y|}\right]\right\}^{2}}{t^2} \\ + \frac{\min\left(\frac{|X \cap Y|}{4b^{2}}, \frac{B_{X \cap Y} e^{-\frac{|X \cap Y| b}{B_{X \cap Y}}}}{b^{2}}\right)}{t^{2}} \nonumber
\end{aligned}
\end{equation}

\fi

\if 0

\textbf{Discussion} 
%
%
%
Note that the estimator~$\widehat{|X \cap Y|}_{AND}$ uses the count of ones in
a BF $\mathbf{B}_{X \cap Y}$. This number cannot be computed from individual
BFs $\mathbf{B}_X$ and $\mathbf{B}_Y$. In practice, we will use $\mathbf{B}_{X
\cap Y} \approx \mathbf{B}_X \text{ AND } \mathbf{B}_Y$ (where ``AND''
indicates a logical bitwise AND operation) and use the result of AND to obtain
$B_{X \cap Y, 1}$.
This may somewhat increase the false positive probability.
%
%
Contrarily, in the $\widehat{|X \cap Y|}_{OR}$ estimator, we can obtain the
exact value of $B_{X \cup Y, 1}$ using the bitwise OR operator ($\mathbf{B}_{X
\cup Y} = \mathbf{B}_X \text{ OR } \mathbf{B}_Y$). 
%
%
On the other hand, $\widehat{|X \cap Y|}_{OR}$ has a less tight bound than
$\widehat{|X \cap Y|}_{AND}$ because $|X \cap Y| \leq |X \cup Y|$. This holds
since the bound for $\widehat{|X \cap Y|}_{AND}$ is increasing in $|X \cap Y|$
while the bound for $\widehat{|X \cap Y|}_{OR}$ is increasing in $|X \cup Y|$
  (for a fixed $b$, $|X|$, and $|Y|$).
%
%

The tradeoff between $\widehat{|X \cap Y|}_{OR}$ and $\widehat{|X \cap
Y|}_{AND}$ is also related to the accuracy. Specifically, to achieve the same
level of accuracy, $\widehat{|X \cap Y|}_{OR}$ requires a larger size of
participating BFs than in~$\widehat{|X \cap Y|}_{AND}$ to achieve a similar
level of accuracy. This is because, if two BFs $\mathcal{B}_X$ and
$\mathcal{B}_Y$ are close to reaching their capacity, an OR operation over
$\mathbf{B}_X$ and $\mathbf{B}_Y$ will result in a very large number of ones
in~$B_{X \cup Y}$, inflating the count of false positives. However, since in
practice, we are evaluating $\widehat{|X \cap Y|}_{AND}$ on the intersection of
BFs instead of a BF of the intersection, we need larger BFs to compensate for
this difference.

\fi

\enlargeSQ

\subsection{Approximating $|X \cap Y|$ with $k$-Hash}
\label{sec:int_kh_est}

To estimate $|X \cap Y|$ with MinHash, one first uses the definition of the
Jaccard similarity index $J_{X,Y} = {|X \cap Y|}/{|X \cup Y|}$
(cf.~\cref{sec:prob-back}) and, together with the well-known set algebraic
expression $|X \cup Y| = |X| + |Y| - |X \cap Y|$, rewrites it to obtain the
following estimator: 

\vspaceSQ{-0.5em}
\begin{equation}\label{est_int_mh}
\widehat{|X \cap Y|}_{kH} = \frac{\widehat{J_{X,Y}}_{kH}}{1 + \widehat{J_{X,Y}}_{kH}}(|X|+|Y|)
\end{equation}

where $\widehat{J_{X,Y}}_{kH} = \frac{|M_X \cap M_Y|}{k}$ is itself an unbiased
estimator of $J_{X,Y}$ (see \cite{ertl2018bagminhash} for a proof). 
If we assume that the $k$ hash functions are independent and perfectly random (a
usual assumption\tr{\footnote{\scriptsize To satisfy this assumption, one could just store
perfectly random permutations on the set of vertices without increasing
asymptotic complexity.}}), we have $|M_X \cap M_Y| \sim Bin (k,J_{X,Y})$, i.e.,
$|M_X \cap M_Y|$ follows the binomial distribution, where the number of trials
equals the number of hash functions $k$ and the probability of success is the
true Jaccard coefficient (this is valid by the construction of $k$-Hash,
see~\cite{broder2000min}). Thus, we can derive the expectation and the variance
of $\widehat{|X \cap Y|}_{kH}$ adapting the formulas for the moments of a binomial
random variable
\ifconf
(provided in the technical report).
\else
(provided in the Appendix).
\fi

We develop the following concentration bound ({this is the first exponential bound for $\widehat{|X \cap Y|}_{kH}$}):
%

\begin{prop}\label{bound_int_mh}
Let $\widehat{|X \cap Y|}_{kH}$ be the estimator from Eq.~(\ref{est_int_mh}).
Then, an upper bound for the probability of deviation from the true~$|X \cap Y|$,
at a given distance $t \geq 0$, is:

\vspaceSQ{-0.5em}
\begin{equation}\label{eq:mh_int_conc_union}
P \left( \left| \widehat{|X \cap Y|}_{kH} - |X \cap Y| \right| \geq t \right) \leq 2 e^{-\frac{2\;k\;t^2}{(|X| + |Y|)^2}}
\end{equation}
\vspaceSQ{-0.5em}
\end{prop}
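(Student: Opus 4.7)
The plan is to leverage Hoeffding's inequality applied to $\widehat{J_{X,Y}}_{kH}$ and then transfer the concentration back to $\widehat{|X \cap Y|}_{kH}$ through a Lipschitz argument on the transformation $j \mapsto \frac{j}{1+j}(|X|+|Y|)$.

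First, I would observe that, by the distributional fact already stated in the excerpt, $|M_X \cap M_Y| \sim \mathrm{Bin}(k, J_{X,Y})$, so $\widehat{J_{X,Y}}_{kH} = \frac{1}{k}|M_X \cap M_Y|$ is the empirical mean of $k$ i.i.d.\ Bernoulli$(J_{X,Y})$ random variables bounded in $[0,1]$. Hoeffding's inequality then yields, for every $s \geq 0$,
\begin{equation*}
P\!\left(\left|\widehat{J_{X,Y}}_{kH} - J_{X,Y}\right| \geq s\right) \leq 2\,e^{-2ks^2}.
\end{equation*}

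Second, I would define $f(j) := \frac{j}{1+j}(|X|+|Y|)$ and verify that $f(\widehat{J_{X,Y}}_{kH}) = \widehat{|X \cap Y|}_{kH}$ (by construction of the estimator in Eq.~(\ref{est_int_mh})) and, using the inclusion–exclusion identity $|X \cup Y| + |X \cap Y| = |X| + |Y|$, that $f(J_{X,Y}) = |X \cap Y|$. Computing $f'(j) = \frac{|X|+|Y|}{(1+j)^2}$ on $[0,1]$, one sees that $|f'(j)| \leq |X|+|Y|$, so $f$ is Lipschitz with constant at most $|X|+|Y|$; hence
\begin{equation*}
\left|\widehat{|X \cap Y|}_{kH} - |X \cap Y|\right| \leq (|X|+|Y|)\,\left|\widehat{J_{X,Y}}_{kH} - J_{X,Y}\right|.
\end{equation*}

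Combining these two facts, the event $\{|\widehat{|X \cap Y|}_{kH} - |X \cap Y|| \geq t\}$ is contained in $\{|\widehat{J_{X,Y}}_{kH} - J_{X,Y}| \geq t/(|X|+|Y|)\}$, and applying the Hoeffding bound with $s = t/(|X|+|Y|)$ yields the desired inequality (\ref{eq:mh_int_conc_union}). The main obstacle is simply the Lipschitz estimate on $f$; it is routine but worth stating carefully because a naive bound on $f'$ that forgets the constraint $j \in [0,1]$ would give the same global constant $|X|+|Y|$ only when $j \geq 0$, which we do have here, so no extra work is needed.
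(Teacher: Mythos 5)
Your proposal is correct and takes essentially the same route as the paper's own proof: Hoeffding's inequality applied to $\widehat{J_{X,Y}}_{kH} = |M_X \cap M_Y|/k$ (using $|M_X \cap M_Y| \sim \mathrm{Bin}(k, J_{X,Y})$), followed by transferring the concentration through the $1$-Lipschitz map $j \mapsto j/(1+j)$ scaled by $|X|+|Y|$. The paper phrases the Lipschitz step in sub-Gaussian language so that the same argument extends to weighted sums of such estimators (needed later for the triangle-count bounds), but for this single-pair proposition that machinery specializes exactly to your direct event-inclusion argument.
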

\vspaceSQ{-0.5em}


We stress that $\widehat{|X \cap Y|}_{kH}$ derived with $k$-hash
can also be interpreted as a \emph{maximum likelihood estimator (MLE)}
(cf.~\cref{sec:estimators}) for $|X \cap Y|$ because of the invariance property
outlined in~\cref{sec:back-properties} (details are provided together with the
proof). Thus, our estimator inherits all the properties of MLE such as
consistency and asymptotic efficiency. 
Moreover, the bound is \emph{exponential}, i.e., the distance between the
estimator and the true value of $|X \cap Y|$ decreases exponentially.
Finally, we stress that $\widehat{|X \cap Y|}_{kH}$ \emph{is
asymptotically efficient}, i.e., no other estimator (using only this sketch)
can have lower variance (for a fixed input and asymptotically for $k \rightarrow \infty$).

\ifall
Indeed, since $\hat{\theta}_{MLE} = \frac{|M_X \cap
M_Y|}{k}$, thus meaning that the proportion of elements for which $h_{min}(X) =
h_{min}(Y)$ is the MLE for the Jaccard coefficient, we have that $\widehat{|X
\cap Y|_{kH}} = f(\hat{\theta}_{MLE})$. Then, our estimator is merely a function of
$\hat{\theta}_{MLE}$ and, because of the invariance of MLE (e.g., see
\textit{Theorem 7.2.10} in~\cite{casella2002statistical}), this implies that
$\widehat{|X \cap Y|}_{kH}$ inherits all the properties of this class of estimators.
In particular, it is consistent because as the number of hash functions~$k$
increases, we get closer and closer to the true $|X \cap Y|$ and asymptotically
efficient since it reaches the \textit{Cramér-Rao Lower Bound} (e.g.,
\textit{Theorem 7.3.9} in \cite{casella2002statistical}) meaning that no other
estimator can have a lower variance. Moreover, it is also normally distributed,
as $k$ increases, which is useful in general to derive confidence intervals.

\fi


\subsection{Approximating $|X \cap Y|$ with $1$-Hash}
\label{sec:int-1h}




The 1-Hash estimator is similar to $k$-Hash in that we first estimate the
Jaccard similarity itself as $\widehat{J_{X,Y}}_{1H} = \frac{|M^1_X \cap
M^1_Y|}{k}$.
Similarly to the estimator used in $k$-Hash, this is itself an unbiased
estimator of $J_{X,Y}$.
Then, we use it to estimate $|X \cap Y|$: $\widehat{|X \cap Y|}_{1H} =
\frac{\widehat{J_{X, Y}}_{1H}}{1 + \widehat{J_{X, Y}}_{1H}}(|X|+|Y|)$.
\iftr

\else
\fi
\iftr
Recall that the 1-Hash representation of~$X$ differs qualitatively from the
$k$-Hash variant in that (1) $M_X^1$ does not contain duplicates, and (2)
$\mathcal{M}_X^1$ uses only one hash function.
The $k$ elements maintained in a 1-Hash are not independent, as we are in a
\textit{sampling without replacement} scheme\footnote{\scriptsize Contrarily,
$k$-Hash is a \textit{sampling with replacement scheme} and explains why $|M_X
\cap M_Y| \sim Bin (k,J_{X,Y})$ for $k$-Hash}. This also means that $k$-Hash
can have duplicates, which is not possible with 1-Hash. Thus, $| M_X^1 \cap
M_Y^1| \sim Hypergeometric (|X \cup Y|,|X \cap Y|,k)$ where $|X \cup Y|$ is the
population size, $|X \cap Y|$ is the number of success states in the
population, and $k$ is the number of draws. This implies that we can derive the
expectation and the variance of $\widehat{|X \cap Y|}_{1H}$ by adapting the
formulas for the moments of an hypergeometric random variable. We provide the
formulas for the expectation in  
the Appendix. 
\fi
We now provide the same concentration bound as in the case of
$k$-Hash.

\vspaceSQ{-0.5em}
\begin{prop}\label{bound_int_one_h}
Consider $\widehat{|X \cap Y|}_{1H}$. Then, an upper bound for the probability of deviation from the true intersection set size, at a given distance $t \geq 0$, is:

\vspaceSQ{-0.5em}
\vspaceSQ{-0.5em}
\begin{equation}\label{eq:m1h_int_conc_union}
P \left(\fVB{\widehat{|X \cap Y|}_{1H} - |X \cap Y|} \geq t \right) \leq 2 e^{-\frac{2\;k\;t^2}{(|X| + |Y|)^2}}
\end{equation}
\vspaceSQ{-0.5em}
\vspaceSQ{-0.5em}

\end{prop}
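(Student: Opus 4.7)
\textbf{Proof plan for Proposition \ref{bound_int_one_h}.}
My plan is to reduce the concentration statement about $\widehat{|X \cap Y|}_{1H}$ to a concentration statement about the underlying Jaccard estimator $\widehat{J_{X,Y}}_{1H} = |M^1_X \cap M^1_Y|/k$, and then invoke a Hoeffding-type bound valid for sampling without replacement. This mirrors the structure of the proof of Proposition~\ref{bound_int_mh} for $k$-Hash, but with the binomial distribution replaced by the hypergeometric distribution.

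First, I would introduce the auxiliary function $f(j) = \frac{j}{1+j}(|X|+|Y|)$ on $[0,1]$, so that $\widehat{|X \cap Y|}_{1H} = f(\widehat{J_{X,Y}}_{1H})$ and, by the inclusion--exclusion identity $|X \cup Y| = |X| + |Y| - |X \cap Y|$, also $|X \cap Y| = f(J_{X,Y})$. A direct computation gives $f'(j) = (|X|+|Y|)/(1+j)^2$, which is bounded above by $|X|+|Y|$ on $[0,1]$. Since both $\widehat{J_{X,Y}}_{1H}$ and $J_{X,Y}$ lie in $[0,1]$, the mean value theorem yields
\begin{equation*}
\bigl| f(\widehat{J_{X,Y}}_{1H}) - f(J_{X,Y}) \bigr| \le (|X|+|Y|)\, \bigl| \widehat{J_{X,Y}}_{1H} - J_{X,Y} \bigr|,
\end{equation*}
so the event $\{|\widehat{|X \cap Y|}_{1H} - |X \cap Y|| \ge t\}$ is contained in $\{|\widehat{J_{X,Y}}_{1H} - J_{X,Y}| \ge t/(|X|+|Y|)\}$.

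Second, I would attack the Jaccard deviation. Here the key difference from $k$-Hash is that, as noted in Section~\ref{sec:int-1h}, $k\cdot\widehat{J_{X,Y}}_{1H} = |M^1_X \cap M^1_Y|$ is hypergeometric with parameters $(|X \cup Y|, |X \cap Y|, k)$, corresponding to sampling without replacement from the universe $X \cup Y$ with success set $X \cap Y$. This is where the main obstacle lies: Hoeffding's inequality in its standard form applies to sums of independent random variables, whereas the hypergeometric arises from dependent draws. The resolution is the classical result of Hoeffding (1963), which shows that the moment generating function of a sum sampled without replacement from a finite population is dominated by the corresponding with-replacement (i.i.d.\ Bernoulli) MGF. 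Consequently, the same exponential tail bound applies, giving
\begin{equation*}
P\!\left( \bigl| \widehat{J_{X,Y}}_{1H} - J_{X,Y} \bigr| \ge s \right) \le 2 e^{-2 k s^2}
\end{equation*}
for every $s \ge 0$, with each indicator $\mathbf{1}[h(x_i) \in M^1_Y]$ taking values in $[0,1]$.

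Finally, I would combine the two steps by choosing $s = t/(|X|+|Y|)$, which yields
\begin{equation*}
P\!\left( \bigl| \widehat{|X \cap Y|}_{1H} - |X \cap Y| \bigr| \ge t \right) \le P\!\left( \bigl| \widehat{J_{X,Y}}_{1H} - J_{X,Y} \bigr| \ge \tfrac{t}{|X|+|Y|} \right) \le 2 e^{-\frac{2 k t^2}{(|X|+|Y|)^2}},
\end{equation*}
which is exactly the claimed bound. The only non-routine ingredient is the Hoeffding-without-replacement argument; everything else is a Lipschitz reduction identical in spirit to the $k$-Hash proof, which is why the resulting bound has precisely the same form as Eq.~(\ref{eq:mh_int_conc_union}).
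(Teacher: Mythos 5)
Your proposal is correct and follows essentially the same route as the paper's proof: the paper likewise reduces to concentration of $\widehat{J_{X,Y}}_{1H}$ via the $1$-Lipschitzness of $x \mapsto x/(1+x)$ (scaled by $C_1 = |X|+|Y|$, which is exactly your mean-value-theorem step), and handles the hypergeometric distribution of $k\,\widehat{J_{X,Y}}_{1H}$ with Serfling's without-replacement bound, which dominates the Hoeffding-1963 reduction you invoke but yields the same $2e^{-2ks^2}$ tail. The only cosmetic difference is that the paper packages the argument in sub-Gaussian language so it can later sum many such estimators for the triangle-count theorem; for the single-pair statement this is equivalent to your direct computation.
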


\marginparsep=2em
\marginpar{\vspace{0em}\colorbox{yellow}{\textbf{R-5}}\\ \colorbox{yellow}{(minor}\\ \colorbox{yellow}{comm-}\\ \colorbox{yellow}{-ent 10)}}

The bound suggests that \hl{1-Hash can be better in practice than $k$-Hash. They both have exponential bounds but 1-Hash requires hashing elements using only
\emph{one} hash function. Thus, it is faster to compute}.
%

\ifall
\cesar{below the estimator of Patrick, maybe we can put it in the appendix as
it maybe interesting with the new hyper geometric formulation}

To be able to control the accuracy of the estimator, we use a
parameter $r \in (0;1]$ and we set the number of maintained elements $k$ to $k
= r |X|$.  The parameter $r \in (0,1]$ steers the accuracy of this set
representation.

\begin{gather}
\widehat{|X \cap Y|} = \frac{| M_X^1 \cap M_Y^1|}{r^2}
\end{gather}

\cesar{In this estimator presented above it is not clear how to choose this $r$ parameter. Also the first estimator that I derived is not much distant from a simple random sample so we need to check that it does not exist already. If not it seems quite easy to implement and it has an interesting bound. What do you think Patrick as you have worked on this also?}

\fi

\enlargeSQ

\marginparsep=2em
\marginpar{\vspace{-1em}\colorbox{yellow}{\textbf{ALL}}}

\subsection{\hl{Analysis of Accuracy of $\widehat{|X \cap Y|}$}}
\label{sec:intersect_summary-acc}

\marginparsep=2em
\marginpar{\vspace{-1em}\colorbox{yellow}{\textbf{R-5}}\\ \colorbox{yellow}{(minor}\\ \colorbox{yellow}{comm-}\\ \colorbox{yellow}{-ent 11)}}

We summarize \hl{our} theory developments into estimating $|X \cap Y|$ in
Table~\ref{tab:estimators-summary} (estimators) and
Table~\ref{tab:bounds-summary} (bounds). These results are also applicable to
general estimators of $|X|$
(cf.~\cref{sec:estimators}) and we also show them in the table.
We provide deviation bounds for all PG estimators. The estimator for $k$-hash
is an MLE. Moreover, the $k$-Hash and $1$-Hash
bounds are exponential. This means that the estimates are unlikely to deviate much from the true value.  
\if 0
Simultaneously, as we will discuss in Section~\ref{sec:design}, the BF
estimators are straightforwardly parallelizable and vectorizable.
%
%
This indicates that BF may be more desirable if high performance is of top
priority, while MinHashes are more advantageous whenever top accuracy is
required.  We will discuss in more detail this tradeoff in
Section~\ref{sec:eval}.
\fi

\begin{table}[h]
\vspaceSQ{-0.5em}
\centering
\setlength{\tabcolsep}{5pt}
\scriptsize
%
\begin{tabular}{@{}llllllll@{}}
\toprule
\makecell[c]{\textbf{Result}} &
\makecell[c]{\textbf{Where}} &
\makecell[c]{\textbf{Class}} &
\makecell[c]{\textbf{AU}} &
\makecell[c]{\textbf{CN}} &
\makecell[c]{\textbf{ML}} &
\makecell[c]{\textbf{IN}} &
\makecell[c]{\textbf{AE}} \\
\midrule
$\widehat{|X|}_S$ & Eq.~(\ref{eq:est_ss_bf}) & BF & \faThumbsOUp\ \faStar & \faThumbsOUp\ \faStar & \faTimes & \faTimes & \faTimes \\
\if 0
$\widehat{|X|}_L$ \faStar & Eq.~(\ref{eq:est_simple_bf}) & BF & \faThumbsOUp\ \faStar & \faThumbsOUp\ \faStar & \faTimes & \faTimes & \faTimes \\
\fi
$\widehat{|X \cap Y|}_{AND}$ \faStar & Eq.~(\ref{eq:bf_int}) & BF & \faThumbsOUp\ \faStar & \faThumbsOUp\ \faStar & \faTimes & \faTimes & \faTimes  \\
\if 0
$\widehat{|X \cap Y|}_{OR}$ & Eq.~(\ref{eq:bf_int_union}) & BF & \faThumbsOUp\ \faStar & \faThumbsOUp\ \faStar & \faTimes & \faTimes & \faTimes  \\
\fi
$\widehat{|X \cap Y|}_{L}$ \faStar & \cref{sec:int_bf_est} & BF & \faThumbsOUp\ \faStar & \faThumbsOUp\ \faStar & \faTimes & \faTimes & \faTimes  \\
$\widehat{|X \cap Y|}_{kH}$ & Eq.~(\ref{est_int_mh}) & $k$-Hash & \faThumbsOUp\ \faStar & \faThumbsOUp\ \faStar & \faThumbsOUp\ \faStar & \faThumbsOUp\ \faStar & \faThumbsOUp\ \faStar  \\
$\widehat{|X \cap Y|}_{1H}$ & \cref{sec:int-1h} & $1$-Hash & \faThumbsOUp\ \faStar & \faThumbsOUp\ \faStar & \faTimes & \faTimes & \faTimes  \\
\bottomrule
\end{tabular}
%
\vspaceSQ{-0.5em}
\caption{\textmd{Summary of theoretical results (estimators) related to
$\widehat{|X|}$ and $\widehat{|X \cap Y|}$.  ``\faStar'': a new result provided
in this work (a new estimator or proving a certain novel property of a given
estimator).
``\textbf{CN}'': a consistent estimator.
``\textbf{AU}'': an asymptotically unbiased estimator.
``\textbf{ML}'': an MLE estimator.
``\textbf{IN}'': an invariant estimator.
``\textbf{AE}'': an asymptotically efficient estimator.
}}
\vspaceSQ{-0.5em}
\label{tab:estimators-summary}
\end{table}

\begin{table}[h]
\centering
\scriptsize
%
\begin{tabular}{@{}llllll@{}}
\toprule
\makecell[c]{\textbf{Result}} &
\makecell[c]{\textbf{Where}} &
\makecell[c]{\textbf{Class}} &
\makecell[c]{\textbf{Q}} & 
\makecell[c]{\textbf{MS}} & 
\makecell[c]{\textbf{CO}} \\ 
\midrule
$\widehat{|X|}_S$ \faStar & Eq.~(\ref{eq:est_ss_bf}) & BF & P \faStar & \faThumbsOUp & \faThumbsOUp \\
\if 0
$\widehat{|X|}_L$ \faStar & Eq.~(\ref{cbound_ss_simple_BF}) & BF & P \faStar & \faThumbsOUp & \faThumbsOUp \\
\fi
$\widehat{|X \cap Y|}_{AND}$ \faStar & Eq.~(\ref{bound_int_BF-eq}) & BF & P \faStar & \faThumbsOUp & \faThumbsOUp \\
\if 0
$\widehat{|X \cap Y|}_{OR}$ \faStar & \cref{sec:int_bf_est} & BF & P \faStar & \faThumbsOUp & \faThumbsOUp \\
\fi
$\widehat{|X \cap Y|}_{L}$ \faStar & \cref{sec:int_bf_est} & BF & P \faStar & \faThumbsOUp & \faThumbsOUp \\
$\widehat{|X \cap Y|}_{kH}$ \faStar & Eq.~(\ref{eq:mh_int_conc_union}) & $k$-Hash & E \faStar & \faTimes & \faThumbsOUp \\
$\widehat{|X \cap Y|}_{1H}$ \faStar & Eq.~(\ref{eq:m1h_int_conc_union}) & $1$-Hash &  E \faStar & \faTimes & \faThumbsOUp \\
\bottomrule
\end{tabular}
%
\vspaceSQ{-0.5em}
\caption{\textmd{Summary of theoretical results (bounds) related to
$\widehat{|X|}$ and $\widehat{|X \cap Y|}$ .
``\faStar'': a new result provided in this work.
``\textbf{Q}'': the quality of a given bound, ``\textbf{P}'': polynomial, ``\textbf{E}'': exponential.
``\textbf{MS}'': an MSE bound.
``\textbf{CO}'': a concentration bound.
}}
\vspaceSQ{-1.5em}
\label{tab:bounds-summary}
\end{table}

\enlargeSQ

\marginparsep=2em
\marginpar{\vspace{-1em}\colorbox{yellow}{\textbf{ALL}}}

\subsection{\hl{Analysis of Parallelism in $\widehat{|X \cap Y|}$}}
\label{sec:intersect_summary-par}

In Table~\ref{tab:queries-int}, we provide a work-depth analysis of parallelism
in different estimators, when applied to intersecting vertex neighborhoods.
For the exact intersection applied to CSR, we use two variants: merge (more
advantageous when neighborhoods are similar in size) and galloping (used when
neighborhoods vary in size by a large factor); the exact schemes and work/depth
are provided in numerous works~\cite{shun2015multicore, besta2021sisa,
besta2021graphminesuite}.
\emph{Importantly, using PG gives asymptotic advantages in both work
and depth over CSR}.
Work and depth in BF are dominated by -- respectively -- the bitwise AND over participating bit
vectors (taking $O(B_X / W)$ work) and the final sum of 1s over the resulting bitvector
(taking $O(\log B_X/W)$ depth using a simple binary tree reduction).
Note that $B_X$ is always expressed in bits and thus we divide it with the SIMD
width (or plain memory word size)~$W$ to obtain the actual counts of operations.
MH representations are series of up to $k$ vertex IDs and thus they use standard intersections.
Both BF and MH based intersection have attractive work and depth
as $\log k$ and $\log (B_X / W)$ are in practice very small.

\begin{table}[h]
\vspaceSQ{-1em}
\centering
\setlength{\tabcolsep}{1pt}
\scriptsize
%
\begin{tabular}{@{}llllll@{}}
\toprule
& \makecell[c]{\textbf{CSR} (merge)} &
\makecell[c]{\textbf{CSR} (gallop.)} &
\makecell[c]{\textbf{BF}} &
\makecell[c]{\textbf{$k$--Hash}} &
\makecell[c]{\textbf{1--Hash}} \\ 
\midrule
\textbf{Work:} &
\makecell[l]{$O(d_u+d_v)$} & 
\makecell[l]{$O(d_u \log d_v)$} & 
$O\fRB{\frac{B_X}{W}}$ & 
$O(k)$ & 
$O(k)$ \\ 
\textbf{Depth:} &
$O(\log (d_u + d_v))$ &
$O(\log (d_u + d_v))$ &
$O\fRB{\log \fRB{\frac{B_X}{W}}}$ &
$O(\log k)$ &
$O(\log k)$ \\
\bottomrule
\end{tabular}
\vspaceSQ{-0.5em}
\caption{\textmd{Work and depth of simple parallel algorithms for deriving $|N_u \cap N_v|$
(cardinality of the result of intersecting neighborhoods of vertices $u, v$).
}}
\vspaceSQ{-1.5em}
\label{tab:queries-int}
\end{table}

\section{Using ProbGraph with Graph Algorithms}
\label{sec:using}

\enlargeSQ
\enlargeSQ
\if 0
The ProbGraph implementation consist of the following modules: (1) probabilistic set
representations (currently BF, $k$-Hash, and 1-Hash) and any associated schemes,
e.g., iterating over a set or checking set cardinality, (2) algorithms for
deriving estimations of $|N_u \cap N_v|$, and (3) graph
algorithms solving problems from Section~\ref{sec:algs} approximately, based on
ProbGraph estimators $\widehat{|N_u \cap N_v|}$.
\fi
We carefully design and implement PG as an easy-to-use library 
offering different set representations.
To use PG, the user (1) creates selected probabilistic representations
of vertex neighborhoods (BF, $k$-Hash, or 1-Hash), (2) plugs in PG 
routines for obtaining $\widehat{|X \cap Y|}$ in place of the exact set
intersections. For example, to use PG with graph algorithms from
Section~\ref{sec:algs}, one replaces the operations indicated with blue
color with PG routines.
As an example, in Listing~\ref{lst:example}, we compare obtaining Jaccard
similarity of two neighborhoods with an exact scheme and with a PG 
routine based on BF.
We ensure that one can flexibly select an arbitrary estimator
$\widehat{|X \cap Y|}$ because -- as our evaluation 
(Section~\ref{sec:eval}) shows -- no single representation works best in all
cases.

\begin{lstlisting}[language=c++, float=h, aboveskip=-0.5em,belowskip=-2.5em,abovecaptionskip=0.0em,label=lst:example, caption=Obtaining exact and approximate Jaccard (see Listing \ref{lst:sim})]
//|\textbf{Input}|: Graph $G$, two vertices $u$ and $v$ 
//Create a standard CSR graph with $G$ as the input graph
CSRGraph g = CSRGraph($G$); 
//Create a ProbGraph representation of $G$ based on Bloom filters
ProbGraph pg = ProbGraph(g, BF, 0.25); //Use the 25% storage budget

//Derive the exact intersection cardinality $|N_u \cap N_v|$
int interEX = pg.int_card(g.N(u), g.N(v)); 
//Derive the estimator $\widehat{|N_u \cap N_v|}_{AND}$
int interBF = pg.int_BF_AND(pg.N(u), pg.N(v)); 

//Compute the exact Jaccard coefficient between $u$ and $v$
double jacEX = interEX / (g.N(u).size() + g.N(v).size() - interEX)
//Compute the approximate Jaccard coefficient based on BF 
double jacBF = interBF / (g.N(u).size() + g.N(v).size() - interPG)
\end{lstlisting}

\subsection{Tradeoffs Between Storage, Accuracy, \& Performance}
\label{sec:storage-b}

\marginparsep=0.5em
\marginpar{\vspace{0em}\colorbox{yellow}{\textbf{R-5}}\\ \colorbox{yellow}{(minor}\\ \colorbox{yellow}{comm-}\\ \colorbox{yellow}{-ent 12)}}

\hl{Each probabilistic set representation considered in PG} offers a tradeoff between
performance, storage, and accuracy. In general, the smaller a representation is, the
faster to process it becomes and the less storage it needs, but also
the less accurate it becomes.
\hl{To control this tradeoff, we introduce a generic parameter~$s$ that enables
explicit control of the storage budget. 
$s \in [0;1]$ specifies how much additional memory (on top
of the storage needed for the default CSR graph representation) is needed to
maintain the PG estimators.}
In evaluation, we do not exceed more than 33\% of the additional needed storage.

\marginparsep=1em
\marginpar{\vspace{-5em}\colorbox{yellow}{\textbf{R-3}}\\ \colorbox{yellow}{\textbf{R-4}}}

\if 0
Note that -- while BF does require storing the original CSR representation
at all times -- it is not required for MH. Specifically, once a BF is created out
of a given set $X$, it is no longer possible to iterate over the elements of the
set. Contrarily, 
\fi

%
\if 0
Moreover, in case the user places more importance on the accuracy and not storage,
we provide a parameter $a \in [0;1]$ 
that specifies the size 
$B_X = a \cdot d$ (the size of a BF) and $k = t \cdot d$ (the number of
elements stored in a MH or KMV), where $d$ is the maximum degree. Setting $t$
to different values results in different accuracy/performance/storage
tradeoffs.
%
\fi


\section{Design \& Implementation}
\label{sec:design}

\enlargeSQ

%
Each BF is implemented as a
simple bit vector.
$\widehat{|X \cap Y|}$ can then be computed using bitwise AND over $X$ and
$Y$, with Eq.~(\ref{eq:bf_int}).
Computing $B_{X \cap Y}$ can easily be {parallelized}
and accelerated with {vectorization}~\cite{besta2017slimsell}: the problem is
embarrassingly parallel and the bitwise AND is supported with SIMD
technologies such as AVX deployed in Intel CPUs, GPUs, and others. 
We also use the \texttt{popcnt} CPU
instructions~\cite{mula2017faster} to speed up deriving the number of ones in a
bit vector (1-bits), needed to obtain $B_{X \cap Y,1}$ in Eq.~(\ref{eq:bf_int});
\texttt{popcnt} counts
the number of 1-bits in one memory word \emph{in one CPU cycle}.

\if 0
However, computing $X \cap Y$ or $X \cup Y$ is only possible if $|X| = |Y|$ and
the used hash functions are identical. As today's graphs often have skewed
degree distributions, using the same BF size for all neighborhoods in a graph
(referred to as the \textbf{basic} BF scheme) can lead to BFs that are
excessively large for small neighborhoods (large memory usage), or small BFs
that fail to adequately represent large neighborhoods (large rates of false
positives). The basic scheme is useful for graphs with low or moderate skews
in degree distributions.
For other cases, we enhance the basic scheme.

In a \textbf{variable BF} scheme, we use BFs of different
pre-determined sizes; one size for a certain range of neighborhoods of
similar sizes. If the skew is moderate, we use two sizes: 
small BFs for neighborhoods of sizes below a selected value, and larger
BFs for remaining neighborhoods. For high skews, we use three BF sizes.
To get $X \cap Y$ or $X \cup Y$, if $X$ and $Y$ have
corresponding BFs of identical sizes, one uses the fast
parallelized bitwise AND or OR over BFs. Otherwise, we compute $\cap$ and $\cup$ over original
neighborhoods.

We also developed a \textbf{blocked BF} scheme. Here, for each
neighborhood~$Y$, we insert all vertices in~$Y$ with IDs belonging to the
range $[i \cdot x, i \cdot x + x - 1]$ into a \emph{separate} BF; $x$ is a
parameter that controls the BF size.  Every neighborhood $Y$ in a graph then
consists of a series of such BFs, stored contiguously in memory. If a certain
range does not contain any vertex IDs, the corresponding empty BF is not
constructed. Thus, to properly identify the consecutive BFs, we also store
sequence numbers~$i$ with each BF.
Now, the key insight is that when computing $|X \cap Y|$ with $X$ and $Y$
represented as series of BFs, one can intersect each BF used in $X$ with each
BF used in $Y$ as these BFs have same sizes. Then, the final 
cardinality is obtained by summing the cardinalities of all such partial
intersections of BFs.
This scheme offers lower performance gains (due to overheads of managing
smaller BFs), but it minimizes storage overheads.

\fi

\ifall\maciej{full vers}
First, in a \textbf{variable BF} scheme, we use BFs of different
pre-selected sizes. For a graph with highly skewed
degree distribution, we use three BF sizes adjusted for three ranges of
vertex neighborhood sizes. If the skew is moderate, we use two sizes of BFs
(i.e., small BFs for neighborhoods of sizes below a selected value, and larger
BFs for remaining neighborhoods).
Then, whenever computing $X \cap Y$ or $X \cup Y$, if $X$ and $Y$ have
corresponding BFs of identical sizes, the computation is done using the fast
parallelized bitwise AND or OR over BFs. Otherwise, we compute $X \cap Y$ or $X
\cup Y$ using fast parallel $\cap$ and $\cup$ implementations over original
neighborhoods.
A in \textbf{hybrid BF} variant of this approach, we do not use BFs at all
for selected smallest neighborhoods: this increases accuracy while still
  brings performance benefits, as intersections over large neighborhoods
  are conducted with fast BF implementations.

Second, we use a \textbf{multiple BF} scheme. Here, one pre-computes multiple
BFs (of different sizes) for a single vertex neighborhood. Then, whenever small
sets are intersected, one can use small associated BFs, which is faster than
using larger ones. If at least one of the two sets is large, we use larger BFs
in order to maximize accuracy. Compared to the variable scheme, the multiple
scheme increases memory consumption but it simultaneously enhances performance
and accuracy. 

Then, we developed a \textbf{blocked BF} scheme. Here, for each neighborhood~$Y$, we insert all
vertices in~$Y$ with IDs belonging to the range $[i \cdot x, i \cdot x + x -
1]$ into a \emph{separate} BF; $x$ is a parameter that controls the BF size.
Every neighborhood $Y$ in a graph then consists of a series of such BFs,
stored contiguously in memory. If a certain range does not contain any vertex
IDs, the corresponding empty BF is not constructed. Thus, to properly identify
the consecutive BFs, we also store sequence numbers~$i$ together with each BF.
Now, the key insight is that when computing $|X \cap Y|$ with $X$ and $Y$
represented as series of BFs, one can intersect each BF used in $X$ with each
BF used in $Y$ because all these BFs have identical sizes. Then, the resulting
cardinality is obtained by summing the cardinalities of all such partial
intersections of BFs.
This scheme offers lower performance gains (due to overheads of managing
smaller BFs), but it minimizes storage overheads.
\fi

\if 0
\subsection{Design Details of MinHash}
\label{sec:khash_impl}
\fi

\if 0
\vspaceSQ{-0.5em}
\paragraph*{Design Details of MinHash and KMV}
\fi
%
%
1--Hash and $k$--Hash are both series of integers. The estimators for $|X
\cap Y|$ based on 1--Hash and $k$--Hash are dominated by intersecting
sets of $k$ numbers. As $k \ll d$, it is
much faster than the corresponding operations on exact neighborhoods. 
\if 0
Specifically, depending on set sizes, we use {two variants of $|X \cap Y|$}.  First, if $X$ and $Y$ have
similar sizes ($d_u \approx d_v$), one prefers a \emph{merge variant} where
one simply iterates through $X$ and $Y$, identifying common elements (time
complexity of $O(d_u + d_v)$ as $X$ and $Y$ are sorted). If one set is much
smaller than the other ($d_u \ll d_v$), it is better to use a \emph{binary
search variant}, in which one iterates over the elements of a smaller set and
uses a binary search to check if each element is in the bigger set (time
complexity of $O(d_u \log d_v)$)~\cite{besta2021sisa}.
\fi

%
%
\if 0
$k$--Hash comes with similar issues to BF as one must use same-size
representations for all graph neighborhoods in order to use the estimator for
$|X \cap Y|$. To alleviate this, we similarly developed blocked and variable
variants.
Contrarily, in 1--Hash one can \emph{straightforwardly intersect
representations with different sizes}. This enables generating approximate
representations of different neighborhoods without having to alleviate memory
overheads. Still, 1--Hash may entail lower accuracy.
%
%
\fi
\if 0
The design of KMV sets is similar to those of $k$--Hash. The main difference is
that in a $k$--Hash set, we save the $k$ elements in the set with the minimum
hashes while in KMV we maintain hashes themselves. 
\fi
%
%
\if 0
Similarly to BF, we developed blocked and
variable variants of KMV sets to address the intersection of the
representations of sets of different sizes.
\fi

\ifall
\patrick{With 1-Hash the size of a sketch is linear in the
size of set it represents. This allows high accuracy with low
storage overhead. The problem that we need to maintain large
sketches for small sets when using BF/$k$-Hash is addressed
in section 8, therefore I’m not sure if we want to mention this
here.}
\fi


\if 0
\vspaceSQ{-0.5em}
\paragraph*{Traditional Schemes for $|X \cap Y|$}
\fi

\if 0
We also provide exact routines for $|X \cap Y|$. They are used in comparison
baselines (i.e., algorithms that use a standard CSR to represent a graph).
Moreover, we use them in some ProbGraph variants, namely, when using MinHash,
where sets also contain vertex IDs.
\fi

\enlargeSQ

\subsection{Parallel Construction}
\label{sec:constr-costs}

\marginparsep=2em
\marginpar{\vspace{1em}\colorbox{yellow}{\textbf{R-5}}\\ \colorbox{yellow}{(minor}\\ \colorbox{yellow}{comm-}\\ \colorbox{yellow}{-ent 13)}}

\hl{Table~\mbox{\ref{tab:constr}} provides work and depth of constructing all probabilistic set
representations used in PG}.
\hl{As with the intersection computation, the construction process} is
also parallelizable, exhibiting very low depth.
During evaluation (Section~\ref{sec:eval}), we show that
it also does not pose a bottleneck in practice.

\begin{table}[h]
\vspaceSQ{-0.5em}
\centering
\setlength{\tabcolsep}{3pt}
\scriptsize
\footnotesize
%
\begin{tabular}{@{}llll@{}}
\toprule
\makecell[c]{\textbf{Representation}\\ \textbf{of $N_v$}} &
\makecell[c]{\textbf{$\ddagger$ Size}\\ {[bits]}} &
\makecell[c]{\textbf{Construction}\\ \textbf{(work)}} &
\makecell[c]{\textbf{Construction}\\ \textbf{(depth)}} \\ 
\midrule
%
%
%
BF & $B_X$ & $O(b d_v)$ & $O(\log(b d_v))$ \\
$k$-Hash  & $W k$  & $O(k d_v)$ & $O(\log d_v)$ \\ 
1-Hash  & $W k$  & $O(d_v)$ & $O(\log d_v)$ \\
\bottomrule
\end{tabular}
%
\vspaceSQ{-0.5em}
\caption{\textmd{Work/depth of simple algorithms for constructing
a probabilistic PG set representation of a given neighborhood $N_v$.
In BF, one must iterate over all $b$ hash functions and all $d_v$
neighbors, thus the work is dominated by $b d_v$ (cf.~\cref{sec:prob-back}). All the hash function
evaluations can run in parallel, but -- in the worse case -- they
may write to the same cell in the BF bit vector, giving depth $O(\log (b d_v))$
(parallelization with a binary tree reduction).
Derivations for MH are similar; the work and depth are dominated by evaluations of hash functions
and by finding $k$ smallest elements among $d_v$ ones, respectively.}}
\vspaceSQ{-0.0em}
\label{tab:constr}
\end{table}

\subsection{Parallelism in ProbGraph-Enhanced Graph Algorithms}
\label{sec:par-algs-wd}

Parallelization of graph algorithms enhanced with ProbGraph is straightforward
and is based on the listings from Section~\ref{sec:algs}.  Specifically, all
the loops marked with \texttt{[in par]} can be executed in parallel. Then, all
the instances of set intersection cardinality are executed using a
user-specified PG estimator.
The parallel execution of these estimators (cf.~\cref{sec:intersect_summary-par}
and Table~\ref{tab:queries-int}) enables better work and depth of graph mining
algorithms than with the default CSR implementation. We illustrate this in
Table~\ref{tab:wd-algs}.
Here, work and depth of CSR based routines are standard results known from
extensive works in parallel algorithm design~\cite{besta2017push, besta2021enabling,
shun2015multicore, blelloch1990pre, blelloch2010parallel}.
For example, in TC, the two outermost loops can be executed fully in parallel,
and the nested set intersection dominates depth ($d$ is the maximum degree in
a graph).
Both work and depth for PG baselines are derived by replacing the
nested exact $|X \cap Y|$ operation with the corresponding PG schemes and
results from Table~\ref{tab:queries-int}.
These asymptotic advantages are supported with empirical outcomes detailed
in Section~\ref{sec:eval}.

\begin{table}[t]
\vspaceSQ{-1em}
\centering
\setlength{\tabcolsep}{2pt}
\scriptsize
\footnotesize
%
\begin{tabular}{@{}llll@{}}
\toprule
& \makecell[c]{\textbf{CSR}} &
\makecell[c]{\textbf{PG (BF)}} &
\makecell[c]{\textbf{PG (MH)}} \\
\midrule
\textbf{Triangle Counting (work):} &
$O\fRB{n d^2}$ &
$O\fRB{\frac{nd B_X}{W}}$ &
$O\fRB{ndk}$ \\
\textbf{Triangle Counting (depth):} &
$O\fRB{\log d}$ &
$O\fRB{\log \fRB{\frac{B_X}{W}}}$ &
$O\fRB{\log k}$ \\
\textbf{4-Clique Counting (work):} &
$O\fRB{n d^3}$ &
$O\fRB{\frac{nd^2 B_X}{W}}$ &
$O\fRB{n d^2 k}$ \\
\textbf{4-Clique Counting (depth):} &
$O\fRB{\log^2 d}$ &
$O\fRB{\log d \log \fRB{\frac{B_X}{W}}}$ &
$O\fRB{\log^2 k}$ \\
\textbf{Clustering~(work):} &
$O\fRB{n d^2}$ &
$O\fRB{\frac{nd B_X}{W}}$ &
$O\fRB{ndk}$ \\
\textbf{Clustering~(depth):} &
$O\fRB{\log d}$ &
$O\fRB{\log \fRB{\frac{B_X}{W}}}$ &
$O\fRB{\log k}$ \\
\textbf{Vertex sim.~(work):} &
$O\fRB{d^2}$ &
$O\fRB{\frac{B_X}{W}}$ &
$O\fRB{k}$ \\
\textbf{Vertex vim.~(depth):} &
$O\fRB{\log d}$ &
$O\fRB{\log \fRB{\frac{B_X}{W}}}$ &
$O\fRB{\log k}$ \\
\bottomrule
\end{tabular}
\vspaceSQ{-0.5em}
\caption{\textmd{Advantages of ProbGraph in work and depth over exact baselines.}}
\vspaceSQ{-1.5em}
\label{tab:wd-algs}
\end{table}

\enlargeSQ

\subsection{Implementation Details and Infrastructure}


We use the GMS platform~\cite{besta2021graphminesuite}, a
high-performance parallel graph mining infrastructure, for implementing
the baselines. {Loading graphs from disk} and building the 
CSR representations is done with the GAP Benchmark Suite~\cite{beamer2015gap}.
We use the MurmurHash3 {hash function}~\cite{Appleby:2016}, well-known
for its speed and simplicity. We use the current time in milliseconds as a
  random seed.
For {parallelization}, we use OpenMP~\cite{chandra2001parallel}.
%
%
\hl{The whole implementation is available online.}\footnote{\hl{Link will be available upon publication due to double blindness.}}

\marginparsep=1em
\marginpar{\vspace{-2.5em}\colorbox{yellow}{\textbf{R-3}}}

\marginparsep=1em
\marginpar{\vspace{25em}\colorbox{yellow}{\textbf{R-3}}}

\enlargeSQ

\section{Theoretical Analysis of Accuracy}
\label{sec:theory}

We now illustrate that ProbGraph enables obtaining not only strong theoretical
accuracy guarantees on the set intersection cardinality, but also on graph
properties.
As an example, we now use our estimators $\widehat{|X \cap Y|}$ to develop
estimators~$\widehat{TC}$ for triangle count $TC$, and to derive its
concentration bounds. 

As shown in Listing~\ref{lst:tc}, TC can be obtained by summing intersections
$|N_u \cap N_v|$ of neighborhoods for each pair of adjacent vertices $u$ and
$v$.  Hence, to estimate TC, we simply sum cardinalities $\widehat{|N_u \cap
N_v|}$ for each edge~$(u,v)$ in a given graph.
This gives the following estimator:

\vspaceSQ{-1em}
\begin{gather*}
\widehat{TC}_{\star} = \frac{1}{3} \sum_{(u,v) \in E} \widehat{|N_u \cap N_v|}_{\star}
\end{gather*}
\vspaceSQ{-0.5em}

\noindent
where $\star$ indicates a specific $\widehat{|X \cap Y|}_{\star}$
estimator (cf.~Table~\ref{tab:estimators-summary}).

\if 0
To obtain $\widehat{TC}$, we need all bounds on
$\widehat{|X \cap Y|}$ to be \emph{uniform}, that is, to \emph{not} depend on
the true set intersection size~$|X \cap Y|$, and express this bound such that
the validity \maciej{formally, what does it mean?} for every $\widehat{|X \cap
Y|}$ in the sum is ensured. Then, we have all the conditions to derive a bound
thanks to the following theorem. 
\fi


\begin{theorem} \label{thm:tc_deviations} Let $\widehat{TC}_\star$ be the
estimator of the number of triangles.
(cf.~Section~\ref{sec:algs}). Then, depending on the underlying estimator
$\widehat{|X \cap Y|}_{\star}$, we have the following cases: 

For the \textbf{Bloom Filter} AND estimator, if $b \Delta \leq 0.499 B_X \log B_X$, then we have the following bound

\small
\footnotesize
\[
P\left(\left|TC - \widehat{TC}_{AND}\right| \geq t\right) \leq  \frac{2\;m^2 (1+o(1)) \left(e^{\frac{\Delta b }{ B_X-1}} \frac{B_X}{b^2} - \frac{B_X}{b^2} - \frac{\Delta}{b}\right)}{9\;t^{2}}
\]
\normalsize

In the case of both \textbf{1-Hash} and \textbf{$k$-Hash} (below, we use
the notation for 1-Hash), we have
%
%
\small
\begin{gather*}
	P\left(\left|TC - \widehat{TC}_{1H}\right| \geq t\right) \leq 2 \exp \left( - \frac{18\;k\;t^2}{\left(\sum_{v \in V} d(v)^2\right)^2} \right)
\end{gather*}
\normalsize

Moreover, if the maximum degree is $\Delta$, then
%
%
\small
\begin{gather*}
	P\left(\left|TC - \widehat{TC}_{1H}\right| \geq t\right) \leq 2 \exp \left( - \frac{9\;k\;t^2}{4\left(\Delta+1\right) \sum_{v \in V} d(v)^3} \right)
\end{gather*}
\normalsize
%

\if 0
\color{red} Cesare: the BF bounds below should be checked by Jakub. \color{black}
\fi

\if 0

For the \textbf{Bloom Filter} OR estimator, if $2 b \Delta \leq 0.499 B_X \log B_X$, where $\Delta$ is the maximum degree of the input graph, we have

\footnotesize
\[
P\left(\left|TC - \widehat{TC}_{OR}\right| \geq t\right) \leq  \frac{m^2 (1+o(1)) \left(e^{\frac{2\Delta b }{ (B_X-1)}} \frac{B_X}{b^2} - \frac{B_X}{b^2} - \frac{2\Delta}{b}\right)}{9\;t^{2}}
\]
\normalsize

\fi

\end{theorem}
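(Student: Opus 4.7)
The plan is to lift the per-edge cardinality bounds of Section~IV to a bound on the triangle count via aggregation. The identity $\widehat{TC}_\star = \tfrac{1}{3}\sum_{(u,v)\in E}\widehat{|N_u \cap N_v|}_\star$ reduces the theorem to a concentration question for a sum of (dependent) estimators over the $m$ edges of $G$. The three cases then split according to the type of concentration inequality each underlying representation supports.

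For the Bloom-filter case I would apply Chebyshev's inequality after a second-moment computation. Writing $TC-\widehat{TC}_{AND}=\tfrac{1}{3}\sum_e X_e$ with $X_e = |N_u\cap N_v|-\widehat{|N_u\cap N_v|}_{AND}$, Cauchy--Schwarz gives $(TC-\widehat{TC}_{AND})^2 \leq \tfrac{m}{9}\sum_e X_e^2$, and taking expectations yields $E[(TC-\widehat{TC}_{AND})^2] \leq \tfrac{m^2}{9}\max_e E[X_e^2]$. The per-edge expectation is bounded by Proposition~\ref{bound_int_BF} evaluated at the worst case $|N_u\cap N_v|\leq\Delta$; monotonicity of that MSE bound in $|X\cap Y|$ (valid under the hypothesis $b\Delta\leq 0.499\,B_X\log B_X$) completes this step, and Chebyshev then produces the stated polynomial-in-$t$ bound.

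For $k$-Hash (and analogously $1$-Hash) I would apply McDiarmid's bounded differences inequality to $\widehat{TC}$ viewed as a function of the $k$ independent hash functions (respectively, the $n$ independent hash values $h(v)$). Replacing a single hash $h_i$ shifts each per-edge Jaccard estimate $\widehat{J}_{u,v}$ by at most $1/k$; because $f(J)=J/(1+J)$ is $1$-Lipschitz, the per-edge intersection estimator moves by at most $(|N_u|+|N_v|)/k$. Summing over edges and using the identity $\sum_{(u,v)\in E}(|N_u|+|N_v|)=\sum_v d_v^2$, the uniform bounded-differences constant is $c_i=\tfrac{1}{3k}\sum_v d_v^2$, and McDiarmid delivers the first MH bound $2\exp\bigl(-\tfrac{18\,k\,t^2}{(\sum_v d_v^2)^2}\bigr)$. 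For the refined bound in terms of $\sum_v d_v^3$, I would couple $(|N_u|+|N_v|)^2\leq 2(d_u^2+d_v^2)$ with $\sum_{(u,v)\in E}(d_u^2+d_v^2)=\sum_v d_v^3$, combined with a Bernstein-style variance-aware refinement of McDiarmid that replaces the worst-case sensitivity by its second moment, and the a priori bound $|N_u|+|N_v|\leq 2(\Delta+1)$ to supply the $(\Delta+1)$ factor.

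The main obstacle I anticipate is that McDiarmid's inequality naturally concentrates $\widehat{TC}$ around $E[\widehat{TC}]$, whereas the theorem requires concentration around the true $TC$. Since $f$ is strictly concave, Jensen's inequality implies $E[\widehat{|X\cap Y|}_{kH}]\neq|X\cap Y|$ in general, producing a bias of order $1/k$ per edge. I would resolve this either by bounding the bias $|TC-E[\widehat{TC}]|$ explicitly (using $|f(\widehat{J})-f(J)|\leq|\widehat{J}-J|$ and $E|\widehat{J}-J|=O(1/\sqrt{k})$) and absorbing it into the tolerance~$t$, or, more cleanly, by applying McDiarmid to the deterministic upper envelope $\tfrac{1}{3}\sum_e|\widehat{J}_{u,v}-J_{u,v}|(|N_u|+|N_v|)$, which pointwise dominates $|TC-\widehat{TC}|$ and whose bounded-differences analysis is identical. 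A secondary difficulty is rigorously handling the dependence among the $1$-Hash top-$k$ minima at different vertices; this would be addressed by exploiting that the perturbation of any single $h(v)$ affects only neighborhoods containing~$v$, preserving the same aggregate-sensitivity calculation.
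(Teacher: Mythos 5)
Your Bloom-filter argument is essentially the paper's: both reduce to the per-edge MSE bound of Proposition~\ref{bound_int_BF} evaluated at the worst case $|N_u\cap N_v|\le\Delta$ and finish with Chebyshev. The paper routes this through the bias--variance decomposition of the MSE of $\widehat{TC}$, bounding the squared bias and the variance separately with Cauchy--Schwarz and the covariance inequality (whence the factor $2$ in the numerator); your direct estimate $(\sum_e X_e)^2\le m\sum_e X_e^2$ is a legitimate shortcut that in fact saves that factor of $2$. For the first MinHash bound your route is genuinely different: you apply McDiarmid over the $k$ independent hash functions with sensitivity $c_i=\tfrac{1}{3k}\sum_v d_v^2$, whereas the paper never exploits independence across hash functions at the aggregation step --- it shows each centered per-edge quantity $\tfrac{\hat J}{1+\hat J}-\tfrac{J}{1+J}$ is sub-Gaussian with variance factor $\tfrac{1}{4k}$ (Hoeffding/Serfling plus the $1$-Lipschitz map $x\mapsto x/(1+x)$) and then sums \emph{dependent} sub-Gaussians by adding standard deviations. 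Both land on the same constant $18$. Your observation that McDiarmid concentrates around $E[\widehat{TC}]$ rather than $TC$ is the right worry: the paper faces the same Jensen-gap issue and, in its refined bound, handles it by replacing $t$ with $\max(0,t-2S/k)$; your proposal to absorb a per-edge bias of order $S/k$ into $t$ is the same medicine, though note it changes the literal statement of the first displayed bound.

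The genuine gap is in the third bound. The factor $(\Delta+1)$ there is \emph{not} a degree bound on $|N_u|+|N_v|$; it is the chromatic index. The paper partitions the $m$ per-edge estimators into $\chi\le\Delta+1$ classes that are mutually independent (Vizing's theorem on a proper edge coloring), adds variance factors \emph{within} a class using independence and standard deviations \emph{across} classes without it, and then shows the resulting sub-Gaussian parameter is at most $\chi\sum_i C_i^2\le 2(\Delta+1)\sum_v d_v^3$ via a small optimization argument. A ``Bernstein-style variance-aware refinement of McDiarmid'' does not obviously reproduce this: the second moment of the increment caused by resampling one hash function is a double sum $\sum_{e,e'}E[\delta_e\delta_{e'}]$ over dependent edge perturbations, and without the grouping-into-independent-matchings idea you have no handle on the cross terms better than Cauchy--Schwarz, which returns you to $(\sum_e C_e)^2$ and hence to the first bound. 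To close this you would need to import the edge-coloring decomposition explicitly (and also justify, as the paper must, that edges in one color class really yield independent estimators), at which point you are reconstructing the paper's Theorem~\ref{thm:mh_bounds2} rather than refining McDiarmid.
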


\ifall
\maciej{Jakub, thanks a lot for the above. Guys, I think a very important thing
(Cesare, much more important than statistical data analysis of estimators in
the next section, I'd say) would be to compare here to other approaches, in
terms of derived bounds / estimators. Ideally in a form of small and concise
table, showing our advantages.
The most important works for specifically TC are: \cite{tsourakakis2009doulion,
pagh2012colorful, Rahman:2014, iyer2018asap, iyer2018bridging,
bandyopadhyay2016topological, besta2019slim}, maybe we can find more. Maybe you
could just glace over and group the respective theory results in a table, and
maybe our results are just better out of the box? This should be hopefully the
case for Slim Graph (we basically just group TC in Table 3, based on different
sparsifiers).

Due to space constraints, more bounds and derivations using MinHash and KMV
\maciej{?} are in the extended report; we summarize them in Table~\ref{tab:tc}.
}
\fi

\iftr
We provide a detailed proof of each statement of
Theorem~\ref{thm:tc_deviations} in the Appendix. 
\fi


%
\if 0
Note that
the bounds for the bloom filter are more complicated and, while it is possible
to get a general case analogously to MinHash, we only state the bounds in terms
of the maximum degree.
Note that while the $BF_{AND}$ estimator gives better bounds, it cannot be
implemented and we only use a ``heuristic" implementation.
\fi

Consistency of all the TC estimators follows from consistency of the individual
estimators (cf.~\cref{sec:back-properties}). The fact that
$\widehat{TC}_{kH}$ is MLE follows from $\widehat{|X \cap Y|}_{kH}$ being MLE.

\enlargeSQ

\if 0
\begin{table*}[t]
\else
\begin{table}[t]
\fi
\vspaceSQ{-1em}
\centering
\if 0
\footnotesize
\else
\setlength{\tabcolsep}{1pt}
\scriptsize
\ssmall
\fi
%
\begin{tabular}{lllllllllll}
\toprule
& \textbf{Reference} &  \makecell[c]{\textbf{Constr.}\\\textbf{time}} &
\makecell[c]{\textbf{Memory}\\\textbf{used}} &
\makecell[c]{\textbf{Estimation}\\\textbf{time}} & \textbf{AU} & \textbf{CN} & \textbf{ML} &
\textbf{IN} & \textbf{AE} & \makecell[c]{\textbf{B}} \\ 
\midrule
\multirow{12}{*}{\begin{turn}{90}\textbf{Past results}\end{turn}} &
\makecell[l]{Doulion~\cite{tsourakakis2009doulion}} & $O(m)$ & $O(pm)$ & $O(T(pm))$ & \faThumbsOUp & \faThumbsOUp & \faTimes & \faTimes & \faTimes & \faTimes \\
\ifall
\maciej{fix}
& Colorful TC~\cite{pagh2012colorful} & $O(n+m)$ & $O(pm)$ & $O(T(pm))$ & \faThumbsOUp &\faTimes  & \faThumbsOUp & P & \makecell[l]{$\operatorname{Pr}\left[\left|TC-\mathbb{E}[TC]\right| > \epsilon \mathbb{E}[TC]\right] \leq \frac{1}{n^{c}}$,\\ for $p^{2} \geq \frac{4(c+3) TC_{\max } \log n}{\epsilon^{2} TC}$} \\
\fi
& Colorful~\cite{pagh2012colorful} & $O(m)$ & $O(pm)$ & $O(T(pm))$ & \faThumbsOUp & \faThumbsOUp & \faTimes & \faTimes & \faTimes & \faThumbsOUp\ (P) \\ 
%
%
%
& Sketching~\cite{bandyopadhyay2016topological} & $O(km)$ & $O(kn)$ & \makecell[l]{$O(T(k^2n))$} & \faThumbsOUp & \faThumbsOUp &  \faTimes  &  \faTimes  & \faTimes & \faTimes \\
& ASAP~\cite{iyer2018asap} & n/a & $O(n+m)$ & $O(1)$ / sample & \faTimes & \faTimes & \faTimes & \faTimes & \faTimes & \faTimes \\
& GAP~\cite{iyer2018bridging} & $O(m)\dagger$ & $O(m')\dagger$ & $O(T(m'))\dagger$ & \faTimes & \faTimes & \faTimes & \faTimes & \faTimes & \faTimes \\  
& \makecell[l]{Slim Gr.~\cite{besta2019slim}} & $O(m)$ & $O(pm)$ & $O(T(pm))$ & \faThumbsOUp & \faThumbsOUp & \faTimes & \faTimes & \faTimes & \faTimes \\
& Eden et al.~\cite{eden2017approximately} & n/a & $O\fRB{\frac{n}{TC^{1/3}}}$ & $O\fRB{\frac{n}{TC^{1/3}} + \frac{m^{3/2}}{TC}}$ & \faThumbsOUp & \faThumbsOUp & \faTimes & \faTimes & \faTimes & \faThumbsOUp \\
& Assadi et al.~\cite{assadi2018simple} & n/a & $O(1)$ & $O(m^{3/2}/TC)$ & \faThumbsOUp & \faThumbsOUp & \faTimes & \faTimes & \faTimes & \faThumbsOUp \\
& Tětek~\cite{tetek2021approximate} & n/a & $\fRB{\frac{m^{1.41}}{TC^{0.82}}}$& $\fRB{\frac{m^{1.41}}{TC^{0.82}}}$& \faThumbsOUp & \faThumbsOUp & \faTimes & \faTimes & \faTimes & \faThumbsOUp \\

\midrule
\multirow{4}{*}{\begin{turn}{90}\textbf{PG}\end{turn}} &
$\widehat{TC}_{AND}$ (BF) & $O(b m)$ & $O(n+m)$ & $O(m B / W)$ & \faThumbsOUp & \faThumbsOUp & \faTimes & \faTimes & \faTimes &  \faThumbsOUp\ (P) \\
\if 0
& $\widehat{TC}_{OR}$ (BF)  & $O(b m)$ & $O(n+m)$ & $O(m B / W)$ & \faThumbsOUp & \faThumbsOUp & \faTimes & \faTimes & \faTimes & \faThumbsOUp\ (P) \\
\fi
& $\widehat{TC}_{kH}$ (MH) & $O(k m)$ & $O(n+m)$ & $O(km)$  & \faThumbsOUp & \faThumbsOUp &  \faThumbsOUp & \faThumbsOUp & \faThumbsOUp & \faThumbsOUp\ (E) \\
& $\widehat{TC}_{1H}$ (MH) & $O(k m)$ & $O(n+m)$ & $O(km)$  & \faThumbsOUp & \faThumbsOUp & \faTimes&  \faTimes& \faTimes & \faThumbsOUp\ (E) \\
\if 0
& KMV ($\mathcal{K}$) & $O(m)$ & $O(n+m)$ & $O(km)^\dagger$  & \faThumbsOUp &  \faTimes& \faTimes & --- & with $\widehat{|X \cap Y|}_K$ \\
\fi
\bottomrule
\end{tabular}
\vspaceSQ{-0.5em}
\caption{
\textmd{
\textbf{ProbGraph vs.~existing results for estimating TC} (sorted chronologically).
\textbf{``Constr.~time'':} time to construct a given estimator.
\textbf{``Memory'':} the amount of storage needed to construct a given estimator.
\textbf{``Estimation time'':} time needed to estimate TC.
\textbf{``AU'' (asymptotically unbiased), ``CN'' (consistent)}, \textbf{``ML'' (maximum likelihood estimator)},
\textbf{``IN'' (invariant)}, \textbf{``AE'' (asymptotically efficient)}: properties
of estimators (explained in~\cref{sec:back-properties}).
\textbf{``B'' (concentration bounds)}: whether a given scheme is supported with
concentration bounds.
\textbf{``P'' (polynomial)} or \textbf{``E'' (exponential)}: bound quality.
``\faThumbsOUp'': supported, provided.
``\faTimes'': not available, not provided.
%
%
%
``$\dagger$'': the original work does \emph{not}
explicitly provide a given result and it was derived in this work.
\if 0
$\dagger$Results derived using the quickselect
algorithm~\cite{mahmoud1995analysis}, otherwise $O(m k \log k)$.  $O(k \log k)$
$\ddagger$See \Cref{thm:tc_deviations}.
\fi
%
%
\textbf{\ul{Symbols used in related work} (different from ones in
Table~\ref{tab:symbols}):}
$p$: probability of keeping an edge,
$m'$: \#sampled edges.
}}
\vspaceSQ{-2em}
\label{tab:tc}
\if 0
\end{table*}
\else
\end{table}
\fi

\enlargeSQ

\subsection{Comparison to Existing Estimators}

We compare our $\widehat{TC}$ estimators to others in Table~\ref{tab:tc}.
We consider the estimators from Doulion~\cite{tsourakakis2009doulion},
topological graph sketching~\cite{bandyopadhyay2016topological},
GAP~\cite{iyer2018bridging}, ASAP~\cite{iyer2018asap},
Slim Graph~\cite{besta2019slim}, MCMC~\cite{Rahman:2014}, and
the ``colorful'' TC analysis~\cite{pagh2012colorful}, as well as several more
recent results from the theory community~\cite{eden2017approximately,assadi2018simple}.
%
%
In the comparison, we consider construction time, used memory, TC estimation time,
and whether an estimator is asymptotically unbiased, consistent, maximum likelihood,
invariant, and whether it offers concentration bounds, and -- if yes -- 
are they polynomial or exponential (cf.~Section~\ref{sec:back-properties}
for an explanation on the relevance of these properties).
%

All
ProbGraph's estimators offer polynomial or exponential
concentration bounds.
Importantly, $\widehat{TC}$ based on MinHash is the only one to offer
exponential concentration bounds so far.  This means that -- for both
$\widehat{TC}_{1H}$ and $\widehat{TC}_{kH}$ -- any deviation from the true
value of $TC$ goes to zero exponentially fast with the increasing size of the
potential deviation.
Moreover, we observe that $\widehat{TC}_{kH}$ has all the desirable estimator
properties mentioned above.
Thus, it is particularly attractive whenever high accuracy is of the uttermost importance.

%
%
%
\if 0
An example competing scheme that offers concentration bounds for $TC$ is 
Colorful TC~\cite{pagh2012colorful}. The bounds are polynomial. 
\fi
\if 0
These bounds, which we denote as $\widehat{TC}_{col}$, are as follows:
\else
\fi

\if 0

\begin{gather}
P\left[\left|\widehat{TC}_{col}-\mathbb{E}[\widehat{TC}_{col}]\right| > \epsilon \mathbb{E}[\widehat{TC}_{col}]\right] \leq \frac{1}{n^{c}} \label{eq:polyn}
\end{gather}

%
where $p^{2} \geq \frac{4(c+3) TC_{\max } \log n}{\epsilon^{2} TC}$ and
$TC_{max}$ is maximum number of triangles that some vertex belongs to, $c > 0$
is any constant, and $p$ is the probability of keeping an edge.

\fi

\ifall
\maciej{fix}

\subsection{Comparison to General Approaches}

\maciej{Cesare, Jakub, if time allows, we would like to try to 
somehow understand general differences (and compare precisely
in theory in some respect) to GENERAL approaches that also, like us, 
target more than one algorithm. So far there are these:
\cite{shang2014auto,
iyer2018asap,
iyer2018bridging,
singh2018scalable,
besta2019slim}
If you guys have any idea, let me know. In the meantime, I'll
be comparing them without any formal theory in Related Work.
}

\subsection{Size and Performance}

\maciej{Jakub, thanks a lot! I will integrate this now myself,
and add some brief stuff on parallel performance.}

\maciej{Analyze in theory:
For different sketched representation variants, analyze in theory (asymptotics, tight analysis?)
(1) SIZE,
(2) CONSTRUCTION TIME,
(3) WORK of graph algorithms running on top of them,
(4) PARALLEL Performance (depth) for (a) graph accesses, and (b) graph algorithms.
(5) Communication cost reductions?
}

\jakub{As for "SIZE" -- once sections 3.1 - 3.4 are complete, this should be
easy to do. Basically just solve the concentration inequality for the size}
\jakub{Draft, 
In this section, let $T_h$ be the time needed to evaluate the used hash function.
We assume that evaluation a hash function does not use any I/O.
\paragraph{Bloom filters:} A bloom filter with $b$ hash functions, size $B_X$
storing $n$ elements can be built in time $bnT_h$ and requires at most $bn$ I/O
accesses.
\paragraph{MinHash (k-Hash):} A k-Hash representation of a set of size $n$ 
can be computed in time $knT_h$ and requires $kn/B$ I/O accesses as searching for
a minimum hashes can be done by a sequential scan.
\paragraph{MinHash (1-Hash):} A 1-Hash representstion of a set of size $n$ can be
computed in time $nT_h + \mathcal{O}(n \log k)$, the same as for KMV, and, with the
right data structure, requires $\mathcal{O}(n/B \log_{M/B} k/B)$
where $B$ and $M$ is the size of a cache block and size of the cache, respectively. 
\paragraph{K Minimum Values:} A KMV representation can be computed in the same time
as MinHash (1-Hash).
}
\jakub{Maciej, what would you imagine would be written here for (3) - (5)?}

\maciej{Graph models to consider:
(1) general,
(2) random uniform,
(3) others.
}

\cesar{[For power law graphs,] In terms of theory, I think that there is not
something valuable to add with a reasonable amount of effort. The expected
number of triangles in a power law is not straightforward as in the random
uniform graph moreover, to my knowledge, there is not an easy way to look at
the variance of this number.  That is why, as I proposed above, I believe that
simulations are the only feasible way to look at power law graphs and check the
performance of the estimators. Moreover, as a by-product, we can always look at
the distribution of the random variable \textit{number of triangles} because we
are simulating the entire process. For example, as the number of simulations
goes to infinity, we are converging towards the true expected number of
triangles by the law of large numbers.}

\subsection{Additional Analyses}

\maciej{Shall we analyze impact from different parameters related to used
set representations? Numbers of hash functions, etc?}

\fi

\section{Evaluation} \label{sec:eval}

We now show that ProbGraph enables large speedups in graph mining while
maintaining high accuracy of outcomes.
\if 0
In general, ProbGraph enables tradeoffs in \emph{performance} and
\emph{accuracy}, using \emph{storage} as an additional parameter for more
control between the first two. We now illustrate several of these tradeoffs. 
\fi
We do not advocate a
single way of deriving $|X \cap Y|$, but we
illustrate pros and cons of different classes of schemes, and underline when each
scheme is best applicable.

\begin{figure*}[t]
\centering
%
%
\centering
\includegraphics[width=1.0\textwidth]{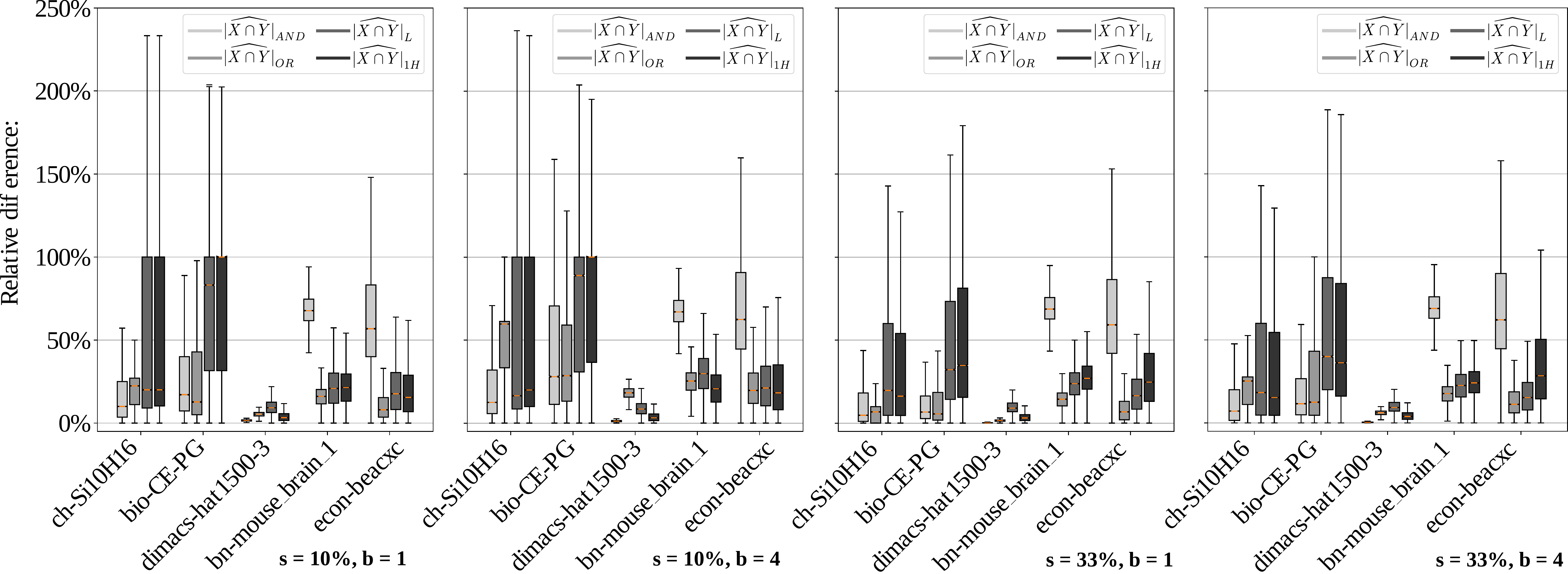}
\vspaceSQ{-1.0em}
\caption{\textmd{Analysis of the accuracy of PG estimators of $|X \cap Y|$.}}
\label{fig:int-analysis}
\vspaceSQ{-1em}
\end{figure*}

\subsection{Datasets, Methodology, Architectures}

\marginparsep=0.5em
\marginpar{\vspace{0em}\colorbox{yellow}{\textbf{R-5}}\\ \colorbox{yellow}{(minor}\\ \colorbox{yellow}{comm-}\\ \colorbox{yellow}{-ent 14)}}

\hl{We follow a recent set of recommendations on the
benchmarking parallel
applications~\mbox{\cite{hoefler2015scientific}}}. For example, we omit the first 1\%
of performance data as warmup. We derive enough data for the mean and 95\%
non-parametric confidence intervals. 
%

\textbf{Comparison Baselines}
%
%
%
We compare PG-based approximate graph algorithms to
tuned state-of-the-art implementations (Triangle and 4-Clique Counting,
Clustering, and Vertex Similarity) from the GAP~\cite{beamer2015gap} and
GMS~\cite{besta2021graphminesuite} graph benchmarking suites.
Moreover, when analyzing our estimators of $|N_u \cap N_v|$, we
consider an existing BF estimator~\cite{papapetrou2010cardinality,
harmouch2017cardinality}, given by the expression $\widehat{|X|} = -
\frac{\log\fRB{1 - {B_{X,1}}/{B_X}}}{b \log\fRB{1 - {1}/{B_X}}}$.
We also consider another existing estimator~\cite{swamidass2007mathematical},
given by the expression $\widehat{|X \cap Y|}_{OR} = |X| + |Y| + \frac{B_{X
\cup Y}}{b} \log \left( 1 - \frac{B_{X \cup Y,1}}{B_{X \cup Y}} \right)$; this
estimator uses the single set estimator evaluated on the set union.
Finally, when evaluating TC, we compare to the established TC
estimators: Doulion~\cite{tsourakakis2009doulion} (representing schemes based
on edge sampling) colorful TC~\cite{pagh2012colorful} (representing schemes
based on sophisticated combinatorial pruning).
\hl{We also consider state-of-the-art heuristics that do not come with
theoretical guarantees: Reduced Execution~\mbox{\cite{singh2018scalable}},
Partial Processing~\mbox{\cite{singh2018scalable}}, Auto-Approximate (two
variants)~\mbox{\cite{shang2014auto}}.}
%

\marginparsep=0.5em
\marginpar{\vspace{-3em}\colorbox{yellow}{\textbf{R-1}}}

\enlargeSQ
\enlargeSQ
\enlargeSQ

\textbf{Datasets}
We consider {SNAP (S)}~\cite{snapnets}, {KONECT (K)}~\cite{kunegis2013konect},
{DIMACS (D)}~\cite{demetrescu2009shortest}, Network Repository (N)~\cite{nr},
and {WebGraph (W)}~\cite{BoVWFI} datasets.
For broad analysis, we follow the recommendations of the GMS graph
mining benchmark~\cite{besta2021graphminesuite}, and we use networks of different
origins (biology, chemistry, economy, etc.), sizes, densities
($m/n$), degree distribution skews, and even {higher-order
characteristics} (e.g., counts of cliques).
We illustrate the real-world datasets in Table~\ref{tab:graphs}.
We also use synthetic graphs power-law (the Kronecker
model~\mbox{\cite{leskovec2010kronecker}}) degree distribution.  Using such
synthetic graphs enables systematically changing a specific single graph
property such as \mbox{$n$}, \mbox{$m$}, or \mbox{$m/n$}, which is not possible
with real-world datasets.
\emph{This entails a very large evaluation space and we only include representative
findings for selected graphs.}

\begin{table}[h]
\centering
\footnotesize
\scriptsize
%
\setlength{\tabcolsep}{2pt}
\renewcommand{\arraystretch}{1}
\begin{tabular}{l}
\toprule
\makecell[l]{
\textbf{\ul{Biological}.} 
 Gene functional associations: 
   ({\emph{bio-SC-GT}}, 1.7K, 34K), 
   ({\emph{bio-CE-PG}},\\1.9K, 48K), 
   ({\emph{bio-CE-GN}}, 2.2K, 53.7K), 
   ({\emph{bio-DM-CX}}, 4K, 77K), 
   ({\emph{bio-DR-CX}}, 3.3K,\\ 85K),
   ({\emph{bio-HS-LC}}, 4.2K, 39K), 
   ({\emph{bio-HS-CX}}, 4.4K, 108.8K), 
   ({\emph{bio-SC-HT}}, 2K, 63K),\\
   ({\emph{bio-WormNetB3}}, 2.4K, 79K).
   ({\emph{bio-WormNet-v3}}, 16.3K, 762.8K).
 Human gene \\regulatory network: 
   ({\emph{bio-humanGene}}, 14K, 9M),
   ({\emph{bio-mouseGene}}, 45K, 14.5M).
%
}\\
%
%
\makecell[l]{\textbf{\ul{Interaction}.}
 Animal networks:  
    ({\emph{int-antCol3-d1}}, 161, 11.1K), 
    ({\emph{int-antCol5-d1}},\\153, 9K),
    ({\emph{int-antCol6-d2}}, 165, 10.2K),
    ({\emph{intD-antCol4}}, 134, 5K). 
 Human contact\\network:
  ({\emph{int-HosWardProx}}, 1.8k, 1.4k).
 Users-rate-users:
    ({\emph{int-dating}}, 169K, 17.3M),\\
({\emph{edit-enwiktionary}}, 2.1M, 5.5M).
 Collaboration:
    ({\emph{int-citAsPh}}, 17.9K, 197K).
}\\
%
%
\makecell[l]{\textbf{\ul{Brain}.}
  ({\emph{bn-flyMedulla}}, 1.8K, 8.9K),
  ({\emph{bn-mouse}}, 1.1K, 90.8K),\\
  ({\emph{bn-mouse\_brain\_1}}, 213, 21.8K).
}\\
%
\makecell[l]{\textbf{\ul{Economic}.}
  ({\emph{econ-psmigr1}}, 3.1K, 543K),
  ({\emph{econ-psmigr2}}, 3.1K, 540K),\\
  ({\emph{econ-beacxc}}, 498, 50.4K),
  ({\emph{econ-beaflw}}, 508, 53.4K),
  ({\emph{econ-mbeacxc}}, 493, 49.9K),\\
  ({\emph{econ-orani678}}, 2.5K, 90.1K).
}\\
%
%
\makecell[l]{\textbf{\ul{Social}.}
  Facebook: ({\emph{soc-fbMsg}}, 1.9k, 13.8k),
  Orkut: (3.1M, 117M).
}\\
%
%
\makecell[l]{\textbf{\ul{Scientific computing}.}
  ({\emph{sc-pwtk}}, 217.9K, 5.6M),
  ({\emph{sc-OptGupt}}, 16.8K, 4.7M),\\
  ({\emph{sc-ThermAB}}, 10.6K, 522.4K).
}\\
%
%
\makecell[l]{\textbf{\ul{Discrete math}.}
  ({\emph{dimacs-c500-9}}, 501, 112K),
  ({\emph{dimacs-hat1500-3}}, 1.5K, 847K).
}\\
%
%
\makecell[l]{\textbf{\ul{Chemistry}.}
  ({\emph{ch-SiO}}, 33.4K, 675.5K),
  ({\emph{ch-Si10H16}}, 17K, 446.5K).
}\\
%
%
%
\bottomrule
\end{tabular}
\vspaceSQ{-0.5em}
\caption{\textmd{Used
graphs.
For each graph, we show its ``(\#vertices, \#edges)''.
}}
%
\label{tab:graphs}
\vspaceSQ{-1em}
\end{table}


\textbf{Parametrizing Set Representations}
We use the generic storage budget parameter~$s$ (cf.~\cref{sec:storage-b}) to
set the maximum allowable amount of memory than can be used by PG. Then, the
parameters specific to each probabilistic set representation ($b, B_X, k$)
enable fine tuning the tradeoff between storage, accuracy, and performance. In
the following, we will also illustrate how to pick these parameters to maximize
performance and accuracy for a given storage budget.

\marginparsep=1em
\marginpar{\vspace{-17em}\colorbox{yellow}{\textbf{ALL}}}

\enlargeSQ

\textbf{Architectures}
We use a 
a Dell PowerEdge R910 server with an Intel Xeon X7550 CPUs @ 2.00GHz with
18MB L3 cache, 1TiB RAM, and 32 cores per CPU (grouped in four sockets).
We also use XC50 compute nodes in the Piz Daint Cray
supercomputer (one such node comes with 12-core Intel Xeon E5-2690 HT-enabled
CPU 64 GiB RAM).

\iftr
\textbf{Parallelism}
Unless stated otherwise, we run algorithms on the maximum number of cores
available in a system.
\fi

\textbf{Assessing Accuracy}
To measure the accuracy of
algorithms that return some \emph{counts} (e.g., clique count, count of
clusters), we use expression $\frac{|cnt_{PG} -
cnt_{EX}|}{cnt_{EX}}$, where $cnt_{PG}$ and $cnt_{EX}$ are ProbGraph and
exact counts, respectively.
Note that the ProbGraph counts may be lower but also higher than the exact ones (due to false positives in BFs). 

\ifall
\maciej{fix}
a value $tc_{meas}$ different from the true number of triangles $tc_{true}$ in
a graph. The error used in all plots is the absolute value of the relative
error in percent. In mathematical notation this is:
$err=\left|\frac{tc_{meas}}{tc_{true}}\right|\cdot 100$
\fi

\if 0
\begin{figure}[t]
\centering
\vspaceSQ{-1em}
\begin{subfigure}[t]{0.22 \textwidth}
\centering
\includegraphics[width=\textwidth]{plot_intersection_b1_mem10.pdf}
\vspaceSQ{-2.0em}
\caption{\textmd{$s = 10\%, b = 1$.}}
\label{fig:int-analysis-s10-b1}
\end{subfigure}
%
\begin{subfigure}[t]{0.22 \textwidth}
\centering
\includegraphics[width=\textwidth]{plot_intersection_b1_mem33.pdf}
\vspaceSQ{-2.0em}
\caption{\textmd{$s = 33\%, b = 1$.}}
\label{fig:int-analysis-s33-b1}
\end{subfigure}
\begin{subfigure}[t]{0.22 \textwidth}
\centering
\includegraphics[width=\textwidth]{plot_intersection_b4_mem10.pdf}
\vspaceSQ{-2.0em}
\caption{\textmd{$s = 10\%, b = 4$.}}
\label{fig:int-analysis-s10-b1}
\end{subfigure}
%
\begin{subfigure}[t]{0.22 \textwidth}
\centering
\includegraphics[width=\textwidth]{plot_intersection_b4_mem33.pdf}
\vspaceSQ{-2.0em}
\caption{\textmd{$s = 33\%, b = 4$.}}
\label{fig:int-analysis-s33-b1}
\end{subfigure}
\vspaceSQ{-0.5em}
\caption{\textmd{Analysis of the accuracy of PG estimators of $|X \cap Y|$.}}
\label{fig:int-analysis}
\vspaceSQ{-2em}
\end{figure}
\fi

\begin{figure*}[t]
\centering
\vspaceSQ{-0.5em}
\includegraphics[width=1.0\textwidth]{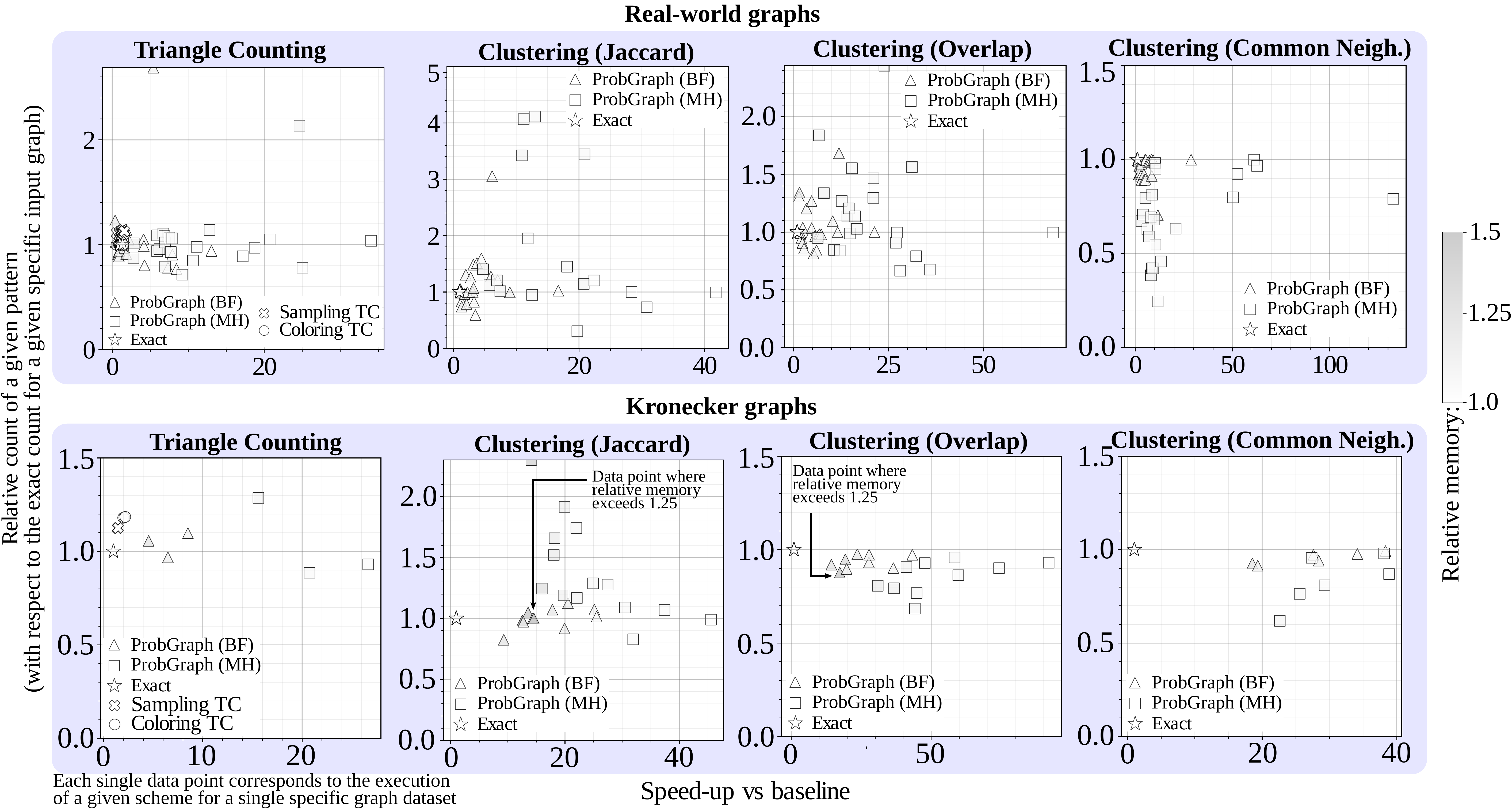}
\vspaceSQ{-1.5em}
\caption{\textmd{Summary of advantages of PG \hl{for real-world (top panel) and
Kronecker (bottom panel) graphs, for Triangle Counting and Clustering}. 
All 32 cores are used. \hl{Note that most data points are white or
almost white because they come with very low amounts of additional memory
(we annotate a few data points that come with more than 25\% additional
relative memory amounts). Other graph problems come with the same performance/memory/accuracy patterns
for the used comparison baselines.}}}
\label{fig:pg-benefits-1}
\vspaceSQ{-1em}
\end{figure*}

\if 0
\begin{figure*}[t]
\centering
%
\includegraphics[width=1.0\textwidth]{main-results-kron___low-mem.pdf}
\vspaceSQ{-1.5em}
\caption{\textmd{Summary of advantages of PG for Kronecker graphs. One data point indicates one specific graph dataset. All 32 cores are used.}}
\label{fig:pg-benefits-kron}
\vspaceSQ{-1em}
\end{figure*}
\fi

\enlargeSQ

\subsection{Estimating $|N_v \cap N_u|$}

We first assess specific PG estimators of $|N_v \cap N_u|$ in terms of their
accuracy. 
For each graph, we derive the BF and MH representations of its vertex
neighborhoods, and then the intersections of neighborhoods of
adjacent vertices. Finally, we compute the relative differences between these
PG intersection cardinalities and the CSR related cardinalities
$|(\widehat{|X \cap Y|}_\bullet
- |X \cap Y|)| / |X \cap Y|$ where $\bullet \in \{AND, L, 1H, kH\}$ We
  summarize these differences, for each graph, using boxplots. 
We use the storage budget~$s = 33\%$ and $b \in \{1, 4\}$.

Representative results are in Figure~\ref{fig:int-analysis}.
While medians are low (less than $\approx$25\% for most cases), there is a
certain spread in outliers. This is because we consider
\emph{all} adjacent vertices, and there is a high chance that at
least some pairs will result in low accuracy.
Overall, the results illustrate that there is no single winner among the
estimators, and the outcomes depend on the graph structure. One observation is
that the BF based on AND tends to perform worse on very dense graphs,
and comparably to marginally better than L on sparser graphs.
Similarly, $k$--Hash is marginally worse than 1--Hash on very dense
graphs; sparse graphs entail a reverse pattern.
\if 0
Moreover, we observe that, in the majority of cases (i.e., in more than 90\% of
considered input datasets), for each single dataset, all the estimators based
on MH are of lower quality than at least one of the BF based estimators. 
\fi

\ifconf
A full discussion and results for varying $s$ and $b$ (or $k$) are in the technical
report due to space constraints. 
%
%
Overall, increasing $s$ enhances all estimators. However, increasing $b$
(or $k$), while making $k$-Hash more accurate, 
does not always enhance others.
For example, for $\widehat{|X \cap Y|}_L$, increasing $b$ leads to a larger bias that is not compensated with the
decrease of the variance because the storage budget $s$ is
fixed to 33\%.
\fi

\enlargeSQ

\subsection{Estimating Outcomes of Graph Algorithms}

\marginparsep=0.5em
\marginpar{\vspace{1em}\colorbox{yellow}{\textbf{R-5}}\\ \colorbox{yellow}{(minor}\\ \colorbox{yellow}{comm-}\\ \colorbox{yellow}{-ent 16)}}

We now illustrate that PG estimators enable high performance and high
accuracy at a small additional storage budget when applied to parallel
graph mining.
We first conduct an analysis using all 32 cores.
\hl{For each graph problem considered}, we illustrate the exact baseline and the
schemes based on BF and MH estimators. We use $\widehat{|N_u \cap N_v|}_{AND}$
with $b = 2$ and $\widehat{|N_u \cap N_v|}_{1H}$ that represent BF and MH
schemes; they offer high accuracy while being fast to compute as they need few
hash functions (other estimators come with similar accuracy outcomes but 
are slower to compute).
%
%
Figures~\mbox{\ref{fig:pg-benefits-1}} and~\mbox{\ref{fig:pg-benefits-2}}
show the results for real-world and Kronecker graphs. Each single
plot is dedicated to a specific graph problem and it compares different
estimators \hl{(indicated with different shapes of data points)} across three
dimensions: performance (speedup, X axis), accuracy (relative count, Y axis),
and memory budget (relative memory size with respect to the default CSR, shades
of B\&W).
\sethlcolor{yellow}\hl{Each plot corresponds to a specific graph problem. Each
data point corresponds to the execution of a given scheme for a specific
graph dataset. Thus, each plot shows collectively how different baselines
behave for different input graphs.}

\marginpar{\vspace{-9em}\colorbox{yellow}{\textbf{R-2}}}

\marginpar{\vspace{-3em}\colorbox{yellow}{\textbf{R-2}}}

\enlargeSQ

\marginpar{\vspace{9em}\colorbox{yellow}{\textbf{R-3}}\\ \colorbox{yellow}{\textbf{R-5}}}

In general, the results follow the insights from the analysis of estimating
set intersections. BFs offer high accuracy
and high speedups, sometimes even as high as 20$\times$
(Clustering using Overlap), or nearly 30$\times$
(Clustering using Common Neighbors), while keeping the accuracy more than
98\%. Speedups can be as high as 50$\times$, with the accuracy more
than 90\% (e.g., 4-Clique Counting for Kronecker graphs).
\hl{On the other hand, MH usually gives consistently higher speedups as well as
lower memory requirements, \emph{but its accuracy is in most cases worse than
BF}.}
\hl{We conjecture this is because MH estimators preserve only
specific subsets of vertex neighborhoods (selected using hash functions),
explicitly eliminating other vertices. In contrast, when using BF estimators,
each vertex is hashed to certain bit(s) in the final bit vector, and is thus to
some extent ``reflected'' in PG estimators.}

\marginpar{\vspace{3em}\colorbox{yellow}{\textbf{R-1}}\\ \colorbox{yellow}{\textbf{R-2}}}

\hl{In terms of memory efficiency, the mostly very light shades indicate very low
additional storage overheads. Except for a few cases where light gray indicates
that -- for a given graph -- a given estimator needs around 20-50\% additional space,
\emph{all the cases require at most 25\% more storage}.
We additionally indicate this in the plots with the appropriate annotations.}

We then provide more details on all the problems using detailed bar plots; due to space
constraints, we only show 
TC in Figure~\ref{fig:tc-dets}
for real-world graphs (all other problems and Kronecker graphs come with similar insights). For the
storage budget of at most 25\%, for the majority of graphs -- regardless of
their origin -- PG enables high ($>$80\%) accuracy combined with high speedups.
The only cases of low accuracy (i.e., the number of clusters detected being
much higher than the exact one) is when using MH. This is because MH based
schemes explicitly remove a (usually high) number of edges, resulting in
possibly significant increase in cluster counts.

\begin{figure}[t]
\centering
%
\includegraphics[width=1.0\columnwidth]{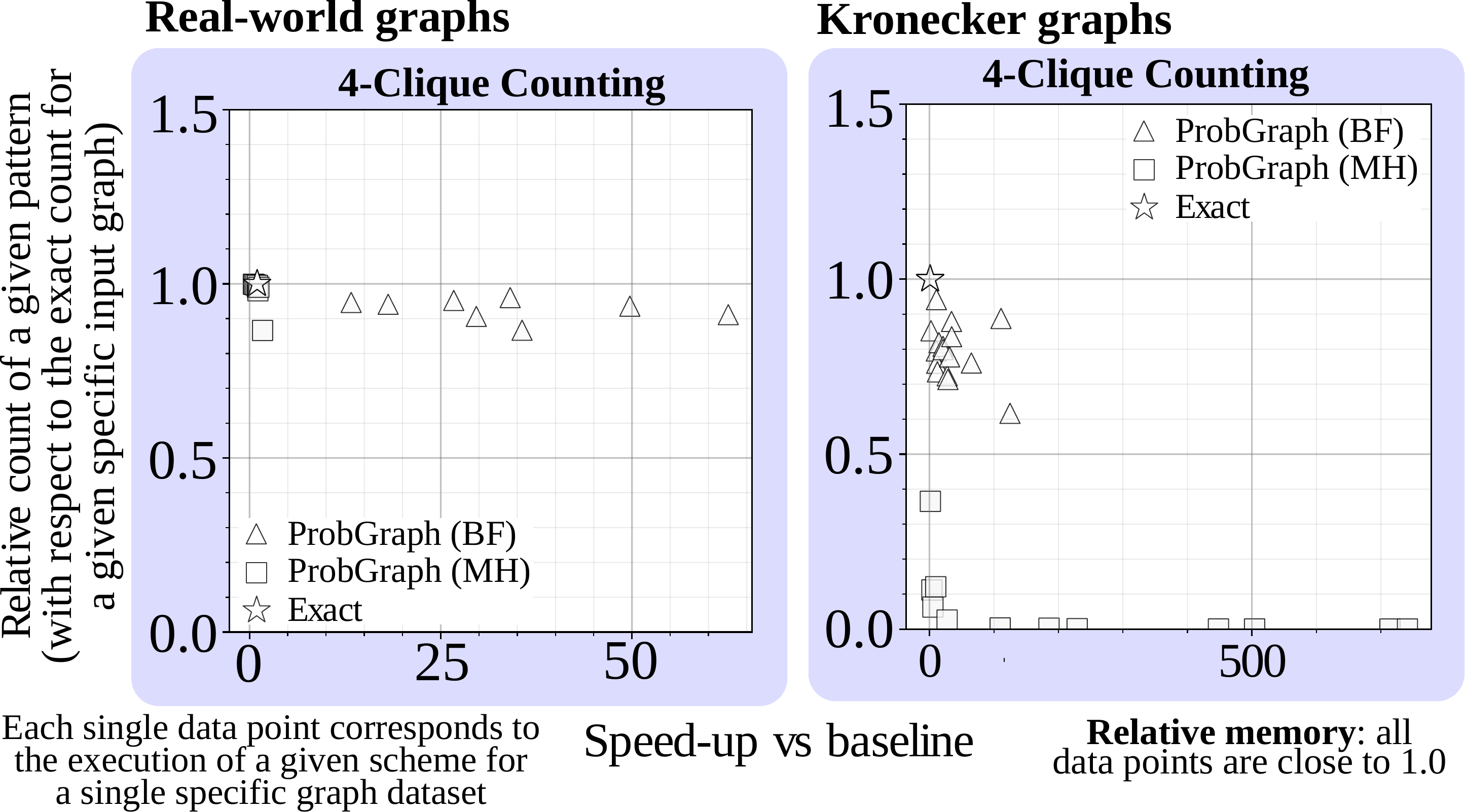}
\vspaceSQ{-1.5em}
\caption{\textmd{Summary of advantages of PG \hl{for 4-clique counting for real-world (left panel) and
Kronecker (right panel) graphs}. All 32 cores are used.}}
\label{fig:pg-benefits-2}
\vspaceSQ{-1.5em}
\end{figure}

We conclude that BF-based PG estimators consistently deliver high accuracy as
well as speedups at small memory budget, for a broad set of graphs and
problems. 1--Hash based schemes may provide much more performance, but require
more careful parametrization and input selection.

\marginpar{\vspace{-45em}\colorbox{yellow}{\textbf{ALL}}}  
\marginpar{\vspace{-20em}\colorbox{yellow}{\textbf{ALL}}}

\marginparsep=2em
\marginpar{\vspace{3em}\colorbox{yellow}{\textbf{R-1}}\\ \colorbox{yellow}{comm-}\\ \colorbox{yellow}{-ent 2}}

\enlargeSQ
\enlargeSQ
\enlargeSQ

\subsection{\hl{Comparison to Heuristics}}

\hl{We also compare to heuristics for approximate graph computations that do
not come with guarantees on the quality of outcomes.
One scheme called ``Reduced Execution''~\mbox{\cite{singh2018scalable}}
reduces the count of iterations of the outermost loop.
Another scheme, ``Partial Graph Processing''~\mbox{\cite{singh2018scalable}},
processes -- for each vertex~$v$ -- a randomly selected subset of $v$'s
neighbors.}
%
%
%
\hl{Moreover, we use two variants of sampling-based ``Auto-Approximation''
that addresses a purely vertex-centric model of
computation~\mbox{\cite{shang2014auto}}.} 
%
%
\hl{We show these heuristics in Figure~\mbox{\ref{fig:tc-dets}} (we exclude them
from Figures~\mbox{\ref{fig:pg-benefits-1}} and~\mbox{\ref{fig:pg-benefits-2}}
to preserve clarity; they always achieve much worse results that obscure
plots). 
The advantage of heuristics is that they do not need additional memory, as
shown in the lowest panel in Figure~\mbox{\ref{fig:tc-dets}}.  However, PG
always achieves much better accuracy, by at least 25\%, up to $\approx$75\%.
This is because the heuristics are not based on theoretical developments that
ensure high PG's accuracy.  Moreover, the heuristics are also slower than PG.
``AutoApproximate'' schemes introduce particularly large overheads due to their
purely vertex-centric abstraction, which makes them even slower than the exact
tuned baselines that we compare to. The results for all other graph problems
are similar.}

\if 0

\subsection{Analysis of BF Parameters}
\label{sec:params}

BF estimators, unlike MH based ones, depend on the interplay of two parameters,
$B_x$ and $b$. 
We now investigate how the BF estimator accuracy changes when varying $b$ and
$B_X$ (Figure~\ref{fig:bf-analysis-dets}).  
To simplify the analysis, we
estimate a single set and we compare two estimators, $\widehat{|X|}_S$
(which is a basis of $\widehat{|X \cap Y|}_{AND}$)
and $\widehat{|X|}_L$, which .
We pick the econ-orani678 graph, others follow similar accuracy patterns.

When varying $b$, the accuracy of $\widehat{|X|}_S$ roughly follows the
accuracy patterns of the corresponding MSE bounds,
cf.~Figure~\ref{fig:estimators}. 
Specifically, larger $b$ (for a fixed storage budget $s$) leads to 
the gradual improvement of accuracy. However, at some point, we observe the
onset of the accuracy loss, as the BF starts to fill in with ones beyond a
certain level. In the bound (Eq.~(\ref{cbound_ss_BF})), this is reflected by
the fact that the $exp$ function starts to dominate and the MSE starts to
increase, decreasing accuracy.
As we increase $s$ (Figure~\ref{fig:bf-analysis-dets-BX}), the
BF size~$B_X$ also gets larger, enabling more accuracy in
$\widehat{|X|}_S$.

As expected, the accuracy deteriorates as $b$ increases for $\widehat{|X|}_L$
because we are deflating the number of ones~$B_{X,1}$ in the BF by the
decreasing factor~$\frac{1}{b}$. This is in line with the result shown in
Proposition~\ref{bound_ss_simple_BF} where we set $\delta_{B_X, b} =
\frac{1}{b}$. Indeed, as we increase $b$, we automatically decrease the value
of $\delta_{B_X, b}$, which leads to a larger bias (due to the smaller
expectation of $\widehat{|X|}_L$) that is \emph{not} compensated with the
decrease of the variance because the storage budget $s$ is
fixed to 33\% (thus fixing $B_X$).

For increasing~$B_X$, we observe the complementary behavior
(Figure~\ref{fig:bf-analysis-dets-BX}). Indeed, even when $b$ is fixed, we
still deflate the number of ones in BF~$B_{X,1}$ if $b \geq 2$. However, as the
storage budget$s$ increases, $B_X$ also increases and the probability of hash
collisions (i.e., false positives) decreases. For the specific choice of $b =
3$, we expect $B_{X,1} \approx 3|X|$ for very large $B_X$. 

\fi

\begin{figure*}[t]
\centering
\vspaceSQ{-0.5em}
\includegraphics[width=1.0\textwidth]{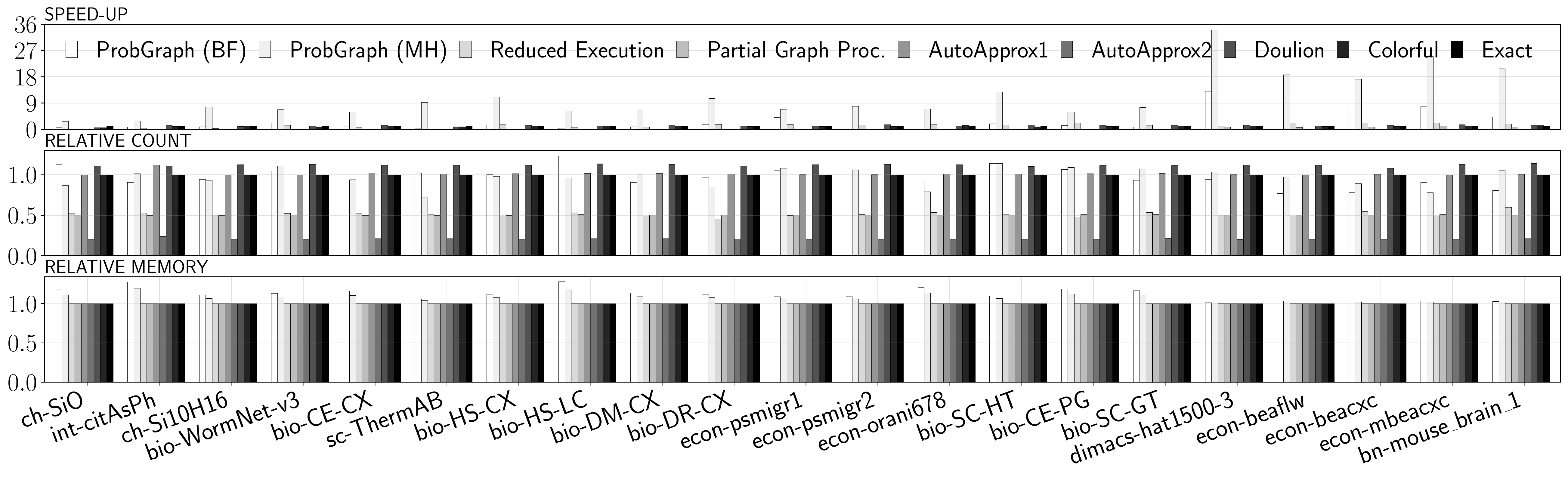}
\vspaceSQ{-2em}
\caption{\textmd{Analysis of performance/accuracy/memory of ProbGraph for
Triangle Counting, \hl{illustrating advantages of ProbGraph over baselines with theoretical underpinning (Sampling, Colorful) and over heuristics (Reduced Execution, Partial Graph Processing, AutoApprox1, AutoApprox2)}.}}
\label{fig:tc-dets}
\vspaceSQ{-0.5em}
\end{figure*}

\enlargeSQ

\enlargeSQ

\subsection{Analysis of Scaling}

\marginparsep=1em
\marginpar{\vspace{0em}\colorbox{yellow}{\textbf{R-5}}\\ \colorbox{yellow}{(minor}\\ \colorbox{yellow}{comm-}\\ \colorbox{yellow}{-ents}\\ \colorbox{yellow}{17-18)}}

We also consider strong/weak scaling; \hl{Figure~\mbox{\ref{fig:scale-analysis}}
offers representative results}. 
We observe nearly ideal strong scaling of all \hl{the baselines compared}. PG schemes feature much
lower runtimes.
To investigate in what regime of parameters PG baselines exhibit better scaling
behavior, we also analyze weak scaling, see Figure~\ref{fig:scaling-weak-tc}.
We use Kronecker graphs. We increase the number of
edges along with the number of threads, from $m \approx 4\text{M}$ to $m
\approx 1.8\text{B}$ for a fixed $n = 1\text{M}$. The largest graphs fill the
whole available memory (1TB).
In this experiment, we increase the number of edges at a \emph{rate twice as
large as the thread count} (cf.~the X axis). As we use Kronecker graphs,
this \emph{stresses load balancing capabilities of the compared baselines},
as most vertices have small neighborhoods, but some neighborhoods
grow particularly fast, making it very challenging to load balance set
intersections (cf.~the right side of Figure~\ref{fig:pg-overview}). The results
illustrate that all PG baselines scale \emph{much better than} all competition
baselines. It becomes particularly visible beyond a certain point, where the PG
scaling curves become gradually flatter.
This is enabled by the PG design, in which set representations are of the same
(usually very small) size. Hence, load imbalance is less of an issue,
while -- as shown earlier in this section -- accuracy loss is
  negligible. 
\hl{Finally, Figure~\mbox{\ref{fig:scale-analysis-cn}} shows that the difference between
BF and MH in scaling also depends on the targeted problem. For Clustering based
on Common Neighbors, BF becomes comparable, or marginally better, than MH,
for large thread counts. This is because
  the algorithm for Clustering is almost completely
  dominated by $|X \cap Y|$, hence benefiting from BF's very fast bitwise AND
  set intersections.}

\marginpar{\vspace{-25em}\colorbox{yellow}{\textbf{ALL}}\\ \colorbox{yellow}{\textbf{   }}\\ \colorbox{yellow}{\textbf{   }}\\ \colorbox{yellow}{\textbf{   }}\\ \colorbox{yellow}{\textbf{   }}\\ \colorbox{yellow}{\textbf{   }}\\ \colorbox{yellow}{\textbf{   }}\\ \colorbox{yellow}{\textbf{   }}\\ \colorbox{yellow}{\textbf{   }}\\ \colorbox{yellow}{\textbf{   }}}

\marginpar{\vspace{-5em}\colorbox{yellow}{\textbf{R-3}}\\ \colorbox{yellow}{\textbf{R-5}}}

\iftr
\begin{figure}[t]
\centering
\vspaceSQ{-1em}
\includegraphics[width=1.0\columnwidth]{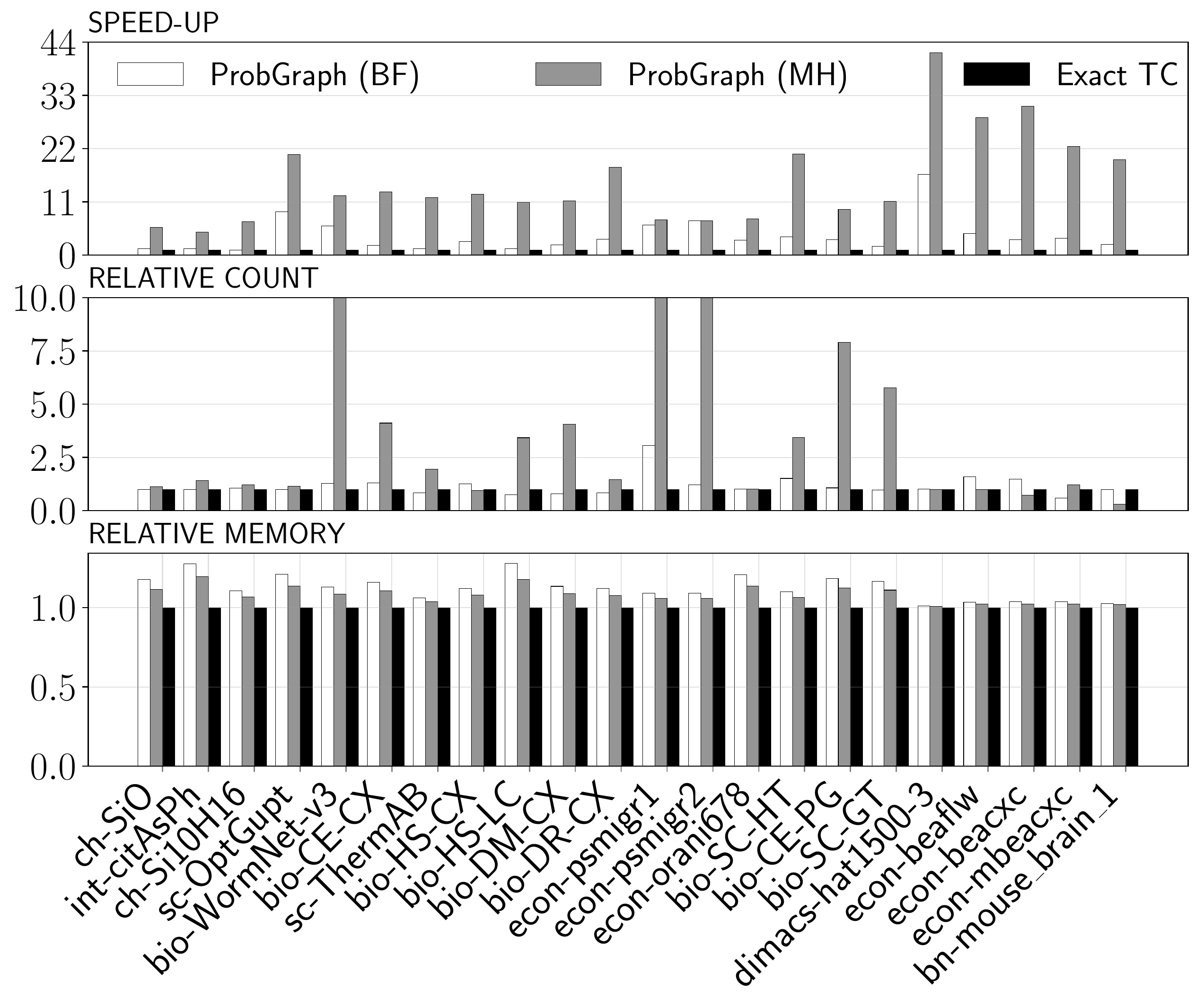}
\vspaceSQ{-0.5em}
\caption{\textmd{Analysis of performance/accuracy/memory of ProbGraph, for
Clustering based on the Jaccard Coefficient score for Vertex
Similarity. For relative counts of clusters, we set a cutoff for the value of 10 
for clarity of plots.}}
\label{fig:cl-dets}
\vspaceSQ{-1em}
\end{figure}
\fi

\ifconf

\begin{figure}[h]
\centering
\vspaceSQ{-1em}
%
\begin{subfigure}[t]{0.22 \textwidth}
\centering
\includegraphics[width=\textwidth]{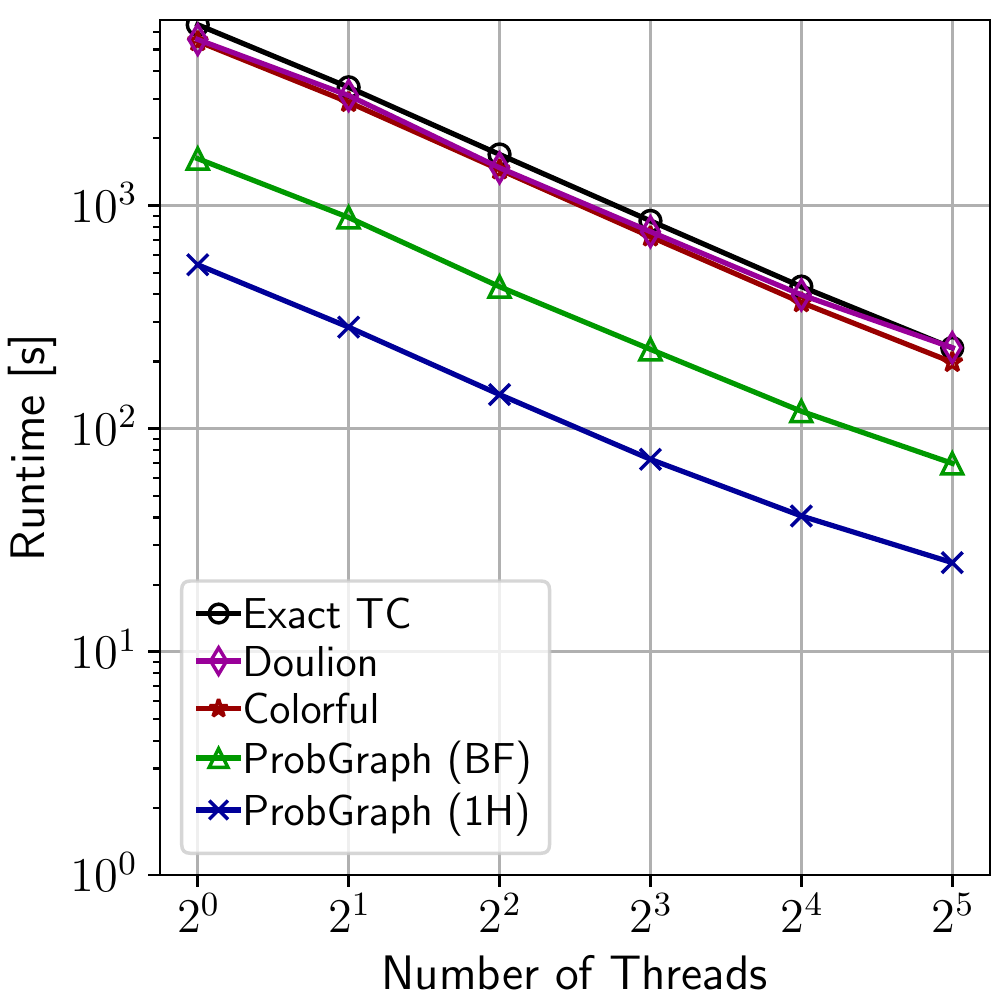}
\vspaceSQ{-1.0em}
\caption{\textmd{Strong scaling (TC).}}
\label{fig:scaling-strong-tc}
\end{subfigure}
%
\begin{subfigure}[t]{0.22 \textwidth}
\centering
\includegraphics[width=\textwidth]{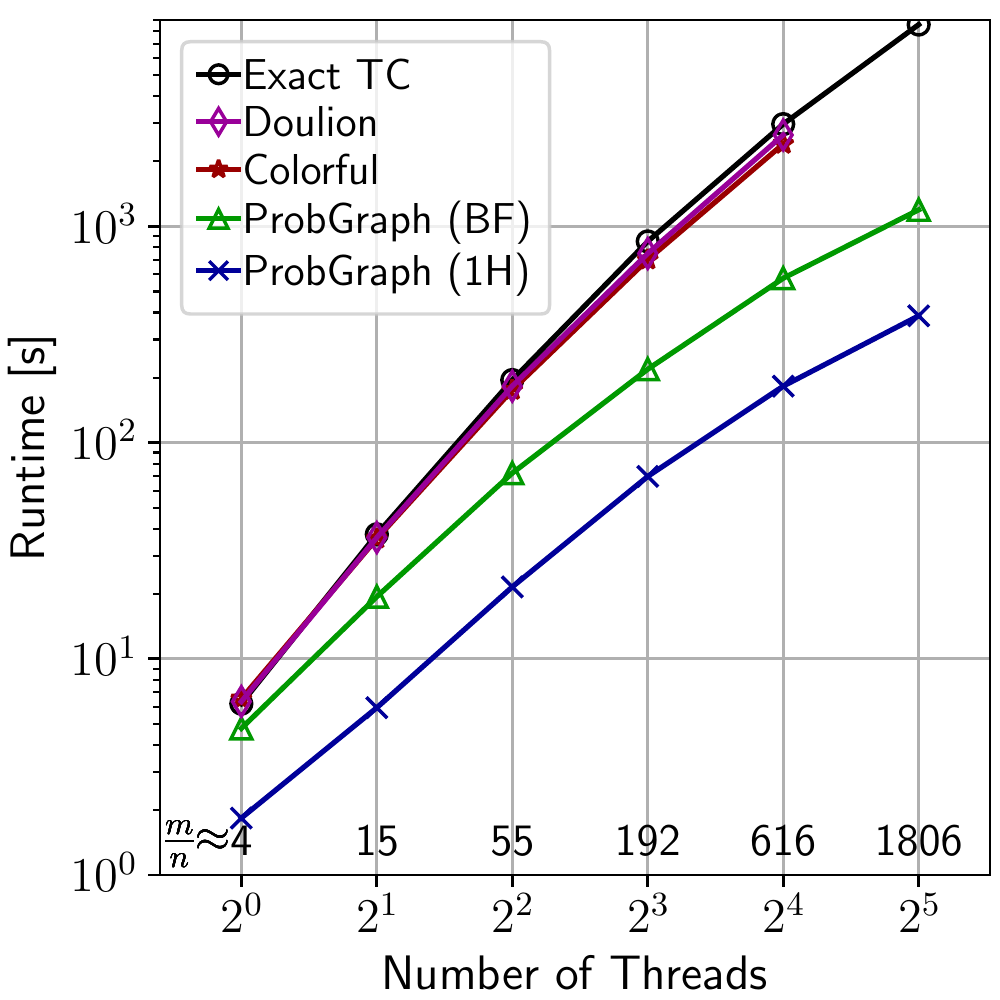}
\vspaceSQ{-1.0em}
\caption{\textmd{Weak scaling (TC).}}
\label{fig:scaling-weak-tc}
\end{subfigure}
\vspaceSQ{-0.5em}
\caption{\textmd{Analysis of scaling of representative baselines. Doulion and
Colorful TC are unable to scale to 32 threads due to memory issues.
Other baselines resulting in much higher runtimes are excluded
to preserve plot clarity.}}
\label{fig:scale-analysis}
\vspaceSQ{-1em}
\end{figure}

\else

\begin{figure*}[t]
\centering
\vspaceSQ{-1em}
%
\begin{subfigure}[t]{0.22 \textwidth}
\centering
\includegraphics[width=\textwidth]{plot_strong_scaling_tc.pdf}
\vspaceSQ{-1.0em}
\caption{\textmd{Strong scaling (TC).}}
\label{fig:scaling-strong-tc}
\end{subfigure}
\begin{subfigure}[t]{0.22 \textwidth}
\centering
\includegraphics[width=\textwidth]{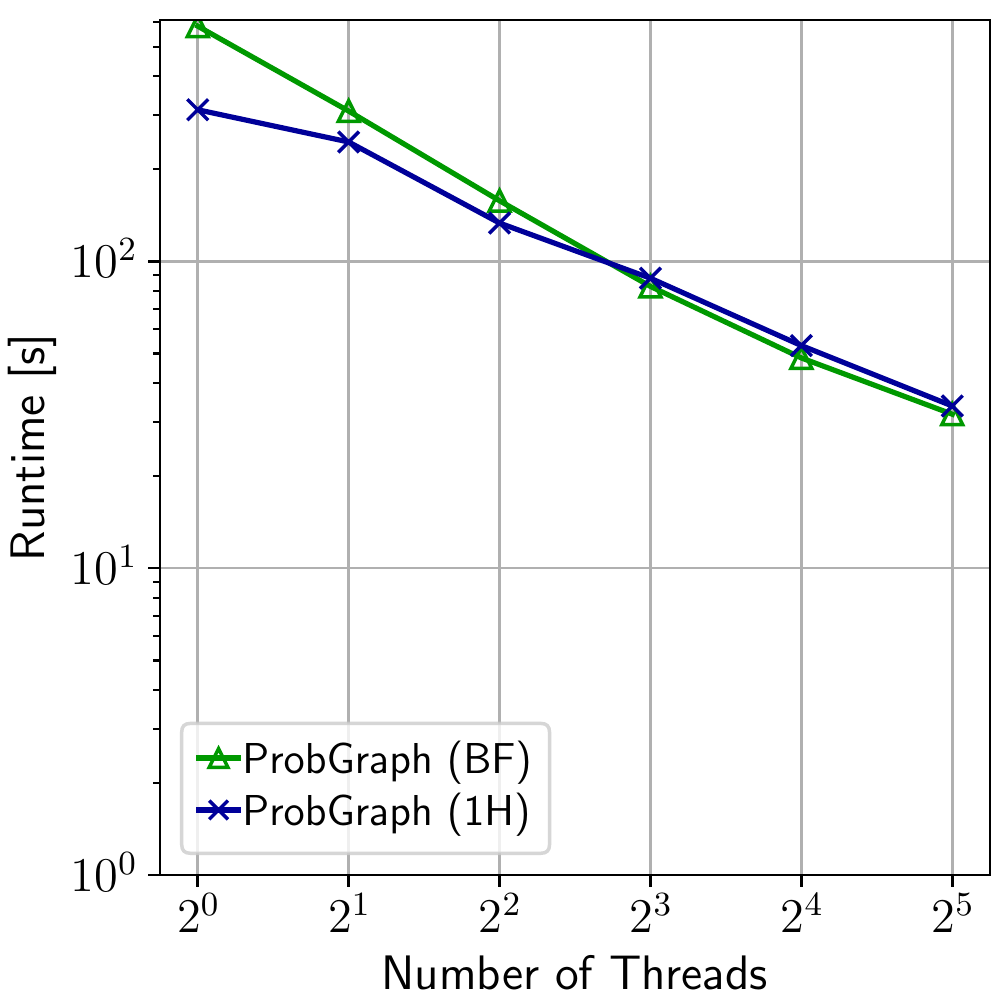}
\vspaceSQ{-1.0em}
\caption{\textmd{Strong scaling (Clustering, Common Neighbors Vertex Similarity).}}
\label{fig:scaling-strong-jp-cn}
\end{subfigure}
\begin{subfigure}[t]{0.22 \textwidth}
\centering
\includegraphics[width=\textwidth]{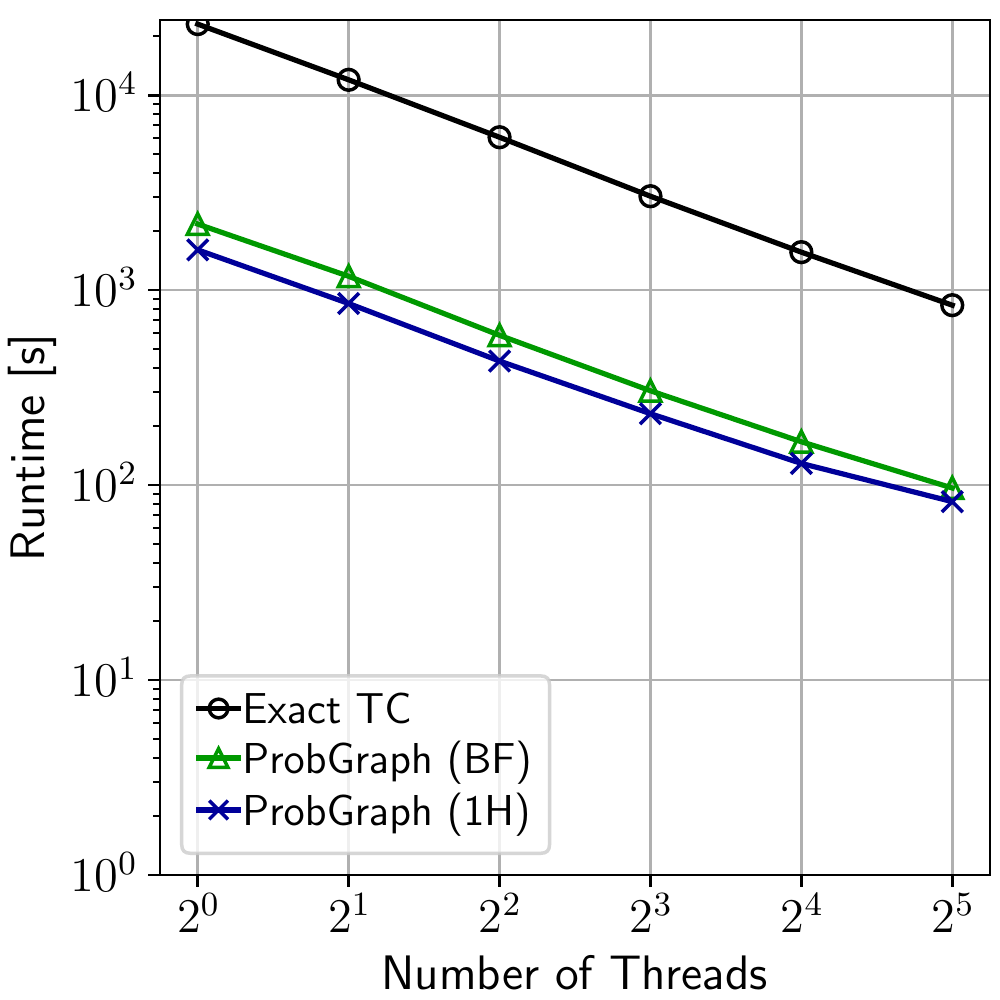}
\vspaceSQ{-1.0em}
\caption{\textmd{Strong scaling (Clustering, Jaccard Vertex Similarity).}}
\label{fig:scaling-strong-jp-jc}
\end{subfigure}
\begin{subfigure}[t]{0.22 \textwidth}
\centering
\includegraphics[width=\textwidth]{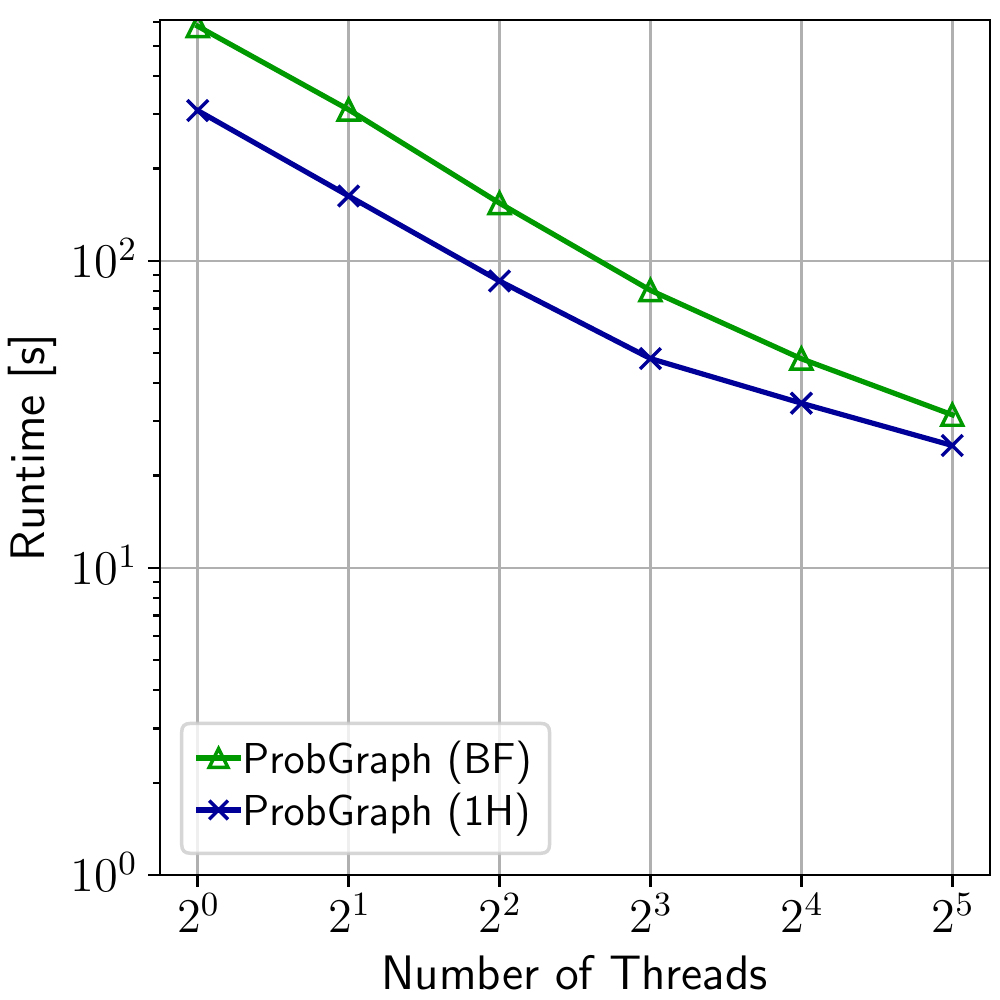}
\vspaceSQ{-1.0em}
\caption{\textmd{Strong scaling (Clustering, Overlap Vertex Similarity).}}
\label{fig:scaling-strong-jp-ov}
\end{subfigure}
%
%
%
\begin{subfigure}[t]{0.22 \textwidth}
\centering
\includegraphics[width=\textwidth]{plot_weak_scaling_tc.pdf}
\vspaceSQ{-1.0em}
\caption{\textmd{Weak scaling (TC).}}
\label{fig:scaling-weak-tc}
\end{subfigure}
\begin{subfigure}[t]{0.22 \textwidth}
\centering
\includegraphics[width=\textwidth]{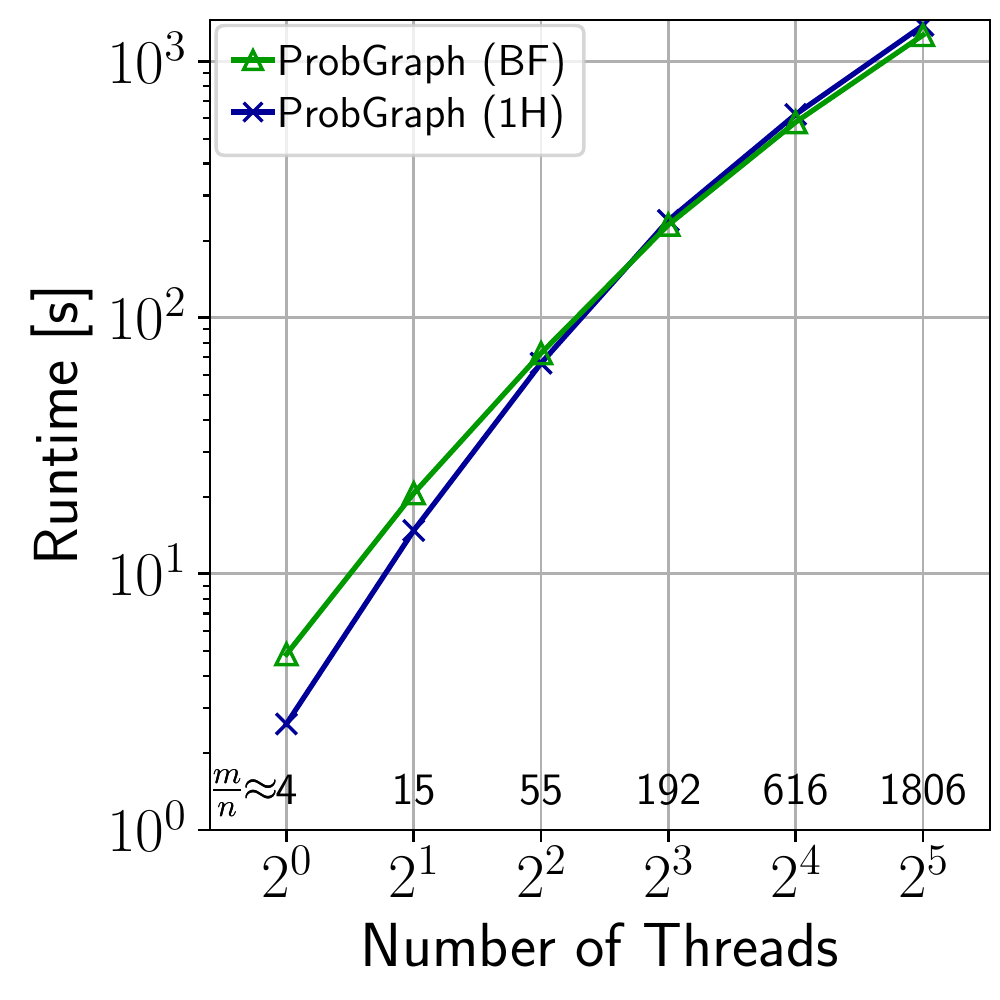}
\vspaceSQ{-1.0em}
\caption{\textmd{Weak scaling (Clustering, Common Neighbors Vertex Similarity).}}
\label{fig:scaling-weak-jp-cn}
\end{subfigure}
\begin{subfigure}[t]{0.22 \textwidth}
\centering
\includegraphics[width=\textwidth]{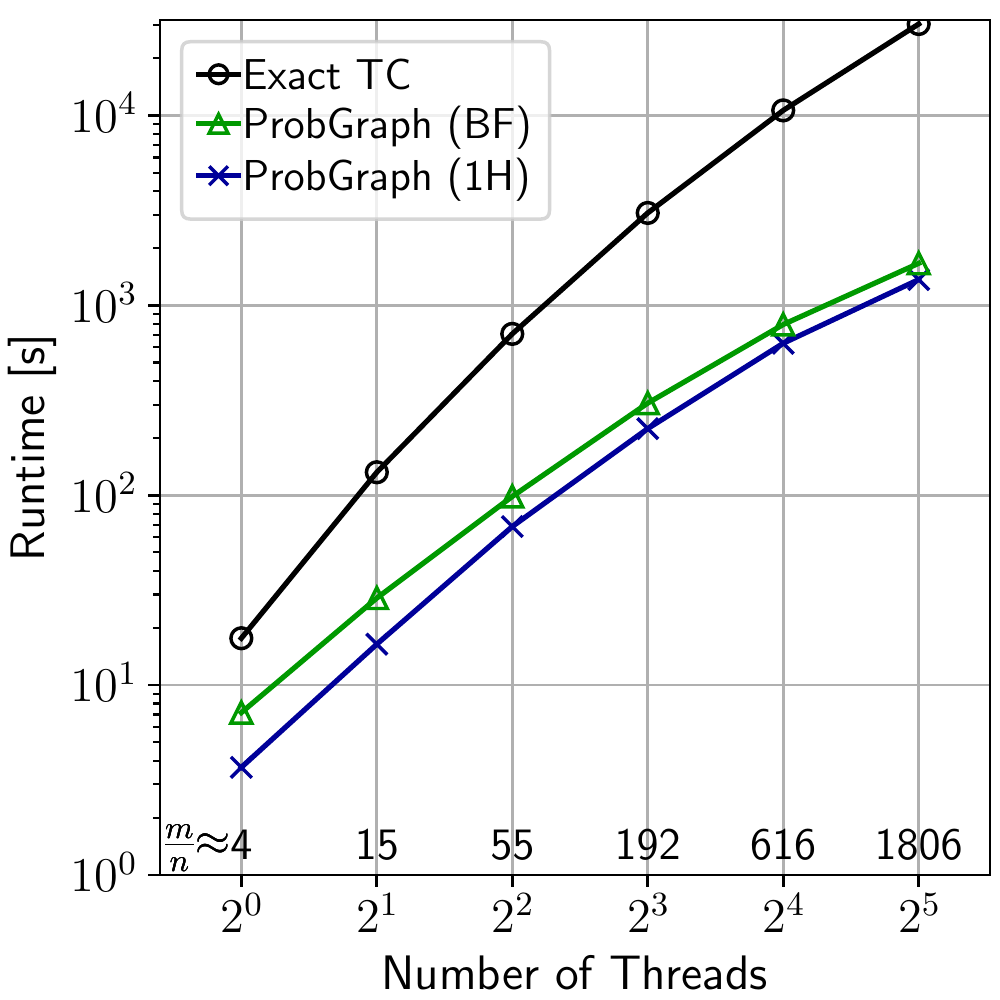}
\vspaceSQ{-1.5em}
\caption{\textmd{Weak scaling (Clustering, Jaccard Vertex Similarity).}}
\label{fig:scaling-weak-jp-jc}
\end{subfigure}
\begin{subfigure}[t]{0.22 \textwidth}
\centering
\includegraphics[width=\textwidth]{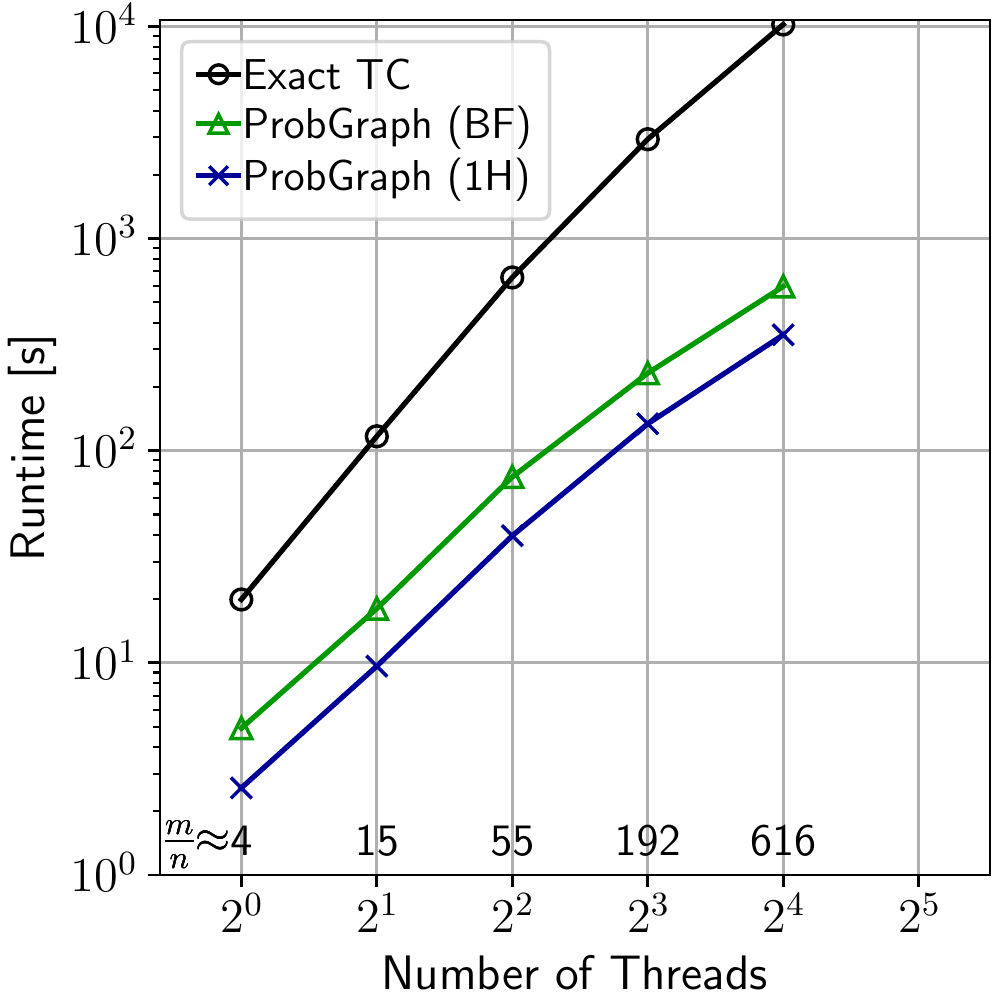}
\vspaceSQ{-1.5em}
\caption{\textmd{Weak scaling (Clustering, Overlap Vertex Similarity).}}
\label{fig:scaling-weak-jp-ov}
\end{subfigure}
\vspaceSQ{-0.5em}
\caption{\textmd{Analysis of scaling of representative baselines.}}
\label{fig:scale-analysis}
\vspaceSQ{-1em}
\end{figure*}

\fi

\begin{figure}[h]
\centering
\vspaceSQ{-1em}
%
\begin{subfigure}[t]{0.22 \textwidth}
\centering
\includegraphics[width=\textwidth]{plot_strong_scaling_jp-cn.pdf}
\vspaceSQ{-1.5em}
\caption{\textmd{\hl{Strong scaling.}}}
\label{fig:scaling-strong-jp-cn}
\end{subfigure}
%
\begin{subfigure}[t]{0.22 \textwidth}
\centering
\includegraphics[width=\textwidth]{plot_weak_scaling_jp-cn.pdf}
\vspaceSQ{-1.5em}
\caption{\textmd{\hl{Weak scaling.}}}
\label{fig:scaling-weak-tc}
\end{subfigure}
\vspaceSQ{-0.5em}
\caption{\textmd{\hl{Scaling results for Clustering (Common Neighbors), illustrating
comparable scaling performance of both BF and MH.}}}
\label{fig:scale-analysis-cn}
\vspaceSQ{-1em}
\end{figure}

\marginpar{\vspace{-12em}\colorbox{yellow}{\textbf{R-3}}\\ \colorbox{yellow}{\textbf{   }}\\ \colorbox{yellow}{\textbf{   }}\\ \colorbox{yellow}{\textbf{   }}\\ \colorbox{yellow}{\textbf{   }}\\ \colorbox{yellow}{\textbf{   }}\\ \colorbox{yellow}{\textbf{   }}\\ \colorbox{yellow}{\textbf{   }}\\ \colorbox{yellow}{\textbf{   }}}


\enlargeSQ
\enlargeSQ
\enlargeSQ

\subsection{Analysis of Distributed-Memory Executions}

ProbGraph is seamlessly applicable to both shared- and distributed-memory
settings.  Due to the small sizes of neighborhood sketches, we never have to
distribute any sketch across two compute nodes. Intersecting two sketches
located on two different nodes can be done using different approaches. We
currently employ a straightforward scheme in which sketches are transferred
across the network using point-to-point message passing, in order to conduct
intersections on a single node.  This offers significant reductions in overall
communication times, compared to standard baselines, of up to 4$\times$ for
different graphs.  These reductions are a direct consequence of small sizes of
used sketches.  However, more advanced schemes based one-sided
communication~\cite{besta2015active, fompi-paper, besta2014fault} could easily be employed; we leave
them for future work.

\subsection{Analysis of Construction Costs}

We also analyze the \emph{construction costs} of PG.  Time to
construct a single neighborhood follows asymptotic complexities in
Table~\ref{tab:constr}; it is not a bottleneck and is lower than 50\% of
the algorithm execution time in the majority of cases. Only using very large
$b$ may bring the preprocessing time larger than the single graph algorithm
execution time, but (1) PG benefits from low $b \in \{1, 2\}$ in any case, and
(2) the PG representation of a graph has to be computed only once, and it can
be then freely used with any considered graph algorithms.


\if 0
\subsection{Estimators of $|X|$}

%
\begin{figure*}[t]
\centering
\vspaceSQ{-2em}
\begin{subfigure}[t]{0.49 \textwidth}
\centering
\includegraphics[width=1.0\columnwidth]{plot_single_set_b4_mem10.pdf}
\vspaceSQ{-2.0em}
\caption{\textmd{Storage budget $s = 10\%$.}}
\label{fig:bf-analysis-s10}
\end{subfigure}
\begin{subfigure}[t]{0.49 \textwidth}
\centering
\includegraphics[width=1.0\columnwidth]{plot_single_set_b4_mem33.pdf}
\vspaceSQ{-2.0em}
\caption{\textmd{Storage budget $s = 33\%$.}}
\label{fig:bf-analysis-s10}
\end{subfigure}
\vspaceSQ{-0.5em}
\caption{\textmd{Analysis of the accuracy of PG estimators $\widehat{|X|}_S$ and $\widehat{|X|}_L$.
Parametrization: $b = 4$ in $\widehat{|X|}_S$, and $b=1$ in $\widehat{|X|}_L$.}}
\label{fig:bf-analysis}
\vspaceSQ{-1em}
\end{figure*}
%

Additionally, we also analyze estimators of single sets, to understand better the differences between $\widehat{|X|}_S$
and $\widehat{|X|}_L$, see Figure~\ref{fig:bf-analysis} for the results. 
We use the storage budget~$s
= 10\%$ and $33\%$. We set $b = 4$ for $\widehat{|X|}_S$ and $b = 1$ (which
implies $\delta_{B_X, b} = 1$) for $\widehat{|X|}_L$.  This choice is guided by
the design of these estimators: low $b$ is advantageous for $\widehat{|X|}_L$
(unless its $B_X$ is \emph{very} large), cf.~\cref{sec:single-class}.
For each graph, we derive the BF representations of all its vertex
neighborhoods, and compute the relative differences between these BF
representations and the original neighborhood sizes $|(\widehat{|X|}_\bullet
- |X|)| / |X|$. We then summarize these differences, for each graph, using
boxplots.
While both $\widehat{|X|}_S$ and $\widehat{|X|}_L$ ensure high accuracy
(medians of relative differences are almost always lower than 25\%),
$\widehat{|X|}_S$ is marginally but consistently better than $\widehat{|X|}_L$.
Still, the fact that lower $b$ benefits $\widehat{|X|}_L$ means
that $\widehat{|X|}_L$ is faster to construct and use as it requires evaluating
fewer hash functions.

We also investigate in more detail how the accuracy of $\widehat{|X|}_S$ and
$\widehat{|X|}_L$ change with varying $b$ and $B_X$
(Figure~\ref{fig:bf-analysis-dets}). We pick the econ-orani678 dataset, others
follow similar accuracy patterns.
%
When varying $b$, the accuracy of $\widehat{|X|}_S$ roughly follows the
accuracy patterns of the corresponding MSE bounds,
cf.~Figure~\ref{fig:estimators}. 
Specifically, larger $b$ (for a fixed storage budget $s$) leads to 
the gradual improvement of accuracy. However, at some point, we observe the
onset of the accuracy loss, as the BF starts to fill in with ones beyond a
certain level. In the bound (Eq.~(\ref{bound_ss_BF})), this is reflected by
the fact that the $exp$ function starts to dominate and the MSE starts to
increase, decreasing accuracy.
As we increase $s$ (Figure~\ref{fig:bf-analysis-dets-BX}), the
BF size~$B_X$ also gets larger, enabling more accuracy in
$\widehat{|X|}_S$.

As expected, the accuracy deteriorates as $b$ increases for $\widehat{|X|}_L$
because we are deflating the number of ones~$B_{X,1}$ in the BF by the
decreasing factor~$\frac{1}{b}$. This is in line with the result shown in
Proposition~\ref{bound_ss_simple_BF} where we set $\delta_{B_X, b} =
\frac{1}{b}$. Indeed, as we increase $b$, we automatically decrease the value
of $\delta_{B_X, b}$, which leads to a larger bias (due to the smaller
expectation of $\widehat{|X|}_L$) that is \emph{not} compensated with the
decrease of the variance because the storage budget $s$ is
fixed to 33\% (thus fixing $B_X$).

For increasing~$B_X$, we observe the complementary behavior
(Figure~\ref{fig:bf-analysis-dets-BX}). Indeed, even when $b$ is fixed, we
still deflate the number of ones in BF~$B_{X,1}$ if $b \geq 2$. However, as the
storage budget$s$ increases, $B_X$ also increases and the probability of hash
collisions (i.e., false positives) decreases. For the specific choice of $b =
3$, we expect $B_{X,1} \approx 3|X|$ for very large $B_X$.

Finally, we also consider another BF estimator~\cite{papapetrou2010cardinality,
harmouch2017cardinality}, given by the expression $\widehat{|X|} = -
\frac{\log\fRB{1 - {B_{X,1}}/{B_X}}}{b \log\fRB{1
- {1}/{B_X}}}$. Its accuracy is almost always identical to, or nearly worse
than, $\widehat{|X|}_S$ (with relative differences of medians and respective
quartiles differing by less than 1\%), and we exclude it from the plots.

\begin{figure}[h]
\centering
%
\centering
  \begin{subfigure}[t]{0.22 \textwidth}
    \centering
    \includegraphics[width=\textwidth]{plot_vary_b_single_set_mem33.pdf}
    \vspaceSQ{-1.5em}
    \caption{\textmd{Varying $b$; Storage budget $s = 33\%$.}}
    \label{fig:bf-analysis-dets-b}
  \end{subfigure}
  \quad
  \begin{subfigure}[t]{0.22 \textwidth}
    \centering
    \includegraphics[width=\textwidth]{plot_vary_mem_single_set_b3.pdf}
    \vspaceSQ{-1.5em}
    \caption{\textmd{Varying storage budget $s$; $b = 3$.}}
    \label{fig:bf-analysis-dets-BX}
  \end{subfigure}

\vspaceSQ{-1em}
\caption{\textmd{Analysis of the accuracy of PG estimators of $|X|$.
}}
\label{fig:bf-analysis-dets}
\vspaceSQ{-1em}
\end{figure}

%
We conclude that both $\widehat{|X|}_S$ and $\widehat{|X|}_L$ offer high
accuracy. While
$\widehat{|X|}_S$ is consistently more accurate, $\widehat{|X|}_L$ needs lower
$b$ and thus offers more performance.
It is worth to note that even a low storage budget of $33\%$ enables the median
accuracy of close to 100\% for many graphs, with the worst case accuracy (for
the worst preserved neighborhood) of more than $\approx$90\%.

\fi

\section{\hl{Beyond Bloom Filter and MinHash}}

\hl{The generic nature of PG enables using other probabilistic representations in
place of BF and MH. As an example, we discuss how to use PG with \emph{K
Minimum Values} (KMV)~\mbox{\cite{bar2002counting}}, another sketch that was
originally developed to accelerate counting distinct elements in a data stream.
To construct a KMV representation~\mbox{$\mathcal{K}_X$} of a set~\mbox{$X$},
one evaluates the associated hash function \mbox{$h: X \to (0;1]$} for all
elements of~\mbox{$X$}. Then, one selects \mbox{$k$} \emph{smallest} hashes
that constitute the final KMV representation \mbox{$\mathcal{K}_X$} of the
set~\mbox{$X$}. One can then estimate \mbox{$|X|$} with
\mbox{$\widehat{|X|}_{KMV}$} \mbox{$= \frac{k-1}{\max{\mathcal{K}_X}}$}. Note
that \mbox{$\mathcal{K}_X$} differs from a MH \mbox{$\mathcal{M}_X$} because,
as opposed to \mbox{$\mathcal{M}_X$}, it contains hashes.}

\hl{Now, one can use KMV to also estimate \mbox{$|X \cap Y|$}, and then use it
within PG.  For this, one constructs a KMV \mbox{$\mathcal{K}_{X \cup Y}$} by
taking the \mbox{$k$} smallest elements from \mbox{$K_X \cup K_Y$}. Then, by
the KMV properties, we have  \mbox{$\widehat{|X \cup Y|}_{KMV}$} \mbox{$=
\frac{k-1}{\max{\mathcal{K}_{X \cup Y}}}$}. Finally, \mbox{$\widehat{|X \cap
Y|}_{KMV}$} \mbox{$= |X| + |Y| - \widehat{|X \cup Y|}_{KMV}$}, which can be
directly used in PG formulations of graph algorithms. We provide concentration bounds for all the KMV estimators defined above in the Appendix.}

\marginparsep=1em
\marginpar{\vspace{-10em}\colorbox{yellow}{\textbf{R-2}}\\ \colorbox{yellow}{\textbf{R-4}}}

\iftr
\section{Related Work: Summary}

\enlargeSQ

We summarize related work; some parts are already covered in
Sections~\ref{sec:intro} and~\ref{sec:theory}.
First, there exist more {set-related probabilistic data structures}, for
example HyperLogLog~\cite{flajolet2007hyperloglog}. \emph{ProbGraph embraces
such data structures:} while we focus on BF~\cite{bloom1970space}
and MH~\cite{broder1997resemblance}, one could
easily extend ProbGraph with other structures; we leave details for future
work.

Next, there are many approximate graph algorithms~\cite{riondato2016fast,
borassi2016kadabra, riondato2018abra, geisberger2008better,
bader2007approximating, chazelle2005approximating, dumbrava2018approximate,
iyer2018asap, besta2020substream, chechik2014better, roditty2013fast,
slota2014complex, roditty2013fast, boldi2011hyperanf, echbarthi2017lasas}.
ProbGraph differs from them as it can approximate any algorithm or scheme that
uses $|X \cap Y|$, set membership query, and others, where $X$ and $Y$ are
arbitrary sets of vertices or edges (all our theoretical and most of empirical
results are directly applicable to any sets). Moreover, ProbGraph {is simple}:
all one has to do is to plug in a selected set representation. 

A few existing general approaches for {approximate graph computations} 
usually target specific problems or they do not come with
guarantees on the quality of outcomes~\cite{shang2014auto, Rahman:2014,
singh2018scalable}. As shown in Section~\ref{sec:theory}, unlike
ProbGraph, specific {schemes for TC} do not offer strong
accuracy guarantees~\cite{tsourakakis2009doulion, pagh2012colorful,
Rahman:2014, iyer2018asap, iyer2018bridging, bandyopadhyay2016topological,
besta2019slim}.

ProbGraph's probabilistic representations of vertex sets are a form of graph
compression~\cite{besta2018survey}, and they could be used to extend existing
compressed graph representations and paradigms~\cite{besta2018log,
besta2019slim}.

There exist a few works on {using BF or MH specific single graph problems,
usually in the context of evolving graphs~\cite{ediger2010massive,
galhotra2015tracking, besta2019practice, saha2019reachability,
bandyopadhyay2016topological}, which is outside PG's scope.

Approximating the triangle count in time
less then linear in the size of the input was shown in
\cite{eden2017approximately}. This has been later generalized to
approximating the number of $k$-cliques in a graph \cite{eden2020approximating}.
Improved bounds are known when the arboricity of the graph is small
\cite{eden2020faster}. Assuming we can sample edges uniformly, better
algorithms are also known \cite{assadi2018simple,biswas2021towards}.
Unlike PG, these schemes are specific to selected graph problems and
graphs with certain properties such as low arboricity.

There are many works on counting or finding different graph patterns
(also called motifs, graphlets, or
subgraphs)~\cite{besta2021graphminesuite, besta2021sisa, chakrabarti2006graph,
washio2003state, lee2010survey, rehman2012graph, gallagher2006matching,
ramraj2015frequent, jiang2013survey, aggarwal2010managing, tang2010graph,
leicht2006vertex, besta2017push, liben2007link, ribeiro2019survey, lu2011link,
al2011survey, bron1973algorithm, cazals2008note, DBLP:conf/isaac/EppsteinLS10,
DBLP:journals/tcs/TomitaTT06, danisch2018listing, jabbour2018pushing, besta2021motif}.
PG can be used as a subroutine in different such works, offering speedups
while preserving high accuracy. 

Counting and listing simple patterns such as triangles have been recently used
to enhance the design of numerous models in Graph Neural
Networks~\cite{wu2020comprehensive, zhou2020graph, scarselli2008graph,
zhang2020deep, chami2020machine, hamilton2017representation,
bronstein2017geometric, kipf2016semi, gianinazzi2021learning,
bronstein2021geometric, wu2020comprehensive, zhou2020graph, besta2022parallel,
zhang2020deep, chami2020machine, hamilton2017representation,
bronstein2017geometric, zhang2019graph}. Such models could use PG to
accelerate expensive graph mining preprocessing costs. 

The straightforward parallelism in computing BF based estimators
implies that other architectures that offer massive parallelism may provide
even higher benefits. This includes FPGAs~\cite{de2018designing, kuon2008fpga,
besta2019graph, besta2020substream}, CGRAs~\cite{cong2014fully}, or
processing in-memory~\cite{mutlu2019, mutlu2020modern, ghose2019processing,
seshadri2013rowclone, gomez2021benchmarking, oliveira2021damov,
hajinazar2021simdram, seshadri2017ambit, ahn2015scalable_tes}. We leave these
studies for future work.

\iftr
Next, there are many \textbf{approximate graph
algorithms}~\cite{halldorsson1993still, de2004approximate, khot2008vertex,
de2004approximate, wang1995algorithms, riondato2018abra, gianinazzi2018communication}. ProbGraph differs
from them as \emph{it is general}: it can approximate any algorithm or scheme
that uses $|X \cap Y|$, set membership query, and others, where $X$ and $Y$ are
arbitrary sets of vertices or edges (all our theoretical and most of empirical
results are directly applicable to any sets). Moreover, ProbGraph \emph{is
simple}: all one has to do is to plug in a selected set representation and
implementations of $|X \cap Y|$, a set membership query, and any other related
schemes.
\fi

\ifall
Moreover, \citet{Rahman:2014}~shows a approximate triangle counting algorithm
using MCMC sampling. They do not include theoretical analysis and their
approach is not ammenable to direct comparison to the mentioned algorithms.
\cite{besta2019slim} show a general framework which, in the case of triangle
counting, gives the same algorithm as~\cite{tsourakakis2009doulion}.
\fi

\ifall

\cite{iyer2018asap}

When it comes to approximate TC one of the first algorithms was the sampling
based Doulion \cite{tsourakakis2009doulion}, where the graph is first
sparsified by considering each edge and removing it with a fixed probability
$p$ and then using an exact triangle counting algorithm to count the triangles.
Later Pagh et. al\ \cite{pagh2012colorful} improved the sampling scheme by
first coloring all vertices randomly and only counting triangles where all
vertices have the same color.  Another way to calculate an approximate TC is to
calculate an unbiased estimate of the transitivity as defined in Chapter
\ref{ch:background} by sampling triples directly.  This method has been applied
by Rahman et. al\ \cite{Rahman:2014}, who sample triples from a graph using
Markov chain Monte Carlo sampling strategies.

\fi

\ifall\maciej{fix}

On the side of probabilistic data structures both BF and MinHash are popular
tools in research.  BF were first designed by Bloom \cite{bloom1970trade} in
1970 and are widely applied until today in databases \cite{mullin1990optimal},
networking \cite{perino2011reality, song2005fast}, spell checking
\cite{domeij1994detection}, key value store \cite{lu2012bloomstore} and others.
Recent efforts tried to lower the false positive probability, for instance
Benoit et. al\ \cite{benoit2006} optimized the implementation
\cite{kirsch2008less} or extended the data structure to support deletion
\cite{rothenberg2010deletable}.
\fi

\ifall
The MinHash first described by Broder \cite{broder2000} in 2000 is well-known
as a tool for fingerprinting and locality sensitive hashing of complex datasets
like molecules \cite{berlin2015assembling}, audio data \cite{baluja2007audio},
images \cite{chum2008near} and also graphs in graph databases
\cite{teixeira2012min}.  The probably closest work to ours, in terms of graph
algorithms using set operations, is the SIMD based implementation of set
intersection applied to TC, largest clique detection and others by Han et. al\
\cite{han2018speeding}. Of course many other graph algorithms have been
formulated in terms of set algebra in the past, for instance finding the
maximum independent set \cite{weight2001}.
\fi

\fi
\section{Conclusion}

\enlargeSQ
\enlargeSQ
\enlargeSQ

We propose ProbGraph, a parallel graph representation that enables simple,
general, and high-performance approximate graph computations.
The key idea is to sketch sets of vertices, and the cardinality of the
intersection of such sets, with probabilistic set representations such as Bloom
filters or MinHash. Such representations usually offer much higher performance
than exact set representations, while only requiring small additional storage.
Importantly, they can be treated as a black box and seamlessly incorporated
into graph algorithms. We show that ProbGraph is simple to use while offering
speedups of more than 50$\times$ for some graphs and retaining high accuracy of
more than 90\% for problems such as Triangle Counting, when comparing to tuned
exact parallel baselines on 32 cores.

We support ProbGraph with an in-depth theoretical underpinning, in which we
derive novel statistical concentration bounds on the accuracy of ProbGraph
approximations. Our bounds are the first exponential or polynomial quality
bounds for the accuracy of Bloom filters and MinHash.  As such, they are of
interest to the broader audience beyond graph analytics.
We also use the work-depth formal analysis to show that ProbGraph has also theoretical
advantages over parallel baselines in parallel computational complexity.

Set algebra is common in many graph problems. Hence, we expect that ProbGraph
and its set-centric approach for approximate graph analytics may be used for
other problems.

\vspace{1em}

%

\macb{Acknowledgements:}
We thank Hussein Harake, Colin McMurtrie, Mark Klein, Angelo Mangili, and the
whole CSCS team granting access to the Ault and Daint machines, and for their
excellent technical support.
We thank Timo Schneider for immense help with computing infrastructure at SPCL.
We thank Aryaman Fasciati and Sebastian Leisinger for help
in the early stages of the project.
This research received funding from 
the European Research Council
\includegraphics[height=1em]{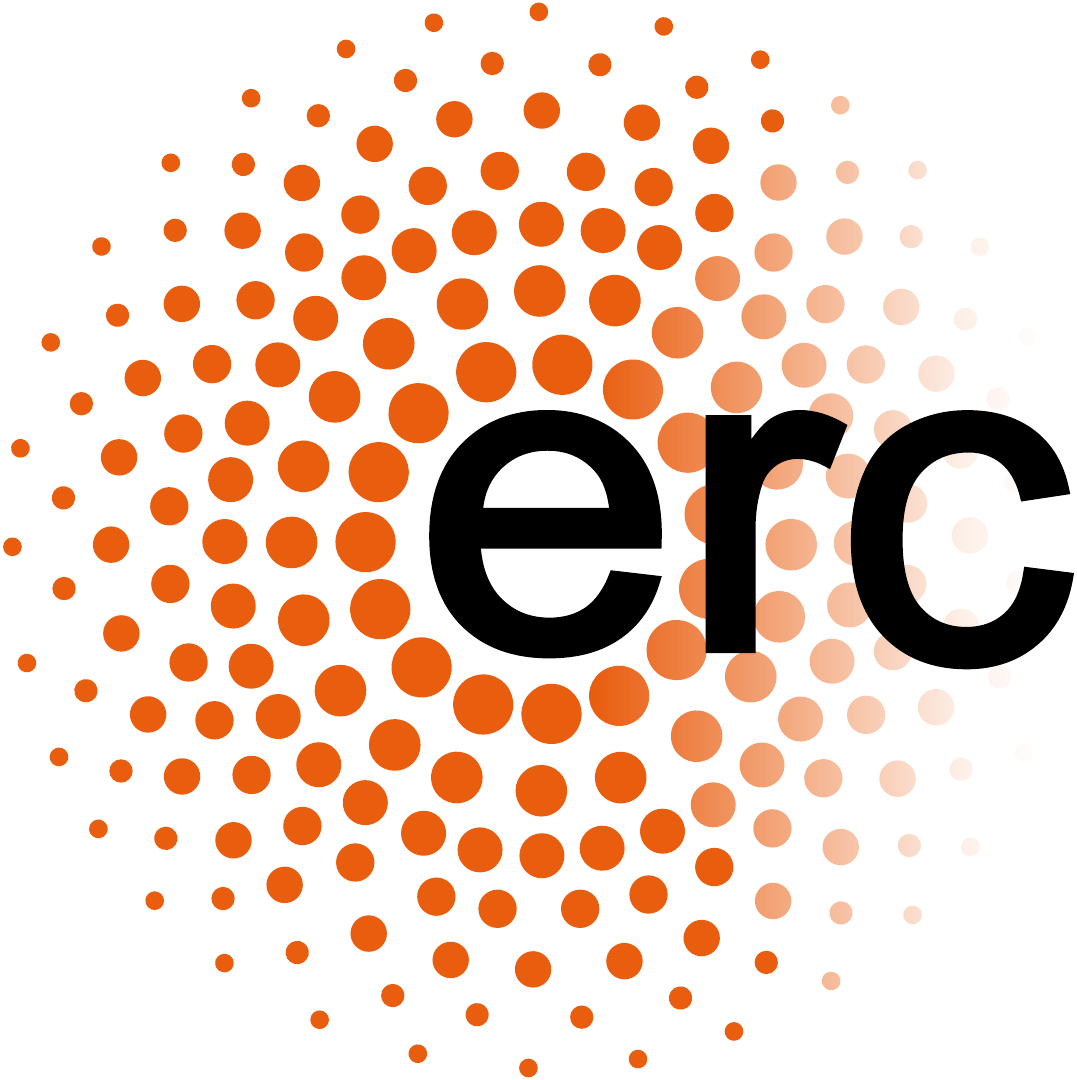}
(Project DAPP, No.~678880; Project
PSAP, No.~101002047), Huawei, and MS Research
through its Swiss Joint Research Centre.
Jakub Tětek was supported by the VILLUM Foundation grant 16582.

\normalsize


\bibliographystyle{IEEEtran}
\bibliography{references}

\iftr 
\appendix

\vspaceSQ{2em} 

Here, we provide proofs and various details omitted from the main part of the manuscript.

\subsection{Probability Distributions}

Consider a sequence of $n$ trials (experiments).  A single trial is a Bernoulli
trial, i.e., it gives a \emph{success} or a \emph{failure} outcome with a
probability $p$ or $1-p$, respectively.  In the context of PG, a single
trial will correspond to some property of a given set representation, for
example whether a given bit in a BF has 1 or 0.
Now, if all the trials are independent (i.e., obtaining a specific outcome does
not impact the number of such potential outcomes in future trials), the
resulting distribution is \textbf{binomial} (commonly denoted as $Bin (n, p)$).
Otherwise (i.e., obtaining a specific outcome decreases by one the number of
such potential outcomes in future trials), it is \textbf{hypergeometric}
(usually denoted as $Hyper (N, K,n)$). Both distributions enable deriving specific \emph{probabilities} for the number of either success or failure trials.

\subsection{Plug-In Principle}

Some concentration bounds that we present in this paper depend on a given set 
size (i.e., $|X|$ appears in the formulation of the bound). Thus if we want to
obtain an estimate of the upper bound, we need to substitute the
estimator~$\widehat{|X|}$ instead of~$|X|$ whenever $|X|$ appears. This
procedure is known as the plug-in principle and it is well established in
statistics. However this method is safe to use only if the estimator we substitute
is at least consistent for the parameter of interest (i.e. $|X|$). Indeed this is the
case for all the BF and MinHash estimators presented in this paper.

\subsection{BF Sketches for Single Sets}\label{app_single_sets}

We provide extended results for BF for single sets.

\subsubsection{Concentration Bound for BF Single Sets}

\begin{prop}\label{bound_ss_BF}
	Let $\widehat{|X|}_S$ be the estimator defined in
	Eq.~(\ref{eq:est_ss_bf}). For $B_X,b \in \mathbb{N}$ such that $b = o(\sqrt{B_X})$, and a set $X$ such that $b |X| \leq 0.499 B_X \log B_X$ the following holds:
	
	\small
	
	\[
	E\fSB{\fRB{\widehat{|X|}_S - |X|}^2} \leq (1+o(1))\left(e^{\frac{|X| b }{ (B_X-1)}} \frac{B_X}{b^2} - \frac{B_X}{b^2} - \frac{|X|}{b}\right)
	\]
	\normalsize
\end{prop}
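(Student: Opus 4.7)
The plan is to write the estimator as $\widehat{|X|}_S = g(W)$ where $W := B_X - B_{X,1}$ is the number of unset bits and $g(w) = -\frac{B_X}{b}\log(w/B_X)$, and then bound the mean squared error via a second-order delta-method argument. The first step is to analyze $W$ as a balls-in-bins problem: inserting $|X|$ elements with $b$ independent hash functions amounts to throwing $n := b|X|$ balls uniformly into $B_X$ bins and counting empty bins. Using indicator variables $I_i$ for ``bin $i$ is empty,'' standard first/second-moment calculations give $E[W] = B_X(1-1/B_X)^{n}$ and
\[
\mathrm{Var}(W) = B_X(1-1/B_X)^{n} + B_X(B_X-1)(1-2/B_X)^{n} - B_X^2 (1-1/B_X)^{2n}.
\]

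Next, I would Taylor-expand $g$ around $E[W]$: writing $g(W) = g(E[W]) + g'(E[W])(W-E[W]) + \tfrac{1}{2}g''(\xi)(W-E[W])^{2}$ and decomposing the MSE as squared bias plus variance, the dominant contribution comes from the first-order variance term $[g'(E[W])]^{2}\,\mathrm{Var}(W)$. Since $g'(E[W]) = -\tfrac{1}{b}(1-1/B_X)^{-n}$, this becomes
\[
\frac{\mathrm{Var}(W)}{b^{2}(1-1/B_X)^{2n}} = \frac{B_X}{b^{2}}\!\left[(1-1/B_X)^{-n} - 1\right] + \frac{B_X(B_X-1)}{b^{2}}\!\left[\frac{(1-2/B_X)^{n}}{(1-1/B_X)^{2n}} - 1\right].
\]
I would then replace $(1-1/B_X)^{-n}$ by $e^{n/(B_X-1)}(1+o(1))$ and show that the fraction $(1-2/B_X)^{n}/(1-1/B_X)^{2n}$ is $1 - n/(B_X-1)^{2} + o(\cdot)$, so that the second bracket contributes at most $-n/b^{2} + o(1) = -|X|/b + o(1)$. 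Collecting these two pieces yields exactly the claimed right-hand side $e^{|X|b/(B_X-1)}B_X/b^{2} - B_X/b^{2} - |X|/b$, up to a $(1+o(1))$ factor.

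The remaining tasks are to absorb (a) the ``approximation bias'' $g(E[W]) - |X|$ and (b) the Taylor remainder involving $g''$. For (a), expanding $-B_X\log(1-1/B_X) = 1 + 1/(2B_X) + O(1/B_X^{2})$ shows $g(E[W]) - |X| = O(|X|/B_X)$, so its square is dominated by the variance term as long as $|X|^{2}/B_X^{2}$ is smaller than the leading $O(|X|/b)$ or $O(B_X/b^{2} \cdot e^{\cdot})$ contributions, which the hypotheses $b = o(\sqrt{B_X})$ and $b|X| \le 0.499\,B_X \log B_X$ guarantee. For (b), one uses the condition $b|X| \le 0.499 B_X\log B_X$ to ensure $E[W] \ge B_X^{0.501}$, keeping $g'', g'''$ bounded on a high-probability window around $E[W]$, and then applies Chebyshev's inequality to $W$ together with the crude global bound $|g(w)| \le \frac{B_X}{b}\log B_X$ to control the atypical event.

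The main obstacle I foresee is the last step: cleanly sweeping the higher-order Taylor remainder, cross terms, and the low-probability tail event $\{W \ll E[W]\}$ into the $(1+o(1))$ factor. The issue is that $g$ blows up as $w \to 0$, so a naive Taylor bound on $g''$ is useless; one must combine a local expansion on a ``typical'' window $|W-E[W]| \le B_X^{0.6}$ (say) with a separate tail estimate using Chebyshev and the variance computed above, and then verify that the two regimes are compatible under the precise assumption $b = o(\sqrt{B_X})$. This delicate case-split is where the bulk of the technical work lies, the rest being essentially algebraic simplification.
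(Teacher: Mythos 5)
Your overall architecture is close in spirit to the paper's proof: the paper also views the estimator as a function of the number of zero bits, centers the error at $\kappa = g(E[W])$ (your ``approximation bias'' term, handled there via a weighted triangle inequality rather than a second-order Taylor expansion), uses the Lipschitzness of the logarithm on a good event instead of controlling $g''$, and plugs in the same empty-bins variance asymptotics; your main-term algebra, which turns $[g'(E[W])]^2\mathrm{Var}(W)$ into $e^{|X|b/(B_X-1)}B_X/b^2 - B_X/b^2 - |X|/b$ up to $(1+o(1))$, is sound and matches the paper's computation. However, your plan for the atypical event is a genuine gap: Chebyshev is not strong enough. With your window, Chebyshev gives only a polynomially small failure probability (at best roughly $B_X^{-\Omega(1)}$, since $\mathrm{Var}(W)\le B_X$), and multiplying it by the crude global bound $g(w)^2 \le (B_X\log B_X/b)^2$ leaves a tail contribution that is polynomially \emph{large} in $B_X$. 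The right-hand side of the proposition, on the other hand, can be as small as $\Theta(|X|^2/B_X)$ (e.g.\ $\Theta(1/B_X)$ for constant $|X|$ and $b$), because $\frac{B_X}{b^2}(e^{y}-1)-\frac{|X|}{b}\approx \frac{B_X}{b^2}(e^y-1-y)$ with $y=b|X|/(B_X-1)$. So the tail term cannot be absorbed into the $(1+o(1))$ factor. The paper avoids this by invoking an \emph{exponential} concentration bound for the number of empty bins (Kamath et al., Theorem~2), giving failure probability $\exp(-\Omega(\mu^2/B_X)) \le \exp(-B_X^{\Omega(1)})$, which crushes the polynomial global bound; negative association plus a Chernoff-type bound, or McDiarmid's bounded-differences inequality applied to the $b|X|$ ball throws, would serve the same purpose, but Chebyshev will not.

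A second, related problem is your choice of an additive window $|W-E[W]|\le B_X^{0.6}$. Under the hypothesis $b|X|\le 0.499\,B_X\log B_X$, the only guaranteed lower bound on $\mu=E[W]=B_X(1-1/B_X)^{b|X|}$ is of order $B_X^{0.501-o(1)}$, which is \emph{smaller} than your window width; hence the window can contain $W$ arbitrarily close to $0$, exactly where $g$, $g'$ and $g''$ blow up, so the local expansion on the ``typical'' regime is not justified as stated. The fix is the one the paper uses: condition on a multiplicative event such as $W \ge \mu/(1+\epsilon)$, on which $\log(W/B_X)$ is Lipschitz with constant $(1+\epsilon)B_X/\mu \le (1+\epsilon)e^{2b|X|/(B_X-1)}$, and control the complement with the exponential bound above. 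With those two repairs (multiplicative window plus exponential tail bound), your delta-method computation of the main term does reproduce the claimed inequality, and the bias and remainder terms are absorbed as you describe.
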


\begin{proof}
	We now prove Proposition~\ref{bound_ss_BF}. Before bounding the mean squared error of $\widehat{|X|}_S$, we need to prove several simple bounds. Let $\mu = E[B_{0,X}] = B_X\left(1-\frac{1}{B_{X}}\right)^{b|X|}$. It holds:
	
	\begin{align*}
		\mu &\geq B_X\left(1-\frac{1}{B_{X}}\right)^{0.499 B_X \log B_X} \\&\geq B_X \exp\fRB{- \frac{0.499 \log B_X}{1-\frac{1}{B_X}}} \\
		&= B_X^{0.501-o(1)}
	\end{align*}
	
	Let us fix some $\epsilon>0$. Let $\mathcal{E}$ be the event that $B_{0,X} \geq \mu/(1+\epsilon)$. \cite[Theorem~2]{Kamath1995} prove that:\\
	$P(\mathcal{E}) \geq 1 - \exp(-\Omega(\mu^2/B_X)) \geq 1 - \exp\fRB{-B_X^{\Omega(1)}}$.\\
	We have $\widehat{|X|} = -\frac{B_X}{b} \log (B_{X,0}/B_X + \mathbb{I}[B_{X,0} = 0]) \leq B_X \log B_X$ and by our assumption, $|X| \leq b |X| \leq 0.499 B_X \log B_X$. It thus holds $(\widehat{|X|} - |X|)^2 \leq O(B_X^2 \log^2 B_X)$. Let $\kappa = - \frac{B_X}{b} \log \left(1-\frac{1}{B_{X}}\right)^{b|X|} = - B_X |X| \log \left(1-\frac{1}{B_{X}}\right)$. Moreover for $B_X \rightarrow \infty$, we have $\log(1-1/B_X) = -1/B_X + O(1/B_X^2)$. Therefore, it holds $\kappa = |X| + o(1)$.\\

	Now we are able to bound the mean squared error of $\widehat{|X|}_S$. We present each step of the derivation as a unique figure (see Fig. \ref{fig:app:bound_ss_BF} below) to improve the clarity of the content.\\

\begin{figure*}[t]
\begin{align}
		E[&(\widehat{|X|} - |X|)^2]\\ 
		=& E[(\widehat{|X|} - |X|)^2 | \mathcal{E}] P(\mathcal{E}) + E[(\widehat{|X|} - |X|)^2 | \neg \mathcal{E}] P(\neg \mathcal{E}) \\
		\leq& (1+\epsilon)E[(\widehat{|X|} - \kappa)^2 | \mathcal{E}] + \frac{1+\epsilon}{\epsilon}E[(\kappa -|X|)^2 | \mathcal{E}] + O(B_X^2 \log^2 B_X) \cdot \exp(-B_X^{\Omega(1)}) \label{eq:sort_of_triangle_ineq}\\
		\leq& \frac{(1+\epsilon) B_X^2}{b^2} E[(\log(B_{X,0}/B_X) - \log(1-1/B_X)^{b |X|})^2 | \mathcal{E}] + O((\kappa -|X|)^2) + \exp(-B_X^{\Omega(1)}) \\
		\leq& \frac{(1+\epsilon) B_X^2}{b^2} E[(\log(B_{X,0}/B_X) - \log(1-1/B_X)^{b |X|})^2 | \mathcal{E}] + O(|X|/B_X)\\
		\leq& \frac{(1+\epsilon)^2 B_X^2}{b^2} e^{2b |X| / B_X} E[(B_{X,0}/B_X - (1-1/B_X)^{b |X|})^2 | \mathcal{E}] + O(|X|/B_X) \label{eq:log_is_lipschitz}\\
		\leq& \frac{(1+\epsilon)^2 B_X^2}{b^2} e^{2b |X| /( B_X-1)} \cdot E[(B_{X,0}/B_X - (1-1/B_X)^{b |X|})^2]/P[\mathcal{E}] + O(|X|/B_X)\\
		=& \big((1+\epsilon)^2+o(1)\big)\frac{B_X^2}{b^2} e^{2b |X| /( B_X-1)} \cdot E[(B_{X,0}/B_X - (1-1/B_X)^{b |X|})^2] + O(|X|/B_X)\\
		=& \big((1+\epsilon)^2+o(1)\big)\frac{e^{2b |X| / (B_X-1)}}{b^2} Var[B_{X,0}]  + O(|X|/B_X)\label{eq:rewrite_as_variance}\\
		\leq& \big((1+\epsilon)^2+o(1)\big) e^{2b |X| / (B_X-1)} \cdot \left(e^{-\frac{b |X|}{B_X}}\frac{B_X}{b^2} - B_X/b^2 - |X|/b\right)  + O(|X|/B_X)\label{eq:balls_and_bins_variance}\\
		\leq&  \big((1+\epsilon)^2+o(1)\big)\left(e^{|X| b / (B_X-1)} \frac{B_X}{b^2} - B_X/b^2 - |X|/b\right) + O(|X|/B_X)\\
		\leq&  \big((1+\epsilon)^2+o(1)\big)\left(e^{|X| b / (B_X-1)} \frac{B_X}{b^2} - B_X/b^2 - |X|/b\right) \label{eq:removing_error_term}
	\end{align}
\caption{Detailed steps of the derivation of an upper bound for the mean squared error of the BF estimator of the single set size.}
\label{fig:app:bound_ss_BF}
\end{figure*}

	In particular, \cref{eq:sort_of_triangle_ineq} in Figure \ref{fig:app:bound_ss_BF} holds because for any $a,b,c \in \mathbb{R}$ and $\epsilon>0$, it holds\footnote{This inequality is equivalent to $(1+\epsilon)(a-c)^2 + \frac{1+\epsilon}{\epsilon}(c-b)^2 - (a-b)^2 \geq 0$. The left-hand side can be simplified to $\frac{(\epsilon a+b-c (1+\epsilon))^2}{\epsilon}$ and the inequality thus holds.} $(a-b)^2 \leq (1+\epsilon)(a-c)^2 + \frac{1+\epsilon}{\epsilon}(c-b)^2$. Eq. (\ref{eq:log_is_lipschitz}) holds because on $\mathcal{E}$, given $B_{X,0} \geq \mu/(1+\epsilon)$, $\log(B_{X,0}/B_X)$ is $c$-lipschitz for $c = (1+\epsilon)B_X/\mu \leq (1+\epsilon)e^{\frac{2 b |X|}{B_X(1-1/B_X)}} = (1+\epsilon)e^{\frac{2 b |X|}{B_X-1}}$. 
	Eq. (\ref{eq:rewrite_as_variance}) holds because $E[B_{X,0}/B_X] = (1-1/B_X)^{b |X|}$ and \cref{eq:balls_and_bins_variance} holds because
	$Var(B_{X,0})\sim B_X e^{-\frac{b |X|}{B_X}}-B_X \left(\frac{b |X|}{B_X}+1\right) e^{-\frac{2 b |X|}{B_X}}$ \cite{stackexchange}.
	By sending $\epsilon \rightarrow 0$, we get that\footnote{It is well known that if $f(x) \leq (1+\epsilon) g(x)$ for any $\epsilon>0$, then $f(x) \leq (1+o(1))g(x)$.}:
	
	\small
	\[
	E[(\widehat{|X|} - |X|)^2] \leq (1+o(1))\left(e^{\frac{|X| b }{ (B_X-1)}} \frac{B_X}{b^2} - \frac{B_X}{b^2} - \frac{|X|}{b}\right)
	\]
	\normalsize
\end{proof}

\vspaceSQ{-0.5em}
\subsubsection{Class of Estimators with General Bounds}
\label{sec:single-class}

The bound on the MSE presented in Proposition~\ref{bound_ss_BF} holds up to some assumptions (i.e. $b = o(\sqrt{B_{X}})$ and $b |X| \leq 0.499 B_{X} \log B_{X}$) and an $o(1)$ term. To derive a concentration bound for the MSE which does not depend on these conditions and that enhance the interpretability, we develop a \emph{class of
estimators} which encompasses the one by Swamidass et al.~\cite{swamidass2007mathematical}. To introduce this framework, we first propose a new limiting estimator called $\widehat{|X|}_L$ which belongs to this class. We obtain $\widehat{|X|}_L$ by simplifying the estimator from Eq.~(\ref{eq:est_ss_bf}) and taking the limit for $B_X \rightarrow \infty$:

\vspaceSQ{-0.5em}
\ifsqEQ\small\fi

\begin{align} 
  \widehat{|X|}_L &\equiv \lim_{B_X\to\infty} \widehat{|X|}_S = \lim_{B_X\to\infty}- \frac{B_X}{b} \log \left ( 1 - \frac{B_{X,1}}{B_X} \right ) \nonumber \\
  &= \log \left ( \lim_{B_X\to\infty} \left(1 - \frac{B_{X,1}}{B_X}\right)^{-\frac{B_X}{b}} \right)
  \nonumber \\
  & = \log\left(\exp \left(\frac{B_{X,1}}{b}\right)\right) =\frac{B_{X,1}}{b} \label{eq:limit_xs}
\end{align}

\normalsize

We can perform this simplification thanks to the continuity of the logarithm in
$(0,\infty)$ that allows us to safely move the limit inside log, knowing that
$B_X,b \in \mathbb{N}$ by construction. This result tells us that, as $B_X$
increases, $\widehat{|X|}_S$ \emph{rescales the number of ones in the BF} by
the quantity $\frac{1}{b}$ because $\widehat{|X|}_S \sim \frac{B_{X,1}}{b}$ for $X,b$ fixed and $B_X \rightarrow \infty$.
We can also prove that $\widehat{|X|}_S \leq \frac{\log B_X}{b} B_{X,1}$ thus implying that $\widehat{|X|}_S$ can \emph{inflate} the number of ones \emph{at most} by the factor $\frac{\log B_X}{b}$. These interesting insights motivate us to propose a general class of estimators. The key idea is to define any BF estimator as a \emph{function of a random variable} (i.e. $B_{X,1}$, the number of ones in a BF). Specifically, we have $\widehat{|X|}_\bullet \equiv \delta_{B_X, b}(B_{X,1})$, where $\delta(\cdot)$ is a given non-negative
function of $B_X$, $b$, and $B_{X,1}$. We choose to denote $\delta_{B_X, b}(B_{X,1})$ and $\delta_{B_X, b}$ instead of the usual $\delta(B_X,b,B_{X,1})$ and $\delta(B_X,b)$ to clearly separate the deterministic BF design parameters
$B_X$ and $b$ from the unique random component $B_{X,1}$. The key benefit of this formulation is that (1) it generalizes both $\widehat{|X|}_S$ and $\widehat{|X|}_L$, and (2) we can use it to provide concentration bounds that are applicable to $\widehat{|X|}_L$, and {many} other estimators within the proposed class depending on the functional form of $\delta_{B_X, b}(B_{X,1})$. To obtain $\widehat{|X|}_S$, we set: 

\vspaceSQ{-0.5em}
$$\widehat{|X|}_S \equiv \delta_{B_X, b}(B_{X,1}) = \frac{B_X}{b} \log \left
( 1 - \frac{B_{X,1}}{B_X} \right ).$$

To recover $\widehat{|X|}_L$, we first (with a slight abuse of notation) fix $\delta_{B_X, b}(B_{X,1})$ to be linear in $B_{X,1}$ and then set it to be specifically equal to $\frac{1}{b}$:

\begin{equation}\label{eq:est_simple_bf}
  \widehat{|X|}_L \equiv \delta_{B_X, b} \cdot B_{X,1} = \frac{B_{X,1}}{b}
\end{equation}     

We underline that if $\delta_{B_X, b}(B_{X,1})$ is linear in $B_{X,1}$ we are
implicitly imposing, depending on the values of $B_X$ and $b$, either a
\emph{deflation} or an \emph{inflation} of the observed \emph{number of ones}
in the BF. For example, we have already seen that, when $B_X\to\infty$ for fixed $X,b$, we have
$\delta_{B_X, b}(B_X) = \frac{B_X}{b}$ in Eq.~(\ref{eq:est_ss_bf}). We now
show that any estimator that can be written as $\widehat{|X|}_\bullet$ with
$\delta_{B_X, b}(B_{X,1})$ linear in $B_{X,1}$ has a bounded MSE. 

\begin{prop}\label{bound_ss_simple_BF}
  Let $\widehat{|X|}_\bullet \equiv \delta_{B_X, b} \cdot B_{X,1}$. For $B_X,b \in \mathbb{N}$, the following holds:
\vspaceSQ{-0.5em}
  
\begin{equation}
\begin{aligned}
  E\fSB{\fRB{\widehat{|X|}_\bullet - |X|}^2}  \leq \left[|X| - \;\delta_{B_X, b}\; B_X \left(1 - e^{-\frac{|X| b}{B_X}}\right)\right]^{2} \nonumber \\
  + \; \delta_{B_X, b}^{2} \; B_X \left[e^{-\frac{|X| b}{B_X}} - \left(1 + \frac{|X| b}{B_X}\right)e^{-\frac{2|X| b}{B_X}}  \right] \nonumber
\end{aligned}
\end{equation}
  \normalsize
\end{prop}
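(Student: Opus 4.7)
The proof rests on the classical bias–variance decomposition of the mean squared error:
\[
E\bigl[(\widehat{|X|}_\bullet - |X|)^2\bigr] = \bigl(E[\widehat{|X|}_\bullet] - |X|\bigr)^2 + \mathrm{Var}(\widehat{|X|}_\bullet).
\]
Since $\widehat{|X|}_\bullet = \delta_{B_X,b}\cdot B_{X,1}$ with $\delta_{B_X,b}$ deterministic, one has $E[\widehat{|X|}_\bullet] = \delta_{B_X,b}\, E[B_{X,1}]$ and $\mathrm{Var}(\widehat{|X|}_\bullet) = \delta_{B_X,b}^2\,\mathrm{Var}(B_{X,1})$. The two target terms in the statement then have a transparent interpretation: the first is the squared bias with $E[B_{X,1}]$ replaced by the limiting expression $B_X(1-e^{-|X|b/B_X})$, and the second is $\delta_{B_X,b}^2$ times the standard balls-and-bins variance approximation used already in the proof of Proposition~\ref{bound_ss_BF} (cf.~Eq.~(\ref{eq:balls_and_bins_variance})).

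\textbf{Bias term.} Under the standard independence/uniformity assumption on $h_1,\dots,h_b$, each bit of $\mathbf{B}_X$ is zero independently w.r.t.\ a single insertion with probability $(1-1/B_X)^b$; after $|X|$ insertions, $P(\mathbf{B}_X[i]=0) = (1-1/B_X)^{b|X|}$. Therefore $E[B_{X,1}] = B_X\bigl(1-(1-1/B_X)^{b|X|}\bigr)$. Using $(1-1/B_X)^{b|X|}\le e^{-b|X|/B_X}$, which holds from $\log(1-x)\le -x$ on $[0,1)$, one obtains $E[B_{X,1}] \ge B_X\bigl(1-e^{-|X|b/B_X}\bigr)$, and squaring yields
\[
\bigl(E[\widehat{|X|}_\bullet]-|X|\bigr)^2 \le \bigl[|X| - \delta_{B_X,b}\,B_X(1-e^{-|X|b/B_X})\bigr]^2.
\]

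\textbf{Variance term.} Writing $B_{X,0}=\sum_{i=1}^{B_X} Y_i$ with $Y_i=\mathbf{1}[\mathbf{B}_X[i]=0]$ and using $\mathrm{Var}(B_{X,1})=\mathrm{Var}(B_{X,0})$, expand
\[
\mathrm{Var}(B_{X,0}) = \sum_i \mathrm{Var}(Y_i) + \sum_{i\neq j}\mathrm{Cov}(Y_i,Y_j).
\]
The marginals give $E[Y_i]=(1-1/B_X)^{b|X|}$, and a two-bit miss calculation yields $E[Y_iY_j]=(1-2/B_X)^{b|X|}$ for $i\neq j$. Substituting these and performing the same Taylor/exponential expansion as in Proposition~\ref{bound_ss_BF} (keeping leading orders in $1/B_X$), the covariance sum contributes $-|X|b\,e^{-2|X|b/B_X}$ and the diagonal contributes $B_X\bigl(e^{-|X|b/B_X}-e^{-2|X|b/B_X}\bigr)$, giving
\[
\mathrm{Var}(B_{X,0}) \le B_X e^{-|X|b/B_X} - B_X\!\left(1+\tfrac{|X|b}{B_X}\right) e^{-2|X|b/B_X}.
\]
Multiplying by $\delta_{B_X,b}^2$ produces exactly the second term of the stated bound, and summing with the bias bound completes the proof.

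\textbf{Main obstacle.} The only delicate step is controlling the covariance contribution so that it cleanly collapses into the additive $-\tfrac{|X|b}{B_X}\,e^{-2|X|b/B_X}$ factor. Naively approximating $(1-2/B_X)^{b|X|}\approx e^{-2|X|b/B_X}$ and $(1-1/B_X)^{2b|X|}\approx e^{-2|X|b/B_X}$ gives a vanishing first-order difference, so one must expand both expressions to second order in $1/B_X$ and verify that the $O(|X|b/B_X^2)$ correction survives after multiplication by the $B_X(B_X-1)$ pair-count. The remaining work is bookkeeping and making sure the resulting bound is sign-consistent so that dropping lower-order terms preserves the inequality.
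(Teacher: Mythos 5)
Your proof follows essentially the same route as the paper's: the bias--variance decomposition of the MSE, the exact expectation $E[B_{X,1}] = B_X\bigl(1-(1-1/B_X)^{b|X|}\bigr)$, the identity $\mathrm{Var}(B_{X,1})=\mathrm{Var}(B_{X,0})$, and the balls-and-bins asymptotic $\mathrm{Var}(B_{X,0}) \sim B_X e^{-b|X|/B_X} - B_X(1+b|X|/B_X)e^{-2b|X|/B_X}$. The only substantive difference is that you re-derive that variance formula via the covariance expansion with $E[Y_iY_j]=(1-2/B_X)^{b|X|}$, where the paper simply cites it; your "main obstacle" correctly identifies that the covariance contribution only survives at second order in $1/B_X$, which is exactly why the formula is asymptotic rather than an exact inequality.

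One step does not go through as written: in the bias term you deduce $E[B_{X,1}] \ge B_X(1-e^{-|X|b/B_X})$ and then claim that "squaring yields" the stated bound. A one-sided inequality on the bias does not give an inequality on the squared bias: from $E[\widehat{|X|}_\bullet]-|X| \ge \delta_{B_X,b}B_X(1-e^{-|X|b/B_X})-|X|$ you can only conclude the squared-bias comparison when both quantities are nonpositive (which does hold for $\delta_{B_X,b}\le 1/b$ by Bernoulli's inequality $(1-1/B_X)^{b|X|}\ge 1-b|X|/B_X$, covering the $\widehat{|X|}_L$ case, but not for arbitrary $\delta_{B_X,b}$). The paper sidesteps this by proving the bound with $(1-1/B_X)^{b|X|}$ in place of $e^{-|X|b/B_X}$ and only substituting the exponential asymptotically "for interpretability"; you should either do the same, or restrict to the sign regime where the squaring is legitimate.
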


\noindent
We use Chebyshev's inequality to get the final
concentration bound: 

\vspaceSQ{-0.5em}
\small
\begin{equation*}\label{cbound_ss_simple_BF}
\begin{aligned}
P \left( \left| \widehat{|X|}_\bullet - |X| \right| \geq t \right) \leq \frac{\left[|X| - \;\delta_{B_X, b}\;B_X \left(1 - e^{-\frac{|X| b}{B_X}}\right)\right]^{2}}{t^2} \\ 
+ \frac{\delta_{B_X, b}^{2} \; B_X \left[e^{-\frac{|X| b}{B_X}} - \left(1 + \frac{|X| b}{B_X}\right)e^{-\frac{2|X| b}{B_X}}  \right] }{t^{2}} 
\end{aligned}
\end{equation*}
\vspaceSQ{-0.5em} 
\normalsize

By fixing $\delta_{B_X, b} = \frac{1}{b}$, we obtain a valid bound for
$\widehat{|X|}_L$ which is the limiting estimator we present in our evaluation (Section~\ref{sec:eval}).

\begin{proof}
	We provide a proof of proposition \ref{bound_ss_simple_BF}. We start by the well known MSE decomposition: 
	\small
	\begin{equation}\label{app_mse_simple_bf}
		E\fSB{\fRB{\widehat{|X|}_\bullet - |X|}^2} = E\fSB{\fRB{\widehat{|X|}_\bullet - |X|}}^{2} +\; Var(\widehat{|X|}_\bullet)
	\end{equation} 
\normalsize
	Now notice that $E[B_{0,X}] = B_X\left(1-\frac{1}{B_{X}}\right)^{b|X|}$. Then, since $\widehat{|X|}_\bullet = \delta_{B_X, b} \; B_{X,1}$, we can easily derive:
	\begin{align*}
		E\left[\delta_{B_X, b} \; B_{X,1}\right] &= E\left[\delta_{B_X, b} \; (B_{X} - B_{X,0})\right]\\ 
		&= \delta_{B_X, b} \; B_X \left[1 - \left(1 - \frac{1}{B_X} \right)^{b|X|}\right]
	\end{align*}
	
	On the other hand, to bound the variance of the simplified estimator, we follow the same reasoning outlined in Proposition \ref{bound_ss_BF}. Indeed it holds that $Var(B_{X,0}) \sim B_X \left[e^{-\frac{|X| b}{B_X}} - \left(1 + \frac{|X| b}{B_X}\right)e^{-\frac{2|X| b}{B_X}}  \right] $ as shown in \cite{stackexchange}. Now notice that $Var(B_{X,1}) = Var(B_{X} - B_{X,0}) = Var(B_{X,0})$. At this point we can substitute in eq. (\ref{app_mse_simple_bf}) the squared bias and variance of $\widehat{|X|}_\bullet$ to conclude that:
	
	\footnotesize
	\begin{equation}
		\begin{aligned}
			E\fSB{\fRB{\widehat{|X|}_\bullet - |X|}^2}  \leq \left\{|X| - \delta_{B_X, b} \; B_X \left[1 - \left(1 - \frac{1}{B_X} \right)^{b|X|}\right]\right\}^{2} \nonumber \\
			+ \delta_{B_X, b}^{2} \; B_X \left[e^{-\frac{|X| b}{B_X}} - \left(1 + \frac{|X| b}{B_X}\right)e^{-\frac{2|X| b}{B_X}}  \right] \nonumber
		\end{aligned}
	\end{equation}  
	
	\normalsize
	
	which ends the proof. To improve the interpretability of the bound, we use the fact that $\left(1 - \frac{1}{B_X} \right)^{b|X|} \sim e^{-\frac{|X| b}{B_X}}$ in the statement of Proposition~\ref{bound_ss_simple_BF}. 
	
\end{proof}

\vspaceSQ{0.25cm}
\subsubsection{Enhancing the Estimator by Swamidass~\cite{swamidass2007mathematical}}

The estimator by Swamidass et al.~\cite{swamidass2007mathematical},
is \emph{divergent}\footnote{\scriptsize An estimator whose moments are not
finite. In the case of the estimator Swamidass et
al.~\cite{swamidass2007mathematical}, the expectation of $\widehat{|X|}$, and
thus also the higher moments, diverge, which happens for $B_{X,1} = B_X$} in
its original form.
To alleviate this, we replace $B_{X,1}$ with $\widetilde{B}_{X,1} \equiv
B_{X,1} - \mathbb{I}[B_{X,1} = B_X]$, where, for a given proposition~$P$,
$\mathbb{I}[P]$ is $1$ if $P$ holds, and $0$ otherwise.  $\widetilde{B}_{X,1}$
only differs from $B_{X,1}$ in the unlikely case of $B_{X,1} = B_X$.
Thanks to this modification, our estimator~$\widehat{|X|}$ has, unlike
Swamidass et al.'s, a finite expectation (as it is bounded).

The final form of the estimator is

\begin{equation*}\label{eq:est_ss_bf___e}
\widehat{\widetilde{|X|}} = - \frac{B_X}{b} \log \left ( 1 - \frac{\widetilde{B_{X,1}}}{B_X} \right )
\end{equation*}

\subsubsection{Proof of consistency and asymptotic unbiasedness}

We need to show that $\widehat{|X|}_S = - \frac{B_X}{b} \log \left ( 1 - \frac{B_{X,1}}{B_X} \right )$ is consistent and asymptotically unbiased as $B_X\to\infty$. We provide here an intuitive formulation based on the false positive probability which can be easily made more rigorous by direct application of the definition of consistency which we omit for the sake of simplicity. First of all, as shown in eq.(\ref{eq:est_simple_bf}), we can notice that $\widehat{|X|}_S \sim \widehat{|X|}_L$ as the Bloom Filter size diverges. This means that the proof is valid for both estimators because they are asymptotically equivalent. Now we can look at the probability of false positives as $B_X \rightarrow \infty$ for fixed and finite $b$ and $|X|$: 

$$\lim_{B_X\to\infty} \left[1 -\left(1-\frac{1}{B_{X}}\right)^{b|X|}\right]^{b} \to 0$$

The result above tells us that false positive matches cannot happen anymore in the limit. Each element of $|X|$ will then be hashed in a \emph{personal} bit and counting the number of ones in $B_X$ (and dividing by $b$ in case of multiple hash functions) will always deliver $|X|$ at a given precision as $|X|$ is fixed and $B_X \rightarrow \infty$. Thus we can conclude that $\frac{B_{X,1}}{b} \overset{p}{\to} |X|$ which proves consistency. Asymptotic unbiasedness follows from consistency in our case as the variance of both estimators is bounded (see the proof of Proposition \ref{bound_int_BF}). The same reasoning can be easily extended to show consistency and asymptotic unbiasedness also for $\widehat{|X \cap Y|}_{AND}$ and $\widehat{|X \cap Y|}_{OR}$ presented in section \ref{sec:int_bf_est}.      

\subsection{Proposition~\ref{bound_int_BF}}

\begin{proof}
	To prove Proposition~\ref{bound_int_BF} from Section~\ref{sec:int_bf_est} we can extend in a straightforward way the proof presented for Proposition \ref{bound_ss_BF}. Indeed we just need to substitute $|X|$ with $|X \cap Y|$ and $B_{X}$ with $B_{X \cap Y}$ to obtain the desired result. 
	\end{proof}

\vspaceSQ{-0.5em}
\subsection{MinHash Sketches for Set Intersection}\label{sec:app:int-minhash}

\subsubsection{Expectation formula}

Since in the case of $k$-hash, $|M_X \cap M_Y| \sim Bin (\;k\;,\;J_{X,Y}\;)$, and for $1$-hash, $| M_X^1 \cap M_Y^1| \sim Hypergeometric (|X \cup Y|,|X \cap Y|,k)$, we have: 

\vspaceSQ{-0.5em}
\footnotesize\begin{gather}\label{exp_int_mh}
	\E[\widehat{|X \cap Y|}_{kH}] = (|X|+|Y|) \sum\limits_{s=0}^{k} {k \choose s} (J_{X,Y})^s (1-J_{X,Y})^{k-s} \frac{s}{k + s}\\
	\E[\widehat{|X \cap Y|}_{1H}] = (|X|+|Y|) \sum\limits_{s=0}^{k} \frac{{ |X \cap Y| \choose s} { |X \cup Y| - |X \cap Y| \choose k - s}}{{ |X \cup Y| \choose k}} \frac{s}{k + s}
\end{gather}
\normalsize
\vspaceSQ{-0.5em}

There exists an involved closed form expression for equation
(\ref{exp_int_mh}) which is beyond the scope of this paper. We refer the
interested reader to \cite{chao1972negative} for a clear derivation of a
similar problem.

\subsubsection{Proof of consistency and asymptotic unbiasedness}
We start to show that $\widehat{|X \cap Y|}_{kH}$ is consistent. This follows respectively from Proposition \ref{bound_int_mh} statement. Indeed by taking the limit for $k \rightarrow \infty$ with fixed and finite $|X|$ and $|Y|$ we obtain:

\small
\begin{equation*}
	\lim_{k\to\infty} P \left(\fVB{\widehat{|X \cap Y|}_{kH} - |X \cap Y|} \geq t \right) \leq \lim_{k\to\infty} 2 e^{-\frac{2\;k\;t^2}{(|X| + |Y|)^2}} \to 0
\end{equation*}
\normalsize

The above implies that $\widehat{|X \cap Y|}_{kH} \overset{p}{\to} |X \cap Y|$. On the other hand, for $\widehat{|X \cap Y|}_{1H}$ we are in the \textit{sampling without replacement} scheme. This means that the population size (i.e. $|X \cup Y|$) is finite and by taking the limit for $k \rightarrow |X \cup Y|$ in Proposition \ref{bound_int_one_h}, with fixed and finite $|X|$ and $|Y|$, we have already sampled the entire population contrarily to the $k$-Hash case. Thus $\widehat{|X \cap Y|}_{1H}$ is also a consistent estimator of $|X \cap Y|$. Then, for both estimators, the asymptotic unbiasedness follows from consistency and by noticing that both $\widehat{|X \cap Y|}_{kH}$ and $\widehat{|X \cap Y|}_{1H}$ have a bounded variance.

\subsubsection{Sub-Gaussian preliminaries}\label{app:subgaussian}
We recall some key notions of sub-gaussian random variables as they are necessary for the following proofs. First of all, we define $\psi_{X}(\lambda) = \log(\mathbb{E}[e^{\lambda X}])$ as the logarithmic moment generating function (i.e. cumulant) of a generic random variable $X$. For example, if $Z$ is a centered normal random variable with variance $\sigma^2$, we have that $\psi_{Z}(\lambda) = \frac{\lambda^2 \sigma^2}{2}$. It can be shown, we refer the interested reader to chapter 2 of \cite{boucheron2013concentration} for a detailed explanation, that Chernoff's inequality in this case implies, for all $t>0$, that:

\begin{equation}\label{chernorm}
	P(Z \geq t) \leq e^{-\frac{t^2}{2\sigma^2}}    
\end{equation}

The bound above, characterize the decay of the tail probabilities of a centered normal random variable. If the tail probabilities of a generic centered random variable $X$, decrease at least as rapidly as the ones in (\ref{chernorm}) then $X$ is \textit{sub-gaussian}. More formally, a centered random variable $X$ is said to be \textit{sub-gaussian} with variance factor $\sigma^2$ if:

\begin{equation}\label{subgauss}
	\psi_{X}(\lambda) \leq \frac{\lambda^2 \sigma^2}{2}\;\;\;\;\; \forall \lambda \in \mathbb{R}
\end{equation}

We underline that (\ref{subgauss}) only requires $Var(X) \leq \sigma^2$. Moreover, if we call $\mathcal{G}(\sigma^2)$ the collection of random variables for which (\ref{subgauss}) holds (e.g. all bounded random variables belongs to $\mathcal{G}(\sigma^2)$), we can state that:\\

\begin{lemma}\label{lem:sum_subgauss}
	Let $X_{1},\dots,X_{n}$ be sub-gaussians random variables so that $X_{i} \in \mathcal{G}(\sigma^2_{i})$ for every $i \in \{1,\dots,n\}$. Then $\sum_{i=1}^{n} X_{i} \in \mathcal{G}((\sum_{i=1}^{n} \sigma_{i})^2)$. Moreover if the random variables are independent, then $\sum_{i=1}^{n} X_{i} \in \mathcal{G}(\sum_{i=1}^{n} \sigma^2_{i})$.
\end{lemma}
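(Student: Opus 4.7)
The plan is to work directly with the cumulant-based definition of sub-Gaussianity in Eq.~(\ref{subgauss}) and bound $\psi_{\sum_i X_i}(\lambda)$ uniformly in $\lambda \in \mathbb{R}$. I would handle the two assertions separately, starting with the easier independent case and then attacking the general (possibly dependent) case via a Hölder-type argument.

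For the independent case, I would use the standard fact that the cumulant of a sum of independent random variables is the sum of the cumulants. Concretely, since $X_1,\dots,X_n$ are independent, $\mathbb{E}[e^{\lambda \sum_i X_i}] = \prod_i \mathbb{E}[e^{\lambda X_i}]$, so $\psi_{\sum_i X_i}(\lambda) = \sum_i \psi_{X_i}(\lambda)$. Applying the sub-Gaussian hypothesis term by term yields $\psi_{\sum_i X_i}(\lambda) \le \sum_i \lambda^2 \sigma_i^2/2 = \lambda^2 (\sum_i \sigma_i^2)/2$ for every $\lambda \in \mathbb{R}$, which is exactly the criterion in Eq.~(\ref{subgauss}) with variance factor $\sum_i \sigma_i^2$.

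For the general (dependent) case, I expect this to be the main obstacle, since we can no longer factor the joint moment generating function. My plan is to invoke Hölder's inequality with carefully chosen exponents. Fix $\lambda \in \mathbb{R}$ and set $p_i = (\sum_{j=1}^n \sigma_j)/\sigma_i$, so that $p_i > 0$ and $\sum_{i=1}^n 1/p_i = 1$. Hölder gives
\begin{equation*}
\mathbb{E}\!\left[e^{\lambda \sum_i X_i}\right] = \mathbb{E}\!\left[\prod_i e^{\lambda X_i}\right] \le \prod_i \mathbb{E}\!\left[e^{\lambda p_i X_i}\right]^{1/p_i}.
\end{equation*}
Applying the sub-Gaussian bound $\psi_{X_i}(\lambda p_i) \le (\lambda p_i)^2 \sigma_i^2 / 2$ to each factor and taking logarithms yields $\psi_{\sum_i X_i}(\lambda) \le \sum_i (\lambda^2 p_i \sigma_i^2)/2 = (\lambda^2/2) \sum_i \sigma_i \cdot (\sum_j \sigma_j) = \lambda^2 (\sum_i \sigma_i)^2/2$, establishing the claim.

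The only subtlety I anticipate is handling the sign of $\lambda$ and any degenerate case where some $\sigma_i = 0$. For $\sigma_i = 0$, the variable $X_i$ is a.s.\ constant (in fact zero, since it is centered), so it contributes trivially and can be dropped from the sum before choosing the $p_i$; the bound then follows with the reduced index set. The sign of $\lambda$ poses no issue because Hölder's inequality is applied to nonnegative quantities $e^{\lambda X_i}$, and the sub-Gaussian definition holds for all $\lambda \in \mathbb{R}$ (both positive and negative), so the same chain of inequalities applies verbatim.
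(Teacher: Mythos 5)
Your proof is correct and takes essentially the route the paper itself indicates: the paper does not write out the argument but merely remarks that H\"older's inequality verifies the statement and defers to Theorem~2.7 of the cited sub-Gaussian notes, and your derivation (additivity of cumulants under independence, and generalized H\"older with exponents $p_i = \bigl(\sum_j \sigma_j\bigr)/\sigma_i$ for the dependent case) is exactly that standard proof, carried out correctly. Your handling of the degenerate $\sigma_i = 0$ case and of arbitrary sign of $\lambda$ is also sound, so no gap remains.
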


This is due to the fact that (\ref{subgauss}) implies a bound on the moment generating function whose properties, together with the Hölder inequality, help to verify the above statement. For a detailed proof of lemma \ref{lem:sum_subgauss} see theorem 2.7 in \cite{rivasplata2012subgaussian} while for alternative characterizations of sub-gaussianity in terms of growth of moments, we refer to chapter 2 of \cite{boucheron2013concentration}.   

\vspaceSQ{0.5em}

\subsubsection{Concentration bounds for k-Hash and 1-Hash}

We present below the proof of Propositions \ref{bound_int_mh} and \ref{bound_int_one_h}. First, we show the following lemma which we will also use later.
\begin{lemma} \label{lem:hash_est_concentration}
	\begin{gather}
		P(|\hat{J}_1 - J| \geq t),P(|\hat{J}_k - J| \geq t) \leq 2e^{-2t^2k}
	\end{gather}
\end{lemma}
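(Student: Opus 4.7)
The plan is to prove both bounds using Hoeffding's inequality, exploiting the distributional characterizations of $|M_X \cap M_Y|$ and $|M^1_X \cap M^1_Y|$ already established in the paper.

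For the $k$-Hash case, recall that the paper establishes $|M_X \cap M_Y| \sim \text{Bin}(k, J_{X,Y})$, because the $k$ independent hash functions yield $k$ independent Bernoulli trials each of which ``succeeds'' (i.e., $h_i(X)_{\min} = h_i(Y)_{\min}$) with probability exactly $J \equiv J_{X,Y}$. Write $|M_X \cap M_Y| = \sum_{i=1}^k Z_i$ with $Z_i \in \{0,1\}$ i.i.d.\ Bernoulli$(J)$, so $\hat{J}_k = \frac{1}{k}\sum_{i=1}^k Z_i$ and $\mathbb{E}[\hat{J}_k] = J$. Applying the standard two-sided Hoeffding inequality for bounded i.i.d.\ random variables (each $Z_i$ takes values in $[0,1]$) gives
\[
P\bigl(|\hat{J}_k - J| \geq t\bigr) \leq 2\exp\!\left(-\frac{2k^2 t^2}{\sum_{i=1}^k (1-0)^2}\right) = 2e^{-2kt^2},
\]
which is the claimed bound.

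For the $1$-Hash case, the paper establishes $|M^1_X \cap M^1_Y| \sim \text{Hypergeometric}(|X\cup Y|, |X\cap Y|, k)$; this is a sampling-without-replacement scheme in which $k$ elements are drawn from a population of size $|X\cup Y|$ containing $|X\cap Y|$ ``successes,'' so the success probability per draw is again $J$, giving $\mathbb{E}[\hat{J}_1] = J$. The key observation is that Hoeffding's inequality extends to sums obtained by sampling without replacement from a finite population with bounded values (Hoeffding 1963, Theorem 4), yielding the same exponential bound as the with-replacement analog. Writing $|M^1_X \cap M^1_Y| = \sum_{i=1}^k W_i$ where each $W_i \in \{0,1\}$ indicates whether the $i$-th drawn element lies in $X\cap Y$, Hoeffding's inequality for sampling without replacement gives
\[
P\bigl(|\hat{J}_1 - J| \geq t\bigr) \leq 2e^{-2kt^2}.
\]

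The main obstacle, if any, is simply citing or justifying the without-replacement version of Hoeffding's inequality for the $1$-Hash bound; otherwise both statements reduce to textbook applications once the distributional facts from Section~\ref{sec:prob-back} and Section~\ref{sec:int-1h} are in hand. I expect this lemma to be used as a building block in subsequent propositions (e.g., in translating Jaccard concentration into $|X \cap Y|$ concentration via the sub-Gaussian framework of Appendix~\ref{app:subgaussian}).
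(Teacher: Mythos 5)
Your proposal is correct and follows essentially the same route as the paper: Hoeffding's inequality for the binomial ($k$-Hash) case, and the observation that sampling without replacement obeys the same exponential bound for the hypergeometric ($1$-Hash) case. The only cosmetic difference is that you invoke Hoeffding's original without-replacement theorem (Hoeffding 1963, Theorem 4) where the paper cites Serfling's bound and notes it dominates the Hoeffding bound; both justifications are valid and yield the stated $2e^{-2kt^2}$.
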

\begin{proof}
	The random variables $k\hat{J}_1$ and $k\hat{J}_k$ follow the hypergeometric and binomial distributions, respectively. Applying the Hoeffding's inequalities in the binomial case, we get the desired inequality. The Serfling's bound can be applied in the case of the hypergeometric distribution. The Serfling's bound always gives better bounds than the Hoeffding's, proving the inequality for $\hat{J}_1$.
\end{proof}


We now show concentration of the sum of the estimators and, therefore also of the individual estimators (by fixing $n=1$).

\begin{theorem} \label{thm:mh_bounds}
	Let $Y_1 = \sum_i^n C_i\frac{\hat{J}_1}{1+\hat{J}_1}, Y_k = \sum_i^n C_i\frac{\hat{J}_k}{1+\hat{J}_k}$. Then for any non-negative constants $C_i$ and $S = \sum_{i=1}^n C_i \frac{J}{1+J}$ 
	\small
	\begin{gather}
		P(|Y_1 - S|>t), P(|Y_k - S|>t) \leq 2 \exp \left( - \frac{ 2 \;k \;t^2}{(\sum_i^n C_i)^2}\right)
	\end{gather}
\normalsize
\end{theorem}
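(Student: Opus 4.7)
The plan is to reduce the claim to a concentration inequality for a single sub-Gaussian sum, using three ingredients: (i) the tail bound of Lemma~\ref{lem:hash_est_concentration}, which yields sub-Gaussianity of each $\hat{J}-J$; (ii) the fact that $f(x)=x/(1+x)$ is $1$-Lipschitz on $[0,\infty)$ (since $f'(x)=1/(1+x)^2\in(0,1]$), which lets me transfer concentration from $\hat{J}^{(i)}-J_i$ to $f(\hat{J}^{(i)})-f(J_i)$; and (iii) Lemma~\ref{lem:sum_subgauss} in its general, non-independence form, which combines the summands with the triangle-type scaling $(\sum_i C_i)^2$ in the variance proxy, matching exactly the shape of the target bound.

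Concretely, I would first match the inequality $P(|\hat{J}-J|\ge t)\le 2e^{-2kt^2}$ from Lemma~\ref{lem:hash_est_concentration} against the sub-Gaussian tail characterization in~(\ref{chernorm}) to identify the variance factor as $\sigma^2=1/(4k)$, i.e. $\hat{J}-J\in\mathcal{G}(1/(4k))$ uniformly for both $1$-Hash and $k$-Hash. Next, I would lift this through the Lipschitz bound $|f(\hat{J}^{(i)})-f(J_i)|\le|\hat{J}^{(i)}-J_i|$ to obtain $f(\hat{J}^{(i)})-f(J_i)\in\mathcal{G}(1/(4k))$, and scale by $C_i$ to get $C_i[f(\hat{J}^{(i)})-f(J_i)]\in\mathcal{G}(C_i^2/(4k))$. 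Lemma~\ref{lem:sum_subgauss} in its general form then bundles the $n$ summands into $Y-S=\sum_i C_i[f(\hat{J}^{(i)})-f(J_i)]\in\mathcal{G}\bigl((\sum_i C_i)^2/(4k)\bigr)$, and a direct application of~(\ref{chernorm}) yields exactly $2\exp\bigl(-2kt^2/(\sum_i C_i)^2\bigr)$. Since both tail inputs (binomial via Hoeffding and hypergeometric via Serfling) give the same $\sigma^2$, this argument handles $Y_1$ and $Y_k$ in a single stroke.

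The main obstacle will be rigorously justifying the Lipschitz-transfer step. Because $f$ is strictly concave, Jensen's inequality gives $E[f(\hat{J}^{(i)})]\le f(J_i)$, so $f(\hat{J}^{(i)})-f(J_i)$ is generally \emph{not} centered at zero, and the textbook Lipschitz--sub-Gaussian lemma (which controls $f(X)-E[f(X)]$) does not apply verbatim. My plan here is a direct MGF argument: writing $f(\hat{J}^{(i)})-f(J_i)=\phi_i\,(\hat{J}^{(i)}-J_i)$ with the random multiplier $\phi_i=[(1+\hat{J}^{(i)})(1+J_i)]^{-1}\in[1/4,1]$, so that $|\phi_i|\le 1$, I would dominate $E[e^{\lambda(f(\hat{J}^{(i)})-f(J_i))}]$ by $\cosh$-type combinations of the MGF of $\hat{J}^{(i)}-J_i$ and absorb any residual multiplicative constants into the final Chernoff optimization, leaving the exponent $-2kt^2/(\sum_i C_i)^2$ intact.
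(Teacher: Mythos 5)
Your skeleton is essentially the paper's: the proof in the paper also takes Lemma~\ref{lem:hash_est_concentration}, uses the $1$-Lipschitz property of $x/(1+x)$ on $[0,1]$ to transfer the two-sided tail bound to $\frac{\hat J}{1+\hat J}-\frac{J}{1+J}$, declares these increments sub-Gaussian with variance factor $1/(4k)$, scales by $C_i$, and invokes the dependent (non-independent) form of Lemma~\ref{lem:sum_subgauss} to place $Y_1-S$ and $Y_k-S$ in $\mathcal{G}\bigl((\sum_i C_i)^2/(4k)\bigr)$, from which the stated tail follows by the sub-Gaussian definition. Both the choice of variance proxy and the $(\sum_i C_i)^2$ scaling match the paper exactly, and the binomial/hypergeometric cases are handled simultaneously in the same way.

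Where you diverge is your third paragraph. The paper does not run an MGF argument with a random multiplier; it simply reads the variance factor off the Lipschitz-transferred tail bound. Your concern about the Jensen gap is legitimate: the increments $\frac{\hat J}{1+\hat J}-\frac{J}{1+J}$ are not exactly centered, while the sub-Gaussian definition in~(\ref{subgauss}) is stated for centered variables; the paper confronts this only in the sharper Theorem~\ref{thm:mh_bounds2}, where it recenters by $\mu_1,\mu_k$ and pays for the gap through the $\max(0,t-2S/k)$ term. However, the patch you sketch does not obviously close this: the multiplier $\phi_i=[(1+\hat J^{(i)})(1+J_i)]^{-1}$ is a function of $\hat J^{(i)}$ and hence correlated with the increment, so it cannot be factored out of $E[e^{\lambda(f(\hat J^{(i)})-f(J_i))}]$, and ``absorbing residual multiplicative constants'' would alter the prefactor $2$ (and, after re-optimizing the Chernoff bound, generally also the exponent), so the inequality would not come out in exactly the stated form. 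If you want the theorem as written, either argue as the paper does --- transfer the tail bound through the Lipschitz map and take the resulting variance factor $1/(4k)$ --- or recenter the increments and bound the Jensen gap explicitly, as in Theorem~\ref{thm:mh_bounds2}.
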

\begin{proof}
	The function $\frac{X}{1+X}$ is 1-Lipschitz and it, therefore, holds that $\forall \; X,X' \in [0,1]$
	\[
	\left|\frac{X}{1+X} - \frac{X'}{1+X'}\right| \leq |X - X'|
	\]
	
	The concentration result from Lemma \ref{lem:hash_est_concentration} then also holds for $\frac{\hat{J_1}}{1+\hat{J_1}}$ and $\frac{\hat{J_k}}{1+\hat{J_k}}$. The random variables
	\begin{gather*}
		\frac{\hat{J_1}}{1+\hat{J_1}} - \frac{J}{1+J}\\
		\frac{\hat{J_k}}{1+\hat{J_k}} - \frac{J}{1+J}
	\end{gather*}
	are therefore sub-gaussian with variance factor $\sigma^2 = \frac{1}{4k}$ (see eq. (\ref{chernorm}), (\ref{subgauss}) and in general section \ref{app:subgaussian}). Now we multiply by $C_i$ each variable, we take the sum and, thanks to lemma \ref{lem:sum_subgauss}, we get that $Y_1 - S$ and $Y_k - S$ are sub-gaussian with variance factor 
	\[
	\frac{(\sum_{i=1}^n C_i)^2}{4k}
	\]
	The theorem follows from the definition of a sub-gaussian random variable (see section \ref{app:subgaussian} for further references).
\end{proof}

We stress here that $\widehat{|X \cap Y|}_{kH}$ derived with $k$-hash can also be interpreted as a \emph{maximum likelihood estimator (MLE)} (cf.~\cref{sec:estimators}) for $|X \cap Y|$ because of the invariance property
outlined in~\cref{sec:back-properties} and detailed in Chapter~7 of~\cite{casella2002statistical}. Indeed since $|M_X \cap M_Y| \sim Bin (k,J_{X,Y})$ we have that $\widehat{J_{X,Y}}_{kH} = \frac{|M_X \cap M_Y|}{k}$ is the maximum likelihood estimator of $J_{X,Y}$ if we assume that the $k$ hash functions are independent and perfectly random (a usual assumption). Then our estimator $\widehat{|X \cap Y|}_{kH}$ is just a function of
$\widehat{J_{X,Y}}_{kH}$ and, because of the invariance of the MLE (see
\textit{Theorem 7.2.10} in \cite{casella2002statistical}), this implies that $\widehat{|X \cap Y|}_{kH}$ inherits all the properties of this class of estimators. In particular, it is consistent and asymptotically efficient since it reaches the \textit{Cramér-Rao Lower Bound} (see \textit{Theorem 7.3.9} in
\cite{casella2002statistical}) meaning that no other estimator can have a lower variance. It is also normally distributed, as $k$ increase, which is useful in general to derive confidence intervals.
 
\subsection{Results \& Derivations for Triangle Counts}
\label{sec:app_triangle}

\subsubsection{Proof of consistency and asymptotic unbiasedness}
Any estimator for triangle count analyzed in PG, is simply a sum of
cardinalities $\widehat{|X \cap Y|}$ for different neighborhoods $X$ and $Y$
(cf.~Section~\ref{sec:algs}):

\begin{gather*}
	\widehat{TC}_{\star} = \frac{1}{3} \sum_{(u,v) \in E} \widehat{|N_u \cap N_v|}_{\star}
\end{gather*}

\noindent
where $\star$ indicates a specific $\widehat{|X \cap Y|}_{\star}$
estimator (cf.~Table~\ref{tab:estimators-summary}). Since we have already proven consistency and asymptotic unbiasedness for each of the $\widehat{|X \cap Y|}_{\star}$ estimators presented in PG, we now can address jointly the consistency of the triangle count estimators. To do so we just need to acknowledge the fact that a sum of consistent estimators is itself a consistent estimator. Indeed this is a direct consequence of the more general \emph{Slutsky theorem} which enable us to state that $\widehat{TC}_{\star} \overset{p}{\to} TC$. The asymptotic unbiasedness then follows from consistency and by noticing that all $\widehat{TC}_{\star}$ estimators have a bounded variance (see all the proofs presented below).       

\subsubsection{Bloom Filters}
We first present the estimator~$\widehat{|X \cap Y|}_{OR}$. This estimator
was introduced before~\cite{swamidass2007mathematical} and uses the single set estimator evaluated on the set union:
\vspaceSQ{-0.5em}
\begin{equation}\label{eq:bf_int_union}
	\widehat{|X \cap Y|}_{OR} = |X| + |Y| + \frac{B_{X \cup Y}}{b} \log \left( 1 - \frac{B_{X \cup
			Y,1}}{B_{X \cup Y}} \right) 
\end{equation}
\vspaceSQ{-0.5em}

%
Note that, to obtain the expression above, we also use the fact that $|X \cup Y| = |X| + |Y| - |X \cap Y|$.\\ 
We now prove the triangle count bound for BF stated in theorem \ref{thm:tc_deviations} for the triangle count OR estimator (the proof is of course valid also for $\widehat{TC}_{AND}$).
\begin{proof}
We first define the mean squared error (mse) as follows:
\begin{align*}
	E[(TC - \widehat{TC}_{OR})^2] = (E[\widehat{TC}_{OR}] - TC)^2 + Var(\widehat{TC}_{OR}) 
\end{align*}
where the equality is a standard identity.
	
Now we bound the first component of the mse which is the squared bias of our estimator. In order to ease the notation from now on we denote $|N_u \cup N_v| \; \forall \; (u,v) \in E$ as $|X|_{i} \; \forall \; i=1,..,m$ where $m$ is the number of edges. In the same fashion, we denote $\widehat{|X|}_{i} \; \forall \; i=1,..,m$ as the estimator of $|N_u \cup N_v| \; \forall \; (u,v) \in E$. Thus we can write:

\begin{align}
	& (E[\widehat{TC}_{OR}] - TC)^2 \nonumber\\
	&= \frac{1}{9}\;\left[\sum_{i=1}^{m} E(\widehat{|X|}_{i}) - |X|_{i}\right]^2 \\
	&\leq  \frac{1}{9}\;\left\{\sum_{i=1}^{m} \sum_{j=1}^{m} \left|[E(\widehat{|X|}_{i}) - |X|_{i}] [E(\widehat{|X|}_{j}) - |X|_{j}]\right|\right\}\\
	&\leq  \frac{1}{9}\;\left\{\sum_{i=1}^{m} \sum_{j=1}^{m} \left|[E(\widehat{|X|}_{i}) - |X|_{i}]\right| \left|[E(\widehat{|X|}_{j}) - |X|_{j}]\right|\right\} \label{cauchy_swartz}\\
	&= \frac{1}{9}\;\left\{\sum_{i=1}^{m} \sum_{j=1}^{m} \sqrt{[E(\widehat{|X|}_{i}) - |X|_{i}]^{2}} \sqrt{[E(\widehat{|X|}_{j}) - |X|_{j}]^{2}}\right\} \\
	&\leq \frac{m^2}{9}\; (1+o(1)) \left(e^{2\Delta b / (B_X-1)} \frac{B_X}{b^2} - \frac{B_X}{b^2} - \frac{2\Delta}{b}\right) \label{bias_bound}
\end{align} 

where (\ref{cauchy_swartz}) follows by Cauchy–Schwarz inequality and (\ref{bias_bound}) by Proposition \ref{bound_ss_BF} which in general bounds the mse (and thus the squared bias by Jensen's inequality) if $2 b \Delta \leq 0.499 B_X \log B_X$ where $B_X = min(B_{N_u \cup N_v})$ with $(u,v) \in E$. In particular we underline that any bound obtained by Proposition \ref{bound_ss_BF}, which is valid for a given set size, is also automatically valid for all the set sizes smaller than that \textit{a fortiori}. Thus we can notice that $2\Delta \geq |X|_{i} \; \forall \; i=1,..,m$ where $\Delta$ is the maximum degree of the input graph which justifies (\ref{bias_bound}).
At this point, it remains to bound the second component of the mse which is the variance of our estimator. Indeed we can write:

\begin{align}
	& Var(\widehat{TC}_{OR}) \nonumber\\
	&= \frac{1}{9}\;Var\left[\sum_{i=1}^{m} \widehat{|X|}_{i}\right] \\
	&=  \frac{1}{9}\;\sum_{i=1}^{m}\sum_{j=1}^{m} Cov(\widehat{|X|}_{i},\widehat{|X|}_{j}) \\
	&\leq  \frac{1}{9}\;\sum_{i=1}^{m}\sum_{j=1}^{m} \sqrt{Var(\widehat{|X|}_{i})}\sqrt{Var(\widehat{|X|}_{j})} \label{cov_ineq} \\
	&\leq \frac{m^2}{9}\; (1+o(1)) \left(e^{2\Delta b / (B_X-1)} \frac{B_X}{b^2} - \frac{B_X}{b^2} - \frac{2\Delta}{b}\right) \label{var_bound}
\end{align} 
where (\ref{cov_ineq}) holds because of the covariance inequality and (\ref{var_bound}) by Proposition \ref{bound_ss_BF} which in general bounds the mse (and thus the variance \textit{a fortiori}) if $2 b \Delta \leq 0.499 B_X \log B_X$. The justification of (\ref{var_bound}) is similar to the one outlined before for the squared bias case. 

Now, again assuming $2 b \Delta \leq 0.499 B_X \log B_X$, we can obtain the bound for the mean squared error of the OR estimator:

\small
\begin{align*}
	& E[(TC - \widehat{TC}_{OR})^2] = \\
	& (E[\widehat{TC}_{OR}] - TC)^2 + Var(\widehat{TC}_{OR}) \\
	&\leq \frac{2m^2}{9}\; (1+o(1)) \left(e^{2\Delta b / (B_X-1)} \frac{B_X}{b^2} - \frac{B_X}{b^2} - \frac{2\Delta}{b}\right) 
\end{align*}    
\normalsize
The above bound is valid also for $\widehat{TC}_{AND}$ since $\Delta \geq |N_u \cap N_v| \; \forall \; (u,v) \in E$ however it can be made tighter for the same reason. Indeed if $b \Delta \leq 0.499 B_X \log B_X$ (where now $B_X = min(B_{N_u \cap N_v})$ with $(u,v) \in E$) by Chebychev inequality:

\begin{align*}
	& P\left(\left|TC - \widehat{TC}_{AND}\right| \geq t\right) \leq \\  
	&\frac{2m^2 (1+o(1)) \left(e^{\Delta b / (B_X-1)} \frac{B_X}{b^2} - \frac{B_X}{b^2} - \frac{\Delta}{b}\right)}{9\;t^{2}}
\end{align*}

which is the statement of Theorem \ref{thm:tc_deviations} for the BF case. 
\end{proof}

\vspaceSQ{-0.5em}

\subsubsection{MinHash}

We can show the concentration of the sum of the set intersection estimators using
theorem \ref{thm:mh_bounds} presented in Appendix \ref{sec:app:int-minhash}. Then for the edge $e_i = uv$, we define $C_i = \text{deg}(u) + \text{deg}(v)$ thus giving us $S = \frac{1}{3} \;\sum_{i=1}^n C_i \frac{J}{1+J} = TC$. We will not consider the scaling factor $\frac{1}{3}$ till the final expressions of the bounds to ease the notation. Thus we can write:
\begin{gather*}
	\sum_{i=1}^m C_i = \sum_{i=1, e_i = uv}^m \text{deg}(u) + \text{deg}(v) = \\
	= \sum_{v \in V} \text{deg}(v)^2
\end{gather*}

Combining the above result with theorem \ref{thm:mh_bounds}, we obtain the triangle count bound for MinHash presented in Theorem \ref{thm:tc_deviations}.

However this bound can be improved if we assume more independence,
which will be satisfied in the case of triangle counting when the maximum
degree is not too large. We now prove a tighter bound under these conditions.

\begin{theorem} \label{thm:mh_bounds2}
	Let $Y_1 = \sum_i^n C_i\frac{\hat{J}_1}{1+\hat{J}_1}, Y_k = \sum_i^n C_i\frac{\hat{J}_k}{1+\hat{J}_k}$, and assume we partition the set of estimators into groups $\mathcal{X}_1, \cdots, \mathcal{X}_\chi$ such that estimators from each set are mutually independent. Then for any non-negative constants $C_i$ and $S = \sum_{i=1}^n C_i \frac{J}{1+J}$ 
	{\footnotesize
		\begin{align*}
			P(|Y_1 - S|>t), P(|Y_k - S|>t) &\leq 2 \exp \left( - \frac{k (\max(0,t-2S/k))^2}{2 (\sum_i^\chi \sqrt{\sum_{d \in \mathcal{X}_i} C_d^2})^2}\right) \\
			&\leq 2 \exp \left( - \frac{k (\max(0,t-2S/k))^2}{2 \chi \sum_i^n C_i^2}\right)
		\end{align*}
	}
\end{theorem}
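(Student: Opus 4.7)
\textbf{Proof plan for Theorem~\ref{thm:mh_bounds2}.} My plan is to refine the argument of Theorem~\ref{thm:mh_bounds} by exploiting the partition $\mathcal{X}_1,\dots,\mathcal{X}_\chi$ to squeeze the variance factor, and separately to track the bias of $\frac{\hat J}{1+\hat J}$ as an estimator of $\frac{J}{1+J}$ (which was harmlessly absorbed before but now needs explicit control, producing the $2S/k$ shift). Concretely, I will write $Y_\star - S = \bigl(Y_\star - E[Y_\star]\bigr) + \bigl(E[Y_\star] - S\bigr)$ and bound the two pieces separately. For each group $\mathcal{X}_j$, I set $Z_j = \sum_{d\in\mathcal{X}_j} C_d\bigl(\tfrac{\hat J_d}{1+\hat J_d}-E[\tfrac{\hat J_d}{1+\hat J_d}]\bigr)$, so that $Y_\star - E[Y_\star] = \sum_{j=1}^\chi Z_j$.

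Next I will establish sub-Gaussianity of the individual centered terms. As in Theorem~\ref{thm:mh_bounds}, the map $x\mapsto \tfrac{x}{1+x}$ is $1$-Lipschitz on $[0,1]$, so Lemma~\ref{lem:hash_est_concentration} (via Hoeffding/Serfling) gives that $C_d\bigl(\tfrac{\hat J_d}{1+\hat J_d}-E[\tfrac{\hat J_d}{1+\hat J_d}]\bigr)$ is a centered bounded random variable belonging to $\mathcal{G}\bigl(C_d^2/(4k)\bigr)$. Crucially, within each $\mathcal{X}_j$ the underlying hash samples are mutually independent by assumption, so Lemma~\ref{lem:sum_subgauss} (independent case) yields $Z_j\in\mathcal{G}\bigl(\sum_{d\in\mathcal{X}_j} C_d^2/(4k)\bigr)$, i.e.\ variance factors \emph{add}, not standard deviations. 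Across groups independence may fail, so I apply Lemma~\ref{lem:sum_subgauss} (dependent case) to obtain $\sum_j Z_j \in \mathcal{G}\!\left(\tfrac{1}{4k}\bigl(\sum_{j=1}^\chi\sqrt{\sum_{d\in\mathcal{X}_j}C_d^2}\bigr)^2\right)$. The standard sub-Gaussian tail inequality then gives, for every $u\ge 0$,
\[
P\!\left(|Y_\star - E[Y_\star]|\ge u\right) \le 2\exp\!\left(-\frac{k\,u^2}{2\,\bigl(\sum_{j=1}^\chi\sqrt{\sum_{d\in\mathcal{X}_j}C_d^2}\bigr)^2}\right).
\]

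The main obstacle I expect is controlling $|E[Y_\star]-S|$ so that it contributes exactly the $2S/k$ offset advertised in the statement. Using the identity $\tfrac{J}{1+J}-\tfrac{\hat J}{1+\hat J}=\tfrac{J-\hat J}{(1+J)(1+\hat J)}$ and a second-order Taylor expansion in $\hat J$ around $J$ (or equivalently a direct algebraic bound together with $\operatorname{Var}(\hat J_k)=J(1-J)/k$ and $\operatorname{Var}(\hat J_1)\le J(1-J)/k$), I will show that for each coordinate the bias is nonnegative (by concavity of $x\mapsto x/(1+x)$ and Jensen) and is bounded by $\tfrac{2}{k}\,\tfrac{J_d}{1+J_d}$. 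Summing with weights $C_d$ gives $|E[Y_\star]-S|\le 2S/k$. Setting $u=\max(0,t-2S/k)$ in the tail bound above and using $|Y_\star - S|\ge t\Rightarrow |Y_\star - E[Y_\star]|\ge u$ yields the first displayed inequality of the theorem.

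Finally, the second inequality is a one-line consequence of Cauchy--Schwarz applied to the vector $\bigl(\sqrt{\sum_{d\in\mathcal{X}_j}C_d^2}\bigr)_{j=1}^\chi$ against the all-ones vector, namely $\bigl(\sum_{j=1}^\chi\sqrt{\sum_{d\in\mathcal{X}_j}C_d^2}\bigr)^2\le \chi\sum_{j=1}^\chi\sum_{d\in\mathcal{X}_j}C_d^2=\chi\sum_{d=1}^n C_d^2$. The benefit over Theorem~\ref{thm:mh_bounds} is now transparent: whenever the partition is non-trivial ($\chi<n$) and the $C_d$ are comparable, $\chi\sum_d C_d^2$ can be much smaller than $(\sum_d C_d)^2$, which in the triangle-counting application (bounded maximum degree $\Delta$) will translate into the tighter exponent stated in Theorem~\ref{thm:tc_deviations}.
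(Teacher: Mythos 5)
Your proposal is correct and follows essentially the same route as the paper's proof: center each term $C_d\bigl(\tfrac{\hat J_d}{1+\hat J_d}-E[\tfrac{\hat J_d}{1+\hat J_d}]\bigr)$, add variance factors within each independent group and sub-Gaussian coefficients across groups via Lemma~\ref{lem:sum_subgauss}, bound the total Jensen-gap bias by $2S/k$ (using $\operatorname{Var}(\hat J)\le J/k$ and $J\le \tfrac{2J}{1+J}$), and absorb it by shifting $t$ to $\max(0,t-2S/k)$. The only cosmetic differences are that you obtain the per-coordinate bias bound by a Taylor/Jensen argument where the paper cites a Jensen-gap theorem of Liao and Berg, and you deduce the second inequality by Cauchy--Schwarz where the paper solves a small equalization optimization problem; both substitutions are valid.
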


\begin{proof}
	We modify the proof of Theorem \ref{thm:mh_bounds} by instead considering the random variables
	\begin{gather*}
		\frac{\hat{J_1}}{1+\hat{J_1}} - \frac{J}{1+J} - \mu_1\\
		\frac{\hat{J_k}}{1+\hat{J_k}} - \frac{J}{1+J} - \mu_k
	\end{gather*}
	where $\mu_1$ and $\mu_k$ are chosen so as to make this random variable have mean zero.
	
	We then first sum estimators from each group separately using Lemma \ref{lem:sum_subgauss}, which gives us subgaussian coefficient (i.e. the square root of the variance factor) of $\sqrt{\sum_{d \in \mathcal{X}_i} C_d^2}$. Adding the groups together, we get using again Lemma \ref{lem:sum_subgauss}, that the subgaussian coefficient is $\sigma_\mathcal{X} = \sum_i^\chi \sqrt{\sum_{d \in \mathcal{X}_i} C_d^2}$. 
	To finish the proof of the first inequality, we have to show a bound on $\sum
	C_i \mu_1$ and $\sum C_i \mu_k$. We show the argument for the case of 1-hash,
	the argument for $k$-hash is analogous. Note that $\mu_1$ is the jensen gap of
	$\frac{\hat{J_1}}{1+\hat{J_1}}$. Since $\text{Var}(\hat{J}_1) \leq J/k$, by
	Theorem 1 from \cite{Liao2019}, we have $-J / k \leq
	E[\frac{\hat{J_k}}{1+\hat{J_k}}]- \frac{J}{1+J} \leq 0$. We can bound $-J / k
	\geq 2/k \frac{J}{1+J}$. Therefore, we can bound
	
	\[
	-2S/k \leq \sum C_i \mu_1 \leq 0
	\]
	
	To prove the second inequality, we define the following optimization problem
	\begin{align*}
		\text{maximize} \quad & \sum_{i=1}^n \sqrt{x_n}\\
		\text{subject to} \quad & \sum x_i = c
	\end{align*}
	
	Set $x_i = \sum_{d \in \mathcal{X}_i} C_d^2$ and $c = \sum C_i^2$. We see that
	for every possible assignment of the estimators to the sets
	$\{\mathcal{X}_i\}_{i=1}^\chi$, we have a feasible solution with objective
	value equal to the subgaussian coefficient $\sigma_\mathcal{X}$. Therefore,
	the subgaussian coefficient for any assignment to the groups is dominated by
	the optimum of this optimization problem.
	
	Optimum of this optimization problem is when all $x_i$'s have the same value --
	otherwise one can pick $i,j$ such that $x_i < x_j$ and $0 < \epsilon \leq (x_j
	- x_i)/2$ and then replace $x_i$ by $x_i+\epsilon$ and similarly $x_j$ by
	$x_j-\epsilon$, increasing the objective while retaining feasibility. This
	gives us objective value of $\chi \sqrt{\sum_{i=1}^n C_i^2 / \chi} = \sqrt{\chi
		\sum_{i}^n C_i^2}$
\end{proof} 

To show the final expression of the bound, we use \Cref{thm:mh_bounds2}. Then by Vizing's theorem, $\chi \leq \Delta +1$ and by the same substitution done for the first bound, we have:
\begin{gather*}
	\sum_{i=1}^m C_i^2 = \sum_{i=1, e_i = uv}^m (\text{deg}(u) + \text{deg}(v))^2 \leq \\
	\leq \sum_{i=1, e_i = uv}^m 2(\text{deg}(u)^2 + \text{deg}(v)^2) = 2 \sum_{v \in V} \text{deg}(v)^3
\end{gather*}

Indeed combining the above result with \Cref{thm:mh_bounds2}, we obtain the triangle count bound for MinHash presented in Theorem \ref{thm:tc_deviations} if the maximum degree is $\Delta$.

\subsection{KMV Sketches}\label{sec:app:kmv}

\subsubsection{Single Sets}

\label{sec:kmv_ss}

We state an existing result on the KMV sketching; we use it later to
provide a KMV sketch for $|X \cap Y|$~\cite{david_nagaraja_2003}.
The hash function used with a KMV maps elements from $X$ to real numbers in
$(0,1]$ u.a.r.\footnote{uniformly at random}. Thus, the hashes should be evenly
spaced and one can estimate $|X|$ by dividing the size $k-1$ of $K_X$ by the largest
hash in $K_X$.

\vspaceSQ{-0.5em}
\begin{equation}\label{eq:est_ss_kmv}
	\widehat{|X|}_K = \frac{k-1}{\max \{x | x \in K_X\}}    
\end{equation}{}
\vspaceSQ{-0.5em}

As noted in \cite[§2.1]{david_nagaraja_2003}, the $k$-th smallest value follows
the beta distribution $Beta(\alpha,\beta)$ with shape parameters $\alpha = k$ and $\beta=|X|-k+1$. Now we can get concentration bounds for the estimator: indeed, following~\cite{on_synopses}, we can show that:

\begin{prop}\label{bound_ss_KMV}
	Consider~$\widehat{|X|}_K$ in Eq.~(\ref{eq:est_ss_kmv}), then the probability
	of deviation from the true set size, at a given distance $t \geq 0$, is
	
	\vspaceSQ{-0.5em}
	{
		\ifsq\footnotesize\fi
		\begin{eqnarray*}
			P\left(\left|\widehat{|X|} - |X|\right| \leq t\right) &=& I_{u(|X|,k,
				t/|X|)}(k, |X|-k+1) - \\ 
			& & I_{l(|X|,k,t/|X|)}(k, |X|-k+1)
	\end{eqnarray*}}
	\vspaceSQ{-0.5em}
	
	\noindent
	where $u(|X|,k,t/|X|) = \frac{k-1}{|X|-t}$ and $l(|X|,k,t/|X|) =
	\frac{k-1}{|X|+t}$ and $I_x(a,b)$ is the regularized incomplete beta
	function. \end{prop}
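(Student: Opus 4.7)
The plan is to rewrite the deviation event $\{|\widehat{|X|}_K - |X|| \le t\}$ directly in terms of $M := \max\{x : x \in K_X\}$, which by assumption is distributed as $Beta(k,|X|-k+1)$, and then read the probability off the beta CDF. Since $\widehat{|X|}_K = (k-1)/M$ with $M \in (0,1]$ and $k-1>0$, the inequality $|\widehat{|X|}_K - |X|| \le t$ (for $0<t<|X|$) is equivalent to
$$|X| - t \;\le\; \frac{k-1}{M} \;\le\; |X| + t,$$
and inverting the positive fractions gives the monotone-equivalent event
$$\frac{k-1}{|X|+t} \;\le\; M \;\le\; \frac{k-1}{|X|-t}.$$
The two endpoints are exactly $l(|X|,k,t/|X|)$ and $u(|X|,k,t/|X|)$ as defined in the statement.

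Next, I would invoke the distributional fact cited from~\cite[§2.1]{david_nagaraja_2003}: because the hash function maps $X$ uniformly into $(0,1]$, the $k$-th smallest of the $|X|$ hashes (i.e.\ $M$) follows $Beta(k,|X|-k+1)$. The regularized incomplete beta function $I_x(\alpha,\beta)$ is by definition the CDF of the $Beta(\alpha,\beta)$ distribution evaluated at $x$, so $P(M\le x)=I_x(k,|X|-k+1)$. Applying this to the event derived above yields
$$P\!\left(l \le M \le u\right) \;=\; I_u(k,|X|-k+1) - I_l(k,|X|-k+1),$$
which is the claimed identity.

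The only subtlety, and the step I would be most careful about, is checking that the transformation $M\mapsto (k-1)/M$ really is a monotone bijection on the relevant range so that interval inversion is valid; this requires $|X|-t>0$. The boundary case $t\ge |X|$ must be handled separately by replacing $u$ with $1$ (so that the upper constraint on $M$ becomes vacuous and contributes $I_1(k,|X|-k+1)=1$). Apart from this edge case, the derivation is a mechanical change of variable, so no concentration inequality or tail bound is needed—the proof reduces to identifying the correct order-statistic CDF.
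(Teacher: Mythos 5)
Your proof is correct and follows essentially the same route the paper takes (the paper only sketches it by citing that the $k$-th smallest hash is $Beta(k,|X|-k+1)$-distributed and reading the probability off the regularized incomplete beta function as its CDF); your event inversion $|X|-t \le (k-1)/M \le |X|+t \iff l \le M \le u$ and the continuity of the beta law, which makes $P(l\le M\le u)=I_u-I_l$ exact, are precisely the details being elided, and your caveat about requiring $|X|-t>0$ is a sensible addition.
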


In the case of a KMV estimator bound, we can evaluate:

\ifsq\small\fi
$$I_x(k, |X|-k+1) = \sum_{i=k}^{|X|} {|X| \choose i} x^i (1-x)^{|X| - i}$$
\normalsize

\subsubsection{Set Intersection $|X \cap Y|$}

Given $\mathcal{K}_X$ and $\mathcal{K}_Y$ of size $k_X$ and $k_Y$, one can
construct a KMV $\mathcal{K}_{X \cup Y}$ by taking the $k = \min\{k_X, k_Y\}$
smallest elements from $K_X \cup K_Y$.  $\widehat{|X \cup Y|}_K$,
$\widehat{|X|}_K$ and $\widehat{|Y|}_K$ can be computed using the following equations
(note that the second one uses the exact sizes of $X,Y$ instead of their estimators).

\vspaceSQ{-0.5em}
\begin{equation}\label{eq:est_inter_set_kmv}
	\widehat{|X \cap Y|}_K = \widehat{|X|}_K + \widehat{|Y|}_K - \widehat{|X \cup Y|}_K
\end{equation}{}
\vspaceSQ{-1em}
\begin{equation}\label{eq:est_inter_set_kmv_exactsets}
	\widehat{|X \cap Y|}_K = |X| + |Y| - \widehat{|X \cup Y|}_K
\end{equation}{}
\vspaceSQ{-0.5em}

We present now a simple upper bound (using the union bound) on the probability that $\widehat{|X \cap Y|}_K$ deviates by more than $t$ from the true value. 

\vspaceSQ{-0.5em}
\begin{prop}
	Let $\widehat{|X \cap Y|}_K$ be the estimator defined in (\ref{eq:est_inter_set_kmv}), then the following upper bound for the probability of deviation from the true intersection set size, at a given distance $t \geq 0$, holds:
	
	\vspaceSQ{-0.5em}
	\small
	\begin{equation*}
		\begin{aligned}
			P\left(|\widehat{|X \cap Y|}_K - |X \cap Y|| \geq t \right) \leq P(|\widehat{|X|}_K - |X|| \geq t/3) +\\
			P(|\widehat{|Y|}_K - |Y|| \geq t/3) + P(|\widehat{|X\cup Y|}_K - |X\cup Y|| \geq t/3)
		\end{aligned}
	\end{equation*}
	\vspaceSQ{-0.5em}
	
	\noindent
	where the probabilities on the right can be evaluated with Proposition~\ref{bound_ss_KMV}.
\end{prop}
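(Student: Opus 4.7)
The plan is to combine the linear decomposition $|X \cap Y| = |X| + |Y| - |X \cup Y|$ (and the analogous identity for the KMV estimator in Eq.~(\ref{eq:est_inter_set_kmv})) with the triangle inequality and a union bound over the three error terms. This is a standard ``splitting'' argument, so I do not expect any real obstacle beyond bookkeeping.

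First, I would subtract the two identities to get the clean error decomposition
\[
\widehat{|X \cap Y|}_K - |X \cap Y| = \bigl(\widehat{|X|}_K - |X|\bigr) + \bigl(\widehat{|Y|}_K - |Y|\bigr) - \bigl(\widehat{|X \cup Y|}_K - |X \cup Y|\bigr).
\]
Applying the triangle inequality to the right-hand side, the absolute error on the left is bounded by the sum of the three individual absolute errors. Next, I would observe that if this sum is at least $t$, then by the pigeonhole principle at least one of the three summands must be at least $t/3$; equivalently, the event $\{|\widehat{|X \cap Y|}_K - |X \cap Y|| \geq t\}$ is contained in the union of the three events $\{|\widehat{|X|}_K - |X|| \geq t/3\}$, $\{|\widehat{|Y|}_K - |Y|| \geq t/3\}$, and $\{|\widehat{|X \cup Y|}_K - |X \cup Y|| \geq t/3\}$.

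Finally, taking probabilities and applying Boole's inequality (the union bound) over these three events yields exactly the claimed inequality. No independence assumption is needed since the union bound is valid for arbitrary events, and the three marginal probabilities on the right-hand side can then be controlled individually using Proposition~\ref{bound_ss_KMV} (noting that $\widehat{|X \cup Y|}_K$ is itself a KMV single-set estimator of the union). The only minor subtlety is making sure we use the variant of $\widehat{|X \cap Y|}_K$ in Eq.~(\ref{eq:est_inter_set_kmv}) that uses estimators for all three cardinalities; if one instead uses Eq.~(\ref{eq:est_inter_set_kmv_exactsets}), the bound collapses to a single term involving only $\widehat{|X \cup Y|}_K$, but that case is strictly easier.
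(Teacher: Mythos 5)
Your proposal is correct and is exactly the argument the paper intends: decompose the error via the inclusion--exclusion identity, apply the triangle inequality, note that if the sum of three nonnegative terms is at least $t$ then one of them is at least $t/3$, and finish with the union bound over the three deviation events. The paper itself only remarks that the bound follows ``using the union bound,'' so your write-up supplies the same (routine) reasoning in full.
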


Yet, if we know the exact size of $X$ and $Y$ (a reasonable assumption for
graph algorithms as the degrees can be easily precomputed), we can get a
considerably better bound.

\begin{prop}
	Let $\widehat{|X \cap Y|}_K$ be the estimator from (\ref{eq:est_inter_set_kmv_exactsets}), then
	
	{\small
		\vspaceSQ{-0.5em}
		\begin{align*}
			& P\left(|\widehat{|X \cap Y|}_K - |X \cap Y|| \geq t\right) = \\
			&I_{u(|X\cup Y|,k,t/|X\cup Y|)}(k, |X\cup Y|-k+1)\\ 
			& - I_{l(|X\cup Y|,k,t/|X\cup Y|)}(k, |X\cup Y|-k+1)
		\end{align*}
		\vspaceSQ{-0.5em}
	}
	
\end{prop}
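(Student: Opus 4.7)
The plan is to reduce the statement to a direct application of Proposition~\ref{bound_ss_KMV} applied to the set $X \cup Y$. The key observation is that the estimator in Eq.~(\ref{eq:est_inter_set_kmv_exactsets}) uses the \emph{exact} values of $|X|$ and $|Y|$, so the only source of randomness in $\widehat{|X \cap Y|}_K$ is the KMV-based estimator of $|X \cup Y|$. Consequently, the deviation of $\widehat{|X \cap Y|}_K$ from $|X \cap Y|$ equals, in absolute value, the deviation of $\widehat{|X \cup Y|}_K$ from $|X \cup Y|$.

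More precisely, first I would write
\begin{align*}
\widehat{|X \cap Y|}_K - |X \cap Y| &= \bigl(|X| + |Y| - \widehat{|X \cup Y|}_K\bigr) \\
  &\quad - \bigl(|X| + |Y| - |X \cup Y|\bigr) \\
  &= |X \cup Y| - \widehat{|X \cup Y|}_K,
\end{align*}
so that $|\widehat{|X \cap Y|}_K - |X \cap Y|| = |\widehat{|X \cup Y|}_K - |X \cup Y||$. This identity is exact and deterministic, and it completely collapses the problem to bounding the deviation of a single-set KMV estimator on the set $X \cup Y$.

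Next, I would verify the prerequisite that $\widehat{|X \cup Y|}_K$, as constructed in \cref{sec:app:kmv} by taking the $k = \min(k_X,k_Y)$ smallest elements of $K_X \cup K_Y$, is distributionally identical to the KMV sketch of size $k$ that one would obtain by directly hashing every element of $X \cup Y$ with the same hash function. This follows because (i) both $\mathcal{K}_X$ and $\mathcal{K}_Y$ contain the $k$ smallest hash values from their respective sets, so every element of $X \cup Y$ whose hash is among the $k$ smallest of the whole union must already appear in $K_X \cup K_Y$, and (ii) duplicates introduced by elements in $X \cap Y$ are harmless since they produce identical hash values. Thus $\max\{x \in K_{X \cup Y}\}$ has exactly the same $\mathrm{Beta}(k, |X \cup Y|-k+1)$ distribution as the analogous quantity in Section~\ref{sec:kmv_ss} applied to $X \cup Y$.

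Finally, I would invoke Proposition~\ref{bound_ss_KMV} with the set taken to be $X \cup Y$ (replacing $|X|$ by $|X \cup Y|$ in the expressions for $u$ and $l$), and substitute using the identity from the first step to obtain the stated formula. The main obstacle is the distributional identification in the previous paragraph: the union step must be argued carefully to ensure that no hash-collision pathology or size mismatch (e.g., $k_X \neq k_Y$) breaks the reduction. Once this is established, the rest is pure bookkeeping using the CDF of the Beta distribution, yielding $I_{u(|X\cup Y|,k,t/|X\cup Y|)}(k, |X\cup Y|-k+1) - I_{l(|X\cup Y|,k,t/|X\cup Y|)}(k, |X\cup Y|-k+1)$ as claimed.
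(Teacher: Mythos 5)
Your proposal is correct and follows essentially the same route as the paper: the paper's proof is exactly the observation that, with $|X|$ and $|Y|$ exact, the identity $|X\cap Y| = |X| + |Y| - |X\cup Y|$ reduces the deviation of $\widehat{|X \cap Y|}_K$ to that of $\widehat{|X \cup Y|}_K$, followed by an application of Proposition~\ref{bound_ss_KMV} to the set $X \cup Y$. Your extra paragraph verifying that the merged sketch built from the $k=\min\{k_X,k_Y\}$ smallest hashes of $K_X \cup K_Y$ is distributionally the KMV of $X \cup Y$ is a detail the paper leaves implicit, and it is a reasonable addition rather than a divergence.
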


The bound presented above is a simple application of the identity $|X\cap Y| = |X| + |Y| - |X\cup Y|$ and Proposition \ref{bound_ss_KMV} on the estimator of $|X \cup Y|$.
\fi

\end{document}